\newtheorem{defn}{Definition}[section]
\newtheorem{thm}{Theorem}[section]
\newtheorem{cor}[thm]{Corollary}
\newtheorem{prop}{Proposition}
\newtheorem{lem}[thm]{Lemma}
\newtheorem{conj}[thm]{Conjecture}
\newtheorem{constr}[thm]{Construction}
\newtheorem{remark}{Remark}[section]
\newtheorem{example}{Example}[section]
\newcounter{definition}[section]
\newcommand{\bit}{\begin{itemize}}
	\newcommand{\eit}{\end{itemize}}
\newcommand{\bcor}{\begin{cor}}
	\newcommand{\ecor}{\end{cor}}
\newcommand{\beq}{\begin{equation}}
\newcommand{\eeq}{\end{equation}}
\newcommand{\beqn}{\begin{equation*}}
\newcommand{\eeqn}{\end{equation*}}
\newcommand{\bea}{\begin{eqnarray}}
\newcommand{\eea}{\end{eqnarray}}
\newcommand{\bean}{\begin{eqnarray*}}
	\newcommand{\eean}{\end{eqnarray*}}
\newcommand{\ben}{\begin{enumerate}}
	\newcommand{\een}{\end{enumerate}}
\newcommand{\bdefn}{\begin{defn}}
	\newcommand{\edefn}{\end{defn}}
\newcommand{\bnote}{\begin{remark}}
	\newcommand{\enote}{\end{remark}}
\newcommand{\bprop}{\begin{prop}}
	\newcommand{\eprop}{\end{prop}}
\newcommand{\blem}{\begin{lem}}
	\newcommand{\elem}{\end{lem}}
\newcommand{\bthm}{\begin{thm}}
	\newcommand{\ethm}{\end{thm}}
\newcommand{\bconj}{\begin{conj}}
	\newcommand{\econj}{\end{conj}}
\newcommand{\bconstr}{\begin{constr}}
	\newcommand{\econstr}{\end{constr}}
\newcommand{\bpf}{\begin{proof}}
	\newcommand{\epf}{\end{proof}}
	\newcommand{\cstr}{\mbox{$\mathcal{C}_{\text{str}}$}}
\newcommand{\fq}{\mbox{$\mathbb{F}_q$}}
\newcommand{\calc}{\mbox{$\mathcal{C}$}}
\newcommand{\calp}{\mbox{$\mathcal{P}$}}
\newcommand{\calpc}{\mbox{$\mathcal{P}^c$}}
\newcommand{\calec}{\mbox{$\mathcal{E}^c$}}
\newcommand{\hrowl}{\mbox{$\underline{h}_{\text{\tiny row},\ell}$}}
\newcommand{\vij}{\mbox{$\{v_{i,j}\}$}}
\newcommand{\params}{\mbox{$\{a,b,\tau\}$}} 
\newcommand{\pl}{\mbox{$P_\ell$}}
\newcommand{\plset}{\mbox{$\{P_\ell\}$}} 
\newcommand{\rl}{\mbox{$R_\ell$}}
\newcommand{\hsupl}{\mbox{$H^{(\ell)}$}}
\newcommand{\al}{\mbox{$A_\ell$}}
\newcommand{\fqs}{\mbox{$\mathbb{F}_{q^2}$}}
\newcommand{\ul}{\mbox{$U_\ell$}}
\newcommand{\ml}{\mbox{$M_\ell$}}
\newcommand{\bl}{\mbox{$B_\ell$}}
\newcommand{\gmds}{\mbox{$G_\text{\tiny MDS}$}}
\newcommand{\cmds}{\mbox{$\mathcal{C}_\text{\tiny MDS}$}}
\begin{document}
	
	\title{Low Field-size, Rate-Optimal Streaming Codes for Channels With Burst and Random Erasures}
		\author{\IEEEauthorblockN{M. Nikhil Krishnan, Deeptanshu Shukla and P. Vijay Kumar, {\em Fellow, IEEE} }\\
		\IEEEauthorblockA{Electrical Communication Engineering, Indian Institute of Science, Bangalore - 560012 \\
			email: \{nikhilkrishnan.m, deeptanshukla, pvk1729\}@gmail.com}
		\thanks{P. Vijay Kumar is also a Visiting Professor at the University of Southern California. This research is supported in part by the National Science Foundation under Grant 1421848 and in part by an India-Israel UGC-ISF joint research program grant. M. Nikhil Krishnan would like to acknowledge the support of Visvesvaraya PhD Scheme for Electronics \& IT awarded by Department of Electronics and Information Technology, Government of India. Construction A presented in this paper is submitted in part to 2019 IEEE Int. Symp. Inf. Theory (ISIT) for possible publication. The remaining three constructions presented here are new.}}
	\maketitle
	\maketitle
	
	\begin{abstract}
	In this paper, we design erasure-correcting codes for channels with burst and random erasures, when a strict decoding delay constraint is in place. We consider the sliding-window-based packet erasure model proposed by Badr et al., where any time-window of width $w$ contains either up to $a$ random erasures or an erasure burst of length at most $b$. One needs to recover any erased packet, where erasures are as per the channel model, with a strict decoding delay deadline of $\tau$ time slots. Presently existing rate-optimal constructions in the literature require, in general, a field-size which grows exponential in $\tau$, for a constant $\frac{a}{\tau}$. In this work, we present a new rate-optimal code construction covering all channel and delay parameters, which requires an $O(\tau^2)$ field-size. As a special case, when $(b-a)=1$, we have a field-size linear in $\tau$. We also present three other constructions having linear field-size, under certain constraints on channel and decoding delay parameters. As a corollary, we obtain low field-size, rate-optimal convolutional codes for any given column distance and column span. Simulations indicate that the newly proposed streaming code constructions offer lower packet-loss probabilities compared to existing schemes, for selected instances of Gilbert-Elliott and Fritchman channels.
	\end{abstract}
	\section{Introduction}
	
	Reliable communication at low-latency often comes up as an important requirement in the design of next-generation communication systems, including 5G, augmented reality and IoT. Low latency is particularly crucial for real-time multimedia applications, autonomous navigation and V2X (vehicle-to-everything) communications, `working and playing' in the cloud, automation and remote management, tele-medicine and several other mission-critical scenarios \cite{series2015imt}. A recent study \cite{cisco} estimates that IP video traffic, a single use case of low-latency communication, will account for 82 percent of all consumer Internet traffic by 2021, up from 73 percent in 2016. The challenge of enabling delay-constrained communication is further exacerbated by issues arising out of noise, interference, fading, routing, mobility and reliability. In order to ensure robust performance under such a wide range of operating conditions, networks provide for error detection, concealment and correction schemes at multiple layers. These error control strategies can be classified under two broad heads;  re-transmission strategies, like Automatic Repeat Request (ARQ) protocols, and channel coding or forward error correction (FEC). Choosing one of these error control strategies or a suitable hybrid of both of them, is a critical design decision for any communication system. 
	
	\subsection{ARQ vs. FEC}
	Re-transmission based strategies, in general, add lower amount of redundancy compared to FEC, but incur an additional round-trip delay per re-transmission. This might be acceptable for error control on a per hop basis, as in the link layer, but can significantly exceed latency requirements for long-distance communication. Re-transmission also leads to more complicated protocols as the transmitter needs an acknowledgment from the receiver. If the message is received but its acknowledgment is lost, the sender will have to re-transmit, wasting time and bandwidth. Re-transmission based error control is also not amenable to multicasting, a common data streaming scenario. Each client may miss different packets and re-transmitting all of them may lead to a feedback implosion.
	
	On the other hand, FEC is a more natural fit for low-latency applications. It incurs no round-trip delays, no acknowledgment issues and no feedback implosion during multicasting.  
	Even in re-transmission based schemes like TCP, it is shown in \cite{fec_tl,lt} that introducing FEC can lead to performance gains. 
	But these advantages of FEC come at the cost of injecting redundancy. Hence, the channel model and FEC parameters must be carefully calibrated to achieve optimal latency-redundancy tradeoff. 
	
	\subsection{Models for Handling Burst and Random Erasures}
	In end-to-end layers of the network, error control is mostly in the form of integrity checks such as checksums. These error detection features help the receiver infer if a packet has been received without any error. This can be naturally modeled by an erasure channel. This model also incorporates packet drops due to other factors such as congestion, mis-routing and buffer overflows.
	
	Coding-theoretic literature on erasure channels has focused on either random isolated erasures, such as the binary erasure channel, or on burst erasures. However, measurements on real-world systems \cite{seeme} indicate that erasures occur as isolated entities as well as in bursts. One means of modeling them is by using probabilistic channel models like Gilbert-Elliott and Fritchman Channels. However, such models are hard to analyze and even closed-form expressions for their capacities are not known. Thus, there is need for models rich enough to capture both isolated and burst erasures but simple enough to be tractable.

	
	A second important consideration is whether to inject redundancy by introducing more packets per unit time (bandwidth expansion) or by increasing the packet size by adding redundancy within the packets (symbol expansion). In \cite{MartSunTIT04}, the authors argue in favor of symbol expansion, as burst erasures often occur due to congestion in a network. Introducing more packets under such circumstances may lead to a congestion cycle \cite{seeme} and degrade performance. Introducing new packets may also increase channel contention overhead \cite{SPMag}. Hence symbol expansion is often the preferred option. This leads to the question of what parities have to be added, i.e., what error-correcting code to be used. In response to these requirements, a new class of codes dedicated to transmitting packets over erasure channels under stringent decoding-delay constraints, named \textit{streaming codes}, has emerged in recent years.

	\subsection{A Brief History of Streaming Codes}
		While burst erasure correction has been studied for a very long time (for instance, see \cite{forney_burst,massey1965burst,lincostello,BurstOld1,firecode,BurstOld2,BurstErasureLDPC}), the problem of burst erasure correction under decoding delay constraints is relatively new and was first studied in \cite{MartSunTIT04}. Prior to this systematic study, off-the-shelf codes like Reed-Solomon combined with heuristics like interleaving, mean burst loss length (MBL) and mean inter-loss distance (MILD) were employed for combating burst erasures in latency-critical applications \cite{fec_wireless}. In their model,	Martinian and Sundberg \cite{MartSunTIT04} consider a channel which can introduce a burst erasure of length at most $b$. They incorporate latency-criticality in the model as a decoding delay constraint of $\tau$ packets, i.e., a packet transmitted at time $t$ must be recovered at the decoder by time $(t+\tau)$. The authors derive an upper bound on the rate of codes that can tolerate an erasure burst with delay at most $\tau$ and also obtain a family of rate-optimal codes for a wide range of parameters. The paper \cite{MartTrotISIT07} provides a code construction which achieves the rate upper bound in \cite{MartSunTIT04} for all parameters $\{b,\tau\}$. The authors of \cite{MartTrotISIT07} also introduce a {\it diagonal embedding technique} to design streaming codes using block codes as building blocks. In \cite{BadrPatilKhistiTIT17}, a richer sliding-window-based erasure channel is proposed and analyzed, wherein  any sliding-window of size $w$ can have either up to $a$ random erasures or an erasure burst of length at most $b$ (see Section \ref{sec:setting} for a detailed explanation). The authors of \cite{BadrPatilKhistiTIT17} also derive an upper bound on the rate of streaming codes which can tolerate all the erasure patterns of the sliding-window channel model, with a delay of at most $\tau$. The works \cite{FongKhisti,NikPVK} provide the first-known streaming codes that achieve the rate upper-bound in \cite{BadrPatilKhistiTIT17}, for all feasible parameters. However except for  a small range of parameters, the field-size requirements here are large; $>2({\tau+1 \choose a}+\tau-b+2)$ in \cite{FongKhisti} and $\sim(b-a)^{\tau+1}$ in \cite{NikPVK}. In Table ~\ref{tab:constructions}, we provide a summary of streaming code constructions  existing in the literature for these burst and sliding-window-based erasure channel models (including constructions from the present paper). Streaming codes have also been constructed for channels with unequal source-channel inter-arrival rates \cite{BadrPatilKhistiTIT17}, multiplicative-matrix channels \cite{robin} and  multiplexed communication scenarios with different decoding delays for different streams \cite{KhistiMultiplex}. In \cite{RudowRashmi18}, the authors consider a setting for variable-size arrivals. Several other models for delay-constrained communication have been proposed and analyzed in works such as  \cite{HoLeong,HoQureshiLeong,HoJaggi}. A comprehensive survey on streaming codes can be found in \cite{SPMag}.

	\begin{table}[t]
	\centering
	\begin{tabular}{||l|c|c|c|c|c||}
		\hhline{#======#} 
		
		Streaming Code & Channel Model & Rate  & Field Size & Rate-Optimal? & Explicit?\\
		\hhline{#=|=|=|=|=|=#}
		Maximally Short (MS) Codes\cite{MartSunTIT04} & Burst & $\frac{ms+1}{ms+s+1}$\footnotemark & $O(b)$ & For $ \frac{\tau}{b} = \frac{ms+1}{s}$ & Yes \\\hline 
		Delay-Optimal Burst Erasure Codes\cite{MartTrotISIT07} & Burst & $\frac{\tau}{\tau+b}$ & $O(b)$ & Yes & Yes\\ 
		\hhline{#=|=|=|=|=|=#}
		MiDAS-m-MDS Codes \cite{BadrPatilKhistiTIT17} & Sliding-Window & $\frac{\tau-a}{\tau-a+b}$ & $\exp(\tau)$ & Near-optimal & Yes\\\hline
		MiDAS-interleaved  Codes\cite{BadrPatilKhistiTIT17} & Sliding-Window & $\frac{\tau-a}{\tau-a+b}$ & $O(\tau^3)$ & Near-optimal & Yes\\\hline
		Embedded-Random Linear Codes \cite{BadrRateHalfINFOCOM13}  & Sliding-Window & $\frac{\tau-a+1}{\tau+b-a+1}$ & $\exp(\tau)$ & For $\frac{\tau-a+1}{\tau+b-a+1}=\frac{1}{2}$ & No\\\hline
		
		Random Convolutional Codes \cite{FongKhisti} & Sliding-Window & \textquotedbl & $\sim2{\tau+1 \choose a}$ & Yes & No\\\hline
		Construction A \cite{NikPVK} & Sliding-Window  & \textquotedbl & $O(\tau^2)$ & For $\tau \mod b \ge (b-a)$ or $b\mid \tau$ & Yes\\ \hline
		Construction B \cite{NikPVK} & Sliding-Window  & \textquotedbl & $\sim(b-a)^{\tau+1}$ & Yes & Yes\\ 
		\hhline{#=|=|=|=|=|=#}
		Construction A  (present paper)& Sliding-Window & \textquotedbl & $O(\tau^2)$ & Yes & No\\\hline
		Construction A  (present paper)& Sliding-Window & \textquotedbl & $O(\tau)$ & For $(b-a)=1$ & No\\\hline
		Construction B (present paper)& Sliding-Window & \textquotedbl & $O(\tau)$ & For $(\tau+a+1) \ge 2b \ge 4a$ & Yes\\\hline
		Construction C (present paper)& Sliding-Window & \textquotedbl & $O(\frac{a}{b}\tau)$ & For $a\mid b\mid (\tau+1+b-a)$ & Yes\\ \hline
		Construction D (present paper)& Sliding-Window & \textquotedbl & $O(\tau)$ & For $b = 2a-1; \;\tau=a+ \gamma b-2, \gamma \in \mathbb{N}$ & Yes\\ 
		\hhline{#=|=|=|=|=|=#}
	\end{tabular}
	\caption{Streaming code constructions for burst/sliding-window-based erasure channel models and their operating regimes.}
	\label{tab:constructions}
\end{table}

\addtocounter{footnote}{-1} 
\stepcounter{footnote}
\footnotetext{$m \ge 0$ and $s\ge 1$ are integer parameters.}

	\subsection{Contributions of the Present Paper}
 We employ the diagonal embedding technique introduced in \cite{MartTrotISIT07} to reduce the problem of designing streaming codes to that of constructing linear block codes with certain properties. We then translate these properties as some requirements on the parity-check (p-c) matrix (which is of size $b\times (\tau+b-a+1)$) of the block code to be used for diagonal embedding. We provide four different families of code (p-c matrix) constructions, which will be referred to as Constructions A, B, C and D.  All these block codes when used in conjunction with diagonal embedding, will result in rate-optimal streaming codes for the sliding-window-based erasure channel model. 
 
 Construction A works for all parameters $\{a,b,\tau,w\}$. The p-c matrix that we obtain in Construction A is not completely explicit. There are $(\tau+1-b)(b-a)$ entries of the p-c matrix which are not explicitly specified. An application of Combinatorial Nullstellensatz \cite{AlonCombNul99} guarantees that there exist an assignment of values to these entries so that the resultant p-c matrix satisfies the required properties. These entries however, can be easily determined via a greedy algorithm \cite[Algorithm 1]{KoetterMedardNetworkCoding}. The remaining three constructions; Constructions B, C, D are explicit and cover a wide range of parameters. In terms of the field-size requirement, Construction A requires an $O(\tau^2)$ field-size, whereas the other three need a linear field-size. This is in contrast to the field-size requirements of currently existing rate-optimal streaming code constructions for the sliding-window based channel model, which grow in general, exponential in $\tau$, once we fix the ratio $\frac{a}{\tau}$. In Fig. \ref{fig:3d_plot}, we show all the valid parameters for the four constructions, when $\tau\leq 20$.

The rest of this paper is organized as follows; basic notation and some preliminary results regarding MDS codes are given in Section \ref{sec:MDS}. Section \ref{sec:setting} describes the coding theoretic framework employed in this paper, the channel model and the technique of diagonal embedding, which reduces the design of streaming codes to that of block codes satisfying some specific conditions. In Sections \ref{sec:construction_A}--\ref{sec:construction_D}, we present the four code constructions. In Section \ref{sec:conv_codes_col_distance_col_span}, by invoking results from \cite{BadrPatilKhistiTIT17}, we discuss how the new code constructions imply the existence of rate-optimal convolutional codes for given column distance ($d_{\tau}$) and column span ($c_{\tau}$), which require a lower field-size, when compared to other rate-optimal convolutional code constructions in the literature. Section \ref{sec:simulations} presents simulation results which indicate that the new streaming code constructions outperform existing streaming code constructions for some instances of GE and Fritchman channels.

\begin{figure}[!htb]
	\centering
	\captionsetup{justification=centering}
	\includegraphics[scale=0.75]{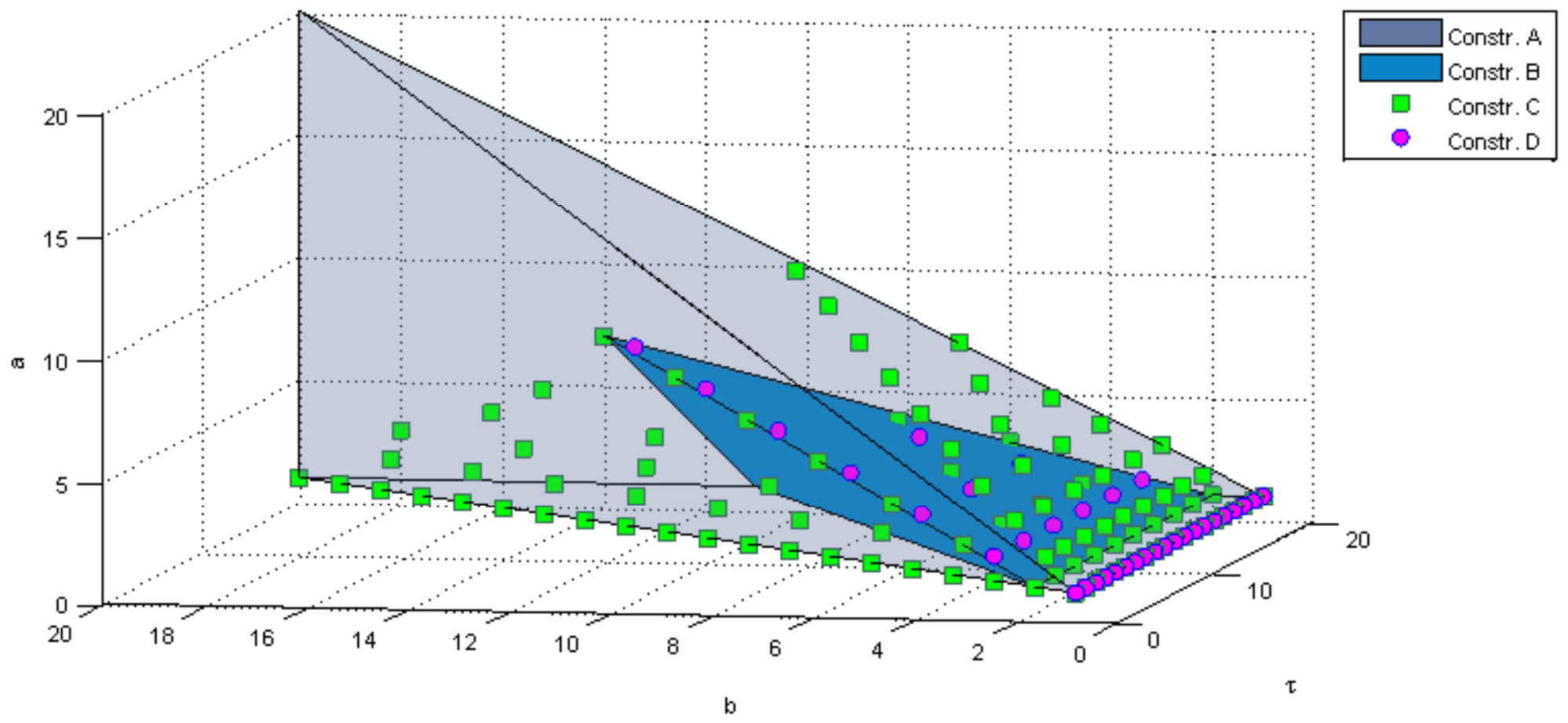}
	\caption{Let $w=(\tau+1)$. In the figure, we show all the valid parameter sets $\{a,b,\tau\}$ for Constructions A, B, C and D, where $a\leq b\leq \tau$, $\tau\leq 20$. Construction A covers the entire parameter range.}
	\label{fig:3d_plot}
\end{figure}	
	
	
	\section{Punctured and Shortened Subcodes of an MDS Code}\label{sec:MDS}
	
	\subsection{Notation}
	
	For $m,n\in \mathbb{Z}$, let $[n]\triangleq \{i: 1\leq i\leq n\}$ and $[m:n]\triangleq\{i:m\leq i\leq n\}$. The $(n\times n)$ identity matrix will be denoted by $I_n$. For a row vector $\underline{v}=[v_0\ v_1\ \ldots\ v_{n-1}]\in\mathbb{F}_q^n$, $\text{supp}(\underline{v})\triangleq \{i: v_i\neq 0\}$. We write $m\mid n$ if $m$ divides $n$. Let $A\in \mathbb{F}_q^{m\times n}$, $\mathcal{I}\subseteq [0:m-1]$, $\mathcal{J}\subseteq [0:n-1]$. By $A(\mathcal{I},\mathcal{J})$ we mean the $(|\mathcal{I}|\times |\mathcal{J}|)$ submatrix of $A$ obtained by selecting the rows indexed by $\mathcal{I}$ and columns indexed by $\mathcal{J}$. For $i\in [0:m-1], j\in [0:n-1]$, $A(i,:)$ and $A(:,j)$ will denote the $i$-th row and $j$-th column, respectively.  Similarly, $A(\mathcal{I},:)$ and $A(:,\mathcal{J})$ will denote the sub-matrices of $A$ obtained by selecting the rows in $\mathcal{I}$ and columns in $\mathcal{J}$, respectively.  We will often use the alternate notation, $\underline{a}_{\text{\tiny row},i}$ and $\underline{a}_i$ to denote the $i$-th row and $i$-th column of a matrix $A$, respectively. 
	If ${\cal A} \subseteq [0:n-1]$, then we will use ${\cal A}^c$ to denote the complement of ${\cal A}$ in $[0:n-1]$ given by 
	\bean
	{\cal A}^c & = & [0:n-1] \setminus {\cal A}. 
	\eean
	An $m\times n$ matrix $A$ is said to be {\it Cauchy-like}, if every square submatrix of $A$ is non-singular. The dual of an $[n,k]$ code $\mathcal{C}$ over $\mathbb{F}_q$, will be denoted by $\mathcal{C}^\perp$. 
	
	\subsection{Preliminaries}
	\begin{lem}[Combinatorial Nullstellensatz \cite{AlonCombNul99}] \label{lem:comb_null}
		\normalfont Consider a non-zero multivariate polynomial 
		\bean 
		f(x_1,x_2,\ldots,x_m)\in \mathbb{F}_q[x_1,x_2,\ldots,x_m]. 
		\eean 
		Let the degree of the polynomial in the variable $x_i$ be  $d_i$, for $1\leq i\leq m$. If $|\mathbb{F}_q|>d_i$ for all $i\in [1:m]$, then there exists $(s_1, s_2,\ldots,s_m) \in \mathbb{F}^m_q$ such that  $f(s_1,s_2,\ldots,s_m)\neq0$.
	\end{lem}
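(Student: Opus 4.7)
The plan is to establish the lemma by induction on the number of variables $m$. For the base case $m=1$, I would invoke the standard fact that a non-zero univariate polynomial of degree $d_1$ over a field has at most $d_1$ roots; since the hypothesis gives $|\mathbb{F}_q| > d_1$, by a counting argument at least one element $s_1 \in \mathbb{F}_q$ must satisfy $f(s_1) \neq 0$.

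For the inductive step with $m \geq 2$, assume the statement is already known for every non-zero polynomial in at most $m-1$ variables whose per-variable degrees are strictly smaller than $|\mathbb{F}_q|$. I would view $f$ as a polynomial in $x_m$ with coefficients drawn from $\mathbb{F}_q[x_1,\ldots,x_{m-1}]$, namely
\[
f(x_1,\ldots,x_m) \;=\; \sum_{i=0}^{d_m} g_i(x_1,\ldots,x_{m-1})\, x_m^{\,i}.
\]
Since $f \not\equiv 0$, there is a largest index $i^{*} \in [0:d_m]$ for which the coefficient $g_{i^{*}}$ is a non-zero polynomial in $x_1,\ldots,x_{m-1}$. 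The degree of $g_{i^{*}}$ in $x_j$ is at most $d_j$ for every $j \in [1:m-1]$, so by the inductive hypothesis there exist $(s_1,\ldots,s_{m-1}) \in \mathbb{F}_q^{m-1}$ with $g_{i^{*}}(s_1,\ldots,s_{m-1}) \neq 0$.

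Substituting these values into $f$ produces a univariate polynomial $\tilde{f}(x_m) \triangleq f(s_1,\ldots,s_{m-1},x_m)$ whose coefficient in degree $i^{*}$ is non-zero by construction; hence $\tilde{f}$ is itself a non-zero polynomial of degree at most $d_m$. Applying the base-case argument one more time, using $|\mathbb{F}_q| > d_m$, I can extract $s_m \in \mathbb{F}_q$ with $\tilde{f}(s_m) \neq 0$, which yields $f(s_1,\ldots,s_m) \neq 0$ as required.

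In terms of difficulty, there is no real obstacle here — the whole argument is a clean two-line induction resting on the degree-versus-roots bound for univariate polynomials. The one subtlety worth flagging is the choice of $i^{*}$: picking the \emph{largest} index with $g_{i^{*}} \not\equiv 0$ is what ensures that after the specialization of $(x_1,\ldots,x_{m-1})$, the surviving coefficient genuinely sits in degree $i^{*}$ of $\tilde{f}$, so that $\tilde{f}$ cannot collapse to zero. Any weaker choice (for instance, an arbitrary non-zero coefficient) would not rule out cancellations among higher-degree terms and would break the reduction to the univariate base case.
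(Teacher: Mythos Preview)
The paper does not supply its own proof of this lemma; it is stated with a citation to Alon and then used as a black box in the analysis of Construction~A. Your induction on $m$ is the standard and correct argument.

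One small remark on your closing paragraph: the insistence on taking the \emph{largest} index $i^{*}$ is not actually needed. Any index $i^{*}$ with $g_{i^{*}} \not\equiv 0$ would do, because after substituting $(s_1,\ldots,s_{m-1})$ the coefficient of $x_m^{i^{*}}$ in $\tilde f$ is precisely $g_{i^{*}}(s_1,\ldots,s_{m-1}) \neq 0$, and a polynomial with one non-zero coefficient is already non-zero; there is no ``cancellation among higher-degree terms'' to rule out, since distinct powers of $x_m$ are linearly independent. Choosing the largest $i^{*}$ does pin down $\deg \tilde f = i^{*}$ exactly, which is tidy, but the correctness of the reduction does not hinge on it.
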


	\begin{defn}[Punctured Codes]\normalfont
		Let $\mathcal{C}$ be an $[n,k]$ linear code over $\mathbb{F}_q$. Given a subset \calp\ of $[0:n-1]$, the code $\mathcal{C}$ {\it punctured on} the coordinates in \calp, is the linear code of length $|\calpc | = (n-|\mathcal{P}|)$ obtained from $\mathcal{C}$ by deleting all the coordinates in $\mathcal{P}$.   Equivalently, the code $\mathcal{C}$ punctured on the coordinates in $\mathcal{P}$ is the restriction $\mathcal{C}|_{\calpc}$ of \calc\ to the coordinates in \calpc.  The punctured code (or restriction) $\mathcal{C}|_{\calpc}$ will also be referred to as $\mathcal{C}$ {\it punctured to} the coordinates in \calpc.
	\end{defn} 
	\begin{defn}[Shortened Codes]\normalfont
		Let $\mathcal{C}$ be an $[n,k]$ linear code over $\mathbb{F}_q$.  Given a subset $\mathcal{P}$ of $[0:n-1]$, consider first the subcode $\mathcal{C}^*$ given by: 
		\begin{equation*}
		\mathcal{C}^*=\{\underline{c}=(c_0\ c_1\ \ldots\ c_{n-1})\in \mathcal{C}: c_i=0\ \forall i\in \mathcal{P}\}.
		\end{equation*}
		Then by the phrase $\mathcal{C}$ {\it shortened on} the coordinates in $\mathcal{P}$, denoted by $\mathcal{C}^{\calpc}$, we will mean the linear code of length $(n-|\mathcal{P}|)$ obtained from $\mathcal{C}^*$ after puncturing on the coordinates given by $\mathcal{P}$.   The code $\mathcal{C}^{\calpc}$ will also be referred to as the code $\mathcal{C}$ {\it shortened to} the coordinates in \calpc.
	\end{defn}

	
	\begin{lem}[{\cite[p.~17]{HufPle}}]\label{lem:punc_short_duality}\normalfont
		Let $\mathcal{P}\subseteq [0:n-1]$. Then \bean (\mathcal{C}|_{\mathcal{P}})^\perp=(\mathcal{C}^\perp)^\mathcal{P}. \eean 
	\end{lem}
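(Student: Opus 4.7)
The plan is to show equality of two subspaces of $\mathbb{F}_q^{\mathcal{P}}$ by proving one containment directly and then verifying that both sides have the same dimension. Throughout, let $m = |\mathcal{P}|$ and $k = \dim \mathcal{C}$.

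For the containment $(\mathcal{C}^\perp)^{\mathcal{P}} \subseteq (\mathcal{C}|_{\mathcal{P}})^\perp$, I would take a codeword $\underline{u}$ of $(\mathcal{C}^\perp)^{\mathcal{P}}$; by definition of shortening, $\underline{u}$ is the restriction to $\mathcal{P}$ of some $\underline{v} \in \mathcal{C}^\perp$ with $v_i = 0$ for every $i \in \mathcal{P}^c$. For an arbitrary $\underline{c} \in \mathcal{C}$, the inner product of $\underline{u}$ with $\underline{c}|_{\mathcal{P}}$ equals
\[
\sum_{i \in \mathcal{P}} v_i\, c_i \;=\; \sum_{i=0}^{n-1} v_i\, c_i \;=\; 0,
\]
where the first equality uses the support condition on $\underline{v}$ and the second uses $\underline{v} \in \mathcal{C}^\perp$. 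Thus $\underline{u} \in (\mathcal{C}|_{\mathcal{P}})^\perp$.

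For the dimension count, let $T := \{\underline{c} \in \mathcal{C} : c_i = 0\ \forall i \in \mathcal{P}\}$, which is precisely the kernel of the restriction map $\mathcal{C} \to \mathbb{F}_q^{\mathcal{P}}$. Rank-nullity gives $\dim(\mathcal{C}|_{\mathcal{P}}) = k - \dim T$, hence $\dim(\mathcal{C}|_{\mathcal{P}})^\perp = m - k + \dim T$. On the other side, set $V := \{\underline{v} \in \mathbb{F}_q^n : v_i = 0\ \forall i \in \mathcal{P}^c\}$, a subspace of dimension $m$ whose orthogonal complement is $V^\perp = \{\underline{u} \in \mathbb{F}_q^n : u_i = 0\ \forall i \in \mathcal{P}\}$. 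By the construction of shortening, $(\mathcal{C}^\perp)^{\mathcal{P}}$ is isomorphic (via restriction to $\mathcal{P}$, which only drops zero coordinates) to the intersection $\mathcal{C}^\perp \cap V$. The elementary duality $(\mathcal{C}^\perp + V)^\perp = \mathcal{C}^{\perp\perp} \cap V^\perp = \mathcal{C} \cap V^\perp = T$ yields $\dim(\mathcal{C}^\perp + V) = n - \dim T$, and then the dimension formula gives
\[
\dim(\mathcal{C}^\perp \cap V) = (n-k) + m - (n - \dim T) = m - k + \dim T.
\]
The two dimensions agree, so the containment above must be an equality.

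The argument is largely bookkeeping resting on the rank-nullity theorem and the elementary duality $(U+W)^\perp = U^\perp \cap W^\perp$; there is no deep obstacle. The only subtle point is to keep the conventions straight — namely, which role the set $\mathcal{P}$ (versus $\mathcal{P}^c$) plays under the paper's definitions of restricting-to versus shortening-to a coordinate set — and to recognise that the shortened code is naturally identified with the subspace of $\mathcal{C}^\perp$ consisting of codewords supported on $\mathcal{P}$.
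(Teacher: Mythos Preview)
Your argument is correct: the containment step is clean, and the dimension count via $T=\ker(\mathcal{C}\to\mathbb{F}_q^{\mathcal{P}})$ together with $(\mathcal{C}^\perp+V)^\perp=\mathcal{C}\cap V^\perp=T$ is a standard and rigorous way to finish. Note, however, that the paper does not supply its own proof of this lemma; it simply cites \cite[p.~17]{HufPle}, so there is no in-paper argument to compare against. Your write-up is a perfectly acceptable self-contained proof of the cited fact.
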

	
	An $m\times n$ matrix $A$ over a finite field \fq, with $m\leq n$, will be referred to as an {\it MDS matrix} if any  $m$ distinct columns of $A$ form a linearly independent set. Clearly, an MDS matrix $A$ can serve as the generator matrix of an $[n,m]$ MDS code.
	

	\begin{lem}\label{lem:shortened_mds}\normalfont
		Let $\mathcal{C}_\text{\tiny MDS}$ denote an $[n,k]$ MDS code.   For $1\leq l\leq k$, consider an $(l\times n)$ matrix $P$ whose rows $\{\underline{c}_i\}_{i=0}^{l-1}$ correspond to a set of linearly independent codewords drawn from $\mathcal{C}_\text{\tiny MDS}$.   Then if $|\cup_{i=0}^{l-1}\text{supp}(\underline{c}_i)|=(n-k+l)$, any choice of $\leq l$ non-zero columns of $P$ forms a linearly independent set.
	\end{lem}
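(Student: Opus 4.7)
The plan is to reduce the claim to the defining MDS property of a carefully chosen shortened subcode. Set $S=\cup_{i=0}^{l-1}\text{supp}(\underline{c}_i)$, so that $|S|=n-k+l$, and let $\mathcal{P}=S^c$, which has $|\mathcal{P}|=k-l$. By construction, every row $\underline{c}_i$ of $P$ vanishes on $\mathcal{P}$, so each $\underline{c}_i$ lies in the shortened subcode $\mathcal{C}_\text{\tiny MDS}^{S}$ (i.e., $\mathcal{C}_\text{\tiny MDS}$ shortened on $\mathcal{P}$). Moreover, since every column of $P$ indexed by $\mathcal{P}$ is the zero column, the non-zero columns of $P$ are contained in those indexed by $S$. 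Therefore it suffices to show that any $\leq l$ columns of $P$ taken from within $S$ are linearly independent.

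The key step will be to invoke the standard fact that shortening an $[n,k]$ MDS code on $|\mathcal{P}|=k-l$ coordinates produces an $[n-k+l,\,l]$ MDS code. This follows from Lemma \ref{lem:punc_short_duality} (which identifies shortening with puncturing of the dual) together with the facts that the dual of an MDS code is MDS and that puncturing an MDS code on at most $n-k$ coordinates preserves the MDS property and the dimension. Applied to $\mathcal{C}_\text{\tiny MDS}^{S}$, this yields an $[n-k+l,\,l]$ MDS code. Since the $l$ codewords $\{\underline{c}_i\}$ are linearly independent and all lie in this $l$-dimensional MDS code, their restrictions to $S$ form a basis; equivalently, the $(l\times (n-k+l))$ matrix $P(:,S)$ is a generator matrix of an $[n-k+l,\,l]$ MDS code.

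Finally, for any generator matrix of an $[n-k+l,\,l]$ MDS code, every $l$ columns are linearly independent (this is the very definition of an MDS matrix given just before the lemma). In particular, any $\leq l$ columns of $P(:,S)$ are linearly independent, and as noted above the non-zero columns of $P$ all lie in $S$, which completes the argument. I expect the only subtle point to be pinning down the dimension of the shortened code as exactly $l$ (rather than merely $\leq l$); this is where the MDS hypothesis is essential, and it is handled cleanly via the puncturing-shortening duality of Lemma \ref{lem:punc_short_duality}. The remainder is straightforward linear algebra.
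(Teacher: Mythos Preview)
Your proposal is correct and follows essentially the same approach as the paper: both arguments identify the common zero set $\mathcal{P}=S^c$ of size $k-l$, invoke the fact that shortening an $[n,k]$ MDS code on $k-l$ coordinates yields an $[n-k+l,\,l]$ MDS code, and then observe that the $l$ independent rows of $P$ restricted to $S$ must form a generator matrix of this shortened MDS code, from which the conclusion is immediate. Your extra justification of the shortening-preserves-MDS fact via Lemma~\ref{lem:punc_short_duality} and duality is a nice touch that the paper simply asserts as known.
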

	\begin{proof}
		The basic idea here is to show that if a collection of $l$ independent codewords drawn from an $[n,k]$ MDS code share $(k-l)$ zeros in common, then these $l$ codewords comprise a shortened MDS code (after the $(k-l)$ coordinates corresponding to the common zeros are deleted). 
		More formally, it is known that shortening an $[n,k]$ MDS code on a set $\mathcal{S}$ of coordinates, where $\mathcal{S}\subseteq [0:n-1]$, $|\mathcal{S}|=s$, $0\leq s\leq (k-1)$, results in an $[n-s,k-s]$ MDS code.  Let $\mathcal{A} \triangleq \cup_{i=0}^{l-1}\text{supp}(\underline{c}_i)$. Clearly $\text{span}<\underline{c}_1,\underline{c}_1,\cdots, \underline{c}_l >$, after removing the $(k-l)$ trivial zero coordinates corresponding to $\mathcal{A}^c$, is a subspace of the $[(n-k+l),l]$ MDS code $\mathcal{C}_\text{\tiny MDS}^{{\cal A}}$ obtained by shortening $\mathcal{C}_\text{\tiny MDS}$ to $\mathcal{A}$. As $\text{rank}(P)=l$, the matrix $P$ (after removing the zero columns $\mathcal{A}^c$) is indeed a generator matrix for the shortened MDS code $\mathcal{C}_\text{\tiny MDS}^{{\cal A}}$ of dimension $l$. The lemma then follows.
	\end{proof}
	
	\begin{defn}{(Zero-band generator matrix of an MDS code)}
		\normalfont Consider an $[n,k]$ MDS code $\mathcal{C}_\text{\tiny MDS}$ over $\mathbb{F}_q$. A zero-band generator matrix (ZB generator matrix), say $Z$, corresponding to $\mathcal{C}_\text{\tiny MDS}$ is a $(k\times n)$ generator matrix of $\mathcal{C}_\text{\tiny MDS}$ that contains a diagonal band of $(k-1)$ consecutive zeros as shown below: 
		\begin{equation*}
		Z=\left[
		{\begin{array}{ccccccccccccccc}
			* & 0 & 0 & \ldots &\ldots & 0 & * & * & * & \ldots & * & * &\ldots&*\\
			* & * & 0 & \ldots &\ldots &  0 & 0 & * & * & \ldots & * & *&\ldots&*\\
			* & * & * & 0 & \ldots & 0 & 0 & 0 & * & \ldots & * & *&\ldots&* \\ 
			&  &  & \ddots &  &  &  & &   & \ddots &  & &\ldots&*\\
			* & * & * & \ldots & * & 0 & 0 & 0 & \ldots & 0 & * & *&\ldots&*\\
			* & * & * & \ldots & * & * & 0 & 0 & \ldots & 0 & 0 & * &\ldots&*
			\end{array}}\right].
		\end{equation*}
		More precisely, the $i$-th row of $Z$, $\underline{z}_{\text{\tiny row},i}$ for $0\leq i\leq (k-1)$,  has a run of $(k-1)$ zeros spanning the coordinates $[i+1:i+k-1]\pmod n$. Here each $*$ indicates a non-zero element in $\mathbb{F}_q$.
	\end{defn}
	
	\begin{lem}\normalfont\label{lem:zbgen_existence}
		Given an $[n,k]$ MDS code there always exists a corresponding ZB generator matrix $Z$.
	\end{lem}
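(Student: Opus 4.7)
The plan is constructive: I would build the $k$ rows of $Z$ one at a time, each as a nonzero codeword of a one-dimensional shortened subcode of \cmds, and then verify that the resulting $k \times n$ matrix has full rank $k$.

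For each $i \in [0:k-1]$, let $P_i \triangleq [i+1:i+k-1] \pmod{n}$; this is the set of $k-1$ coordinates that must be zero in the $i$-th row. Shortening \cmds\ at the coordinates in $P_i$ yields, by the standard MDS shortening property already used in the proof of Lemma \ref{lem:shortened_mds}, an $[n-k+1,1]$ MDS code. Its unique nonzero codeword (up to scaling) has weight exactly $n-k+1$, i.e., it is zero precisely on $P_i$ and nonzero on every coordinate in $[0:n-1] \setminus P_i$. Taking any such codeword to be $\underline{z}_{\text{\tiny row},i}$ and stacking the rows produces a candidate matrix $Z$ whose entries already match the prescribed $*/0$ pattern coordinate by coordinate.

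It then remains to argue $\text{rank}(Z)=k$. For this I would inspect the leading $k \times k$ block $Z([0:k-1],[0:k-1])$ using the observation that $Z_{ij}=0$ iff $j \in P_i$ iff $(j-i) \bmod n \in [1:k-1]$. When $i<j$, the difference $j-i$ lies in $[1:k-1] \subseteq [1:n-1]$, so the reduction modulo $n$ is trivial and $Z_{ij}=0$; when $i=j$, the difference reduces to $0$, which is not in $[1:k-1]$, so $Z_{ii} \neq 0$. Hence this leading block is lower triangular with nonzero diagonal and therefore invertible, which forces $Z$ to have rank $k$ and to be a generator matrix of \cmds\ with the required zero-band structure.

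The only delicate point I anticipate is the modular-arithmetic bookkeeping when the zero-band wraps around, i.e., when $n < 2k-1$. However, the inequalities $1 \leq k-1 \leq n-1$ used above hold for every $[n,k]$ MDS code with $k \leq n$, so the triangularity argument applies uniformly and no separate case analysis is required.
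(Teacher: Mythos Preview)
Your proposal is correct and follows essentially the same approach as the paper: construct each row as a minimum-weight codeword of \cmds\ vanishing on the prescribed $(k-1)$ coordinates (you phrase this via shortening to an $[n-k+1,1]$ MDS code, the paper via ``an MDS codeword is determined by any $k$ coordinates''), and then observe that the leading $k\times k$ block is lower triangular with nonzero diagonal to conclude $\text{rank}(Z)=k$.
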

	\begin{proof}
		Let us choose the $i$-th row of $Z$ to be the non-zero codeword $\underline{c}_i\triangleq(c_{i,0}\ c_{i,1}\ \ldots \ c_{i,n-1})\in \mathcal{C}_\text{\tiny MDS}$ such that $c_{i,j}=0$ for $(i+1)\leq j\leq (i+k-1)\pmod n$.  This can always be done since a codeword in an MDS code is uniquely specified by any set of $k$ coordinates and $(k-1)$ of them can be chosen to be zeros.  Note that all the remaining $n-(k-1)$ coordinates of $\underline{c}_i$ are forced to be non-zero as $\mathcal{C}_\text{\tiny MDS}$ has minimum-distance and hence minimum Hamming weight, equal to $(n-k+1)$.  The first $k$ columns of $Z$ then form a lower triangular matrix with non-zero entries along the diagonal.  Hence $\text{rank}(Z)=k$ and $Z$ is a generator matrix for the MDS code.
		%
		%
	\end{proof}
	\begin{lem}\normalfont\label{lem:mds_consec_rows}
		Let $Z$ be a ZB generator matrix corresponding to an $[n,k]$ MDS code $\mathcal{C}_\text{\tiny MDS}$. Fix $i,j$ such that $0\leq i\leq (k-1)$ and $1\leq j\leq(k-i)$. Any choice of $\leq j$ non-zero columns of the $j\times n$ matrix $Z(i:i+j-1,:)$ forms a linearly independent set.
	\end{lem}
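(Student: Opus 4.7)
The plan is to derive the lemma as an immediate corollary of Lemma~\ref{lem:shortened_mds}, applied to $P\triangleq Z(i:i+j-1,:)$ with $l=j$. For this, I need to verify the two hypotheses of Lemma~\ref{lem:shortened_mds}: (a) the $j$ rows of $P$ are linearly independent codewords of $\mathcal{C}_\text{\tiny MDS}$, and (b) the union of their supports has cardinality exactly $n-k+j$. Once both are in hand, Lemma~\ref{lem:shortened_mds} directly yields that any $\le j$ non-zero columns of $P$ are linearly independent, which is precisely the conclusion of the lemma.

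Hypothesis (b) reduces to a cyclic-interval computation using the ZB structure. A column of $P$ vanishes iff its index $p$ lies in the zero band of every row $(i+s)$ with $s\in[0:j-1]$, that is, iff $p\in\bigcap_{s=0}^{j-1}[i+s+1:i+s+k-1]\pmod n$. This intersection of $j$ cyclic intervals of common length $(k-1)$ at consecutive offsets is the single cyclic interval $[i+j:i+k-1]\pmod n$, of cardinality $(k-j)$; the condition $j\le k-i$ ensures the interval is well-defined. The union of row supports is then the complement of this set in $[0:n-1]$, of size $n-(k-j)=n-k+j$. For hypothesis (a), I look at the $j\times j$ submatrix $Z(\{i,\ldots,i+j-1\},\{i,\ldots,i+j-1\})$. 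This is a diagonal sub-block of the leading $k\times k$ block of $Z$, which is already lower triangular with non-zero diagonal by the proof of Lemma~\ref{lem:zbgen_existence}. Hence this $j\times j$ sub-block is itself lower triangular with non-zero diagonal, is invertible, and therefore the $j$ rows of $P$ are linearly independent.

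The only real subtlety is handling the modular arithmetic in step (b) when the zero bands wrap around (possible when $n<2k-1$); however, the intersection of $j$ cyclic intervals of equal length at consecutive shifts is always itself a cyclic interval whose length shrinks by exactly $(j-1)$, giving $(k-1)-(j-1)=k-j$ regardless of wraparound, so the cardinality count in (b) is robust. Step (a) only involves column-index comparisons within $[i:i+j-1]\subseteq[0:k-1]$, where the leading $k\times k$ block is described directly without wraparound, so the triangularity argument goes through unchanged. With (a) and (b) established, Lemma~\ref{lem:shortened_mds} concludes the proof.
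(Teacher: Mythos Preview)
Your proposal is correct and follows essentially the same approach as the paper's proof: both apply Lemma~\ref{lem:shortened_mds} after noting that the $j$ consecutive rows are linearly independent and share exactly $(k-j)$ common zeros on the coordinate set $[i+j:i+k-1]\pmod n$. You simply supply more detail in verifying these two hypotheses (the triangular sub-block argument for independence and the cyclic-interval computation for the common-zero count), whereas the paper states them without elaboration.
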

	\begin{proof}  The result follows from Lemma \ref{lem:shortened_mds} by noting that the $j$ consecutive rows are linearly independent and that they have common zeros of size $(k-j)$ on the set of coordinates: $[i+j:i+k-1]\mod n$. 
	\end{proof}
	
	\begin{lem}\label{lem:recovery_parity_col_span}\normalfont
		Consider an $[n,k]$ code $\mathcal{C}$ having p-c matrix $H$. Let the coordinates indexed by $\mathcal{E}\subseteq [0:n-1]$ be erased from $\mathcal{C}$ and let $i\in \mathcal{E}$. Then the $i$-th code symbol in a codeword can be recovered from the code symbols of the same codeword corresponding to the coordinates in \calec\ iff: 
		\bean
		\underline{h}_i & \notin & \text{span} \left \langle \ \{\underline{h}_{j}\}_{j\in \mathcal{E}\setminus\{i\}} \ \right \rangle,
		\eean
		where $\underline{h}_{j}$ denotes the $j$-th column of $H$.
	\end{lem}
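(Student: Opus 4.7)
The plan is to prove both directions by relating the recoverability of $c_i$ from the unerased coordinates to the existence (or non-existence) of a non-zero codeword whose support lies in $\mathcal{E}$ and whose $i$-th coordinate is non-zero. The workhorse identity throughout will be $\sum_{j=0}^{n-1} c_j \underline{h}_j = \underline{0}$, which holds for every $\underline{c}\in \mathcal{C}$ since $H$ is a p-c matrix of $\mathcal{C}$.

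For the ``if'' direction, I would argue by contrapositive: assume $c_i$ cannot be uniquely determined from the symbols indexed by $\mathcal{E}^c$. Then there exist codewords $\underline{c}, \underline{c}'\in\mathcal{C}$ which agree on every coordinate in $\mathcal{E}^c$ but satisfy $c_i \neq c_i'$. Their difference $\underline{d} = \underline{c} - \underline{c}'$ is a codeword supported on $\mathcal{E}$ with $d_i \neq 0$. Plugging $\underline{d}$ into the parity identity gives $d_i \underline{h}_i = -\sum_{j\in\mathcal{E}\setminus\{i\}} d_j \underline{h}_j$, and dividing by $d_i$ exhibits $\underline{h}_i$ as a linear combination of $\{\underline{h}_j\}_{j\in\mathcal{E}\setminus\{i\}}$, contradicting the hypothesis.

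For the ``only if'' direction, I would proceed by explicit construction. Assume $\underline{h}_i = \sum_{j\in\mathcal{E}\setminus\{i\}} \alpha_j \underline{h}_j$ for some scalars $\alpha_j \in \mathbb{F}_q$, and define a vector $\underline{d}\in\mathbb{F}_q^n$ by $d_i = 1$, $d_j = -\alpha_j$ for $j\in\mathcal{E}\setminus\{i\}$, and $d_j = 0$ for $j\in\mathcal{E}^c$. A one-line computation shows $H\underline{d}^{\,T} = \underline{0}$, so $\underline{d}\in\mathcal{C}$. Since $\underline{d}$ and the all-zero codeword agree on $\mathcal{E}^c$ but differ at coordinate $i$, no decoder observing only $\mathcal{E}^c$ can pin down $c_i$.

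I do not expect any substantive obstacle: the entire argument is a reformulation of erasure recovery in terms of codewords supported on the erasure pattern. The only subtle point is to interpret ``recovery'' in the statement as unique determination of $c_i$ from the observed symbols, which is the natural operational meaning in the erasure setting and is what lets the contrapositive of the ``if'' direction go through.
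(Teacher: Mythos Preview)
Your proof is correct and is the standard argument for this well-known fact. The paper itself omits the proof entirely (``We skip the proof as it is a well-known result''), so there is nothing to compare against; your write-up is exactly the kind of routine verification one would expect.
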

	\begin{proof}
		We skip the proof as it is a well-known result.	
	\end{proof}
	
	\section{A Coding Framework for Streaming Codes and the Sliding-Window Channel Model} \label{sec:setting}
	
\subsection{A Coding Framework for Streaming Codes}

In this paper, we follow the framework introduced by Martinian and Sundberg \cite{MartSunTIT04}. Let $k,n$ be integers such that $k< n$. The encoder $\mathbf{E}$ receives a {\it message packet} $\underline{s}(t) \in \mathbb{F}_q^{k}$, 
\bean
\underline{s}(t) & \triangleq & [s_0(t)\ s_1(t)\ \ldots\ s_{k-1}(t)]^T, \ \ t\in\{0,1,2,\ldots\},  
\eean
at time-$t$. The causal encoder $\mathbf{E}$ produces a {\it coded packet}: 
\bean
\underline{x}(t)& \triangleq & \left[ \begin{array}{c} \underline{s}(t) \\ \underline{p}(t) \end{array} \right] 
\eean at time-$t$, where the {\it parity packet} $\underline{p}(t)\triangleq[p_0(t)\ p_1(t)\ \ldots\ p_{n-k-1}(t)]^T \in \mathbb{F}_q^{n-k}$ is a function of message packets received till time-$t$, i.e., $\{\underline{s}(0), \underline{s}(1),\ldots,\underline{s}(t)\}$. Between the encoder $\mathbf{E}$ and the decoder $\mathbf{D}$, there exists a channel which erases some of the transmitted coded packets. Let $\underline{y}(t)$ denote the received packet at time-$t$. We have:
\bean
\underline{y}(t) & = & \left\{ \begin{array}{cl} *, & \text{if $\underline{x}(t)$ is erased}, \\
	\underline{x}(t), & \text{otherwise}. \end{array} \right.
\eean
The delay-constrained decoder $\mathbf{D}$ with delay-parameter $\tau$ outputs the decoded message packet $\hat{\underline{s}}(t)$ by time $(t+\tau)$. Here $\hat{\underline{s}}(t)$, which is an estimate of the message packet $\underline{s}(t)$, is a function of received packets till time $(t+\tau)$, i.e., $\{\underline{y}(0),\underline{y}(1),\ldots,\underline{y}(t+\tau)\}$.  In an ideal scenario, we would have $\hat{\underline{s}}(t)=\underline{s}(t)$. As $k$-length message packets are mapped to $n$-length coded packets at each time-$t$, rate of the code $R$ is defined as $\frac{k}{n}$.


\subsection{Erasure-Channel Models}

In \cite{MartTrotISIT07}, the authors propose a family of streaming codes that can tolerate a burst erasure of length $b$ with delay $\tau$. i.e., the code permits recovery of message packet $\underline{s}(t)$ by time $(t+\tau)$, even when $\underline{y}(l)=*$ for $l\in[j:(j+b-1)]$, where $t\in[j:(j+b-1)], j\in \{0,1,2,\ldots\}$.  At first glance this model might appear restricted to handling just a single erasure burst of length $b$ over all time. However because of the delay, this model forces the decoder to tolerate any number of bursts, as long as they are spaced apart by at least $\tau$ time units. 
Badr et al. \cite{BadrPatilKhistiTIT17} extend the burst-loss based channel model introduced in \cite{MartSunTIT04} to include random erasures as well. The paper \cite{BadrPatilKhistiTIT17} introduces a sliding-window (SW) channel model  with parameters $a$, $b$, $w$ in addition to the decoding delay parameter $\tau$. The model is as follows. Given any sliding window of width $w$, the channel introduces at most one of the following patterns of erasures (i) a burst erasure of length $\leq b$ (ii) a total of $\leq  a$ random erasures (see Fig. \ref{fig:s-w_channel_model}).   As explained below, the parameters are subject to certain constraints.  We must have: 
\ben
\item $a\leq b$, since if $a>b$, the burst-error requirement would be subsumed by the random-erasure requirement and rendered redundant,
\item $b\leq \tau$ in order to have non-zero rate when operating with a causal encoder, 
\item $b<w$ to avoid admitting within the model, a never-ending continuous stream of erasures.
\een
These constraints can be summarized in the form:
\bea 
a \leq b \leq \min \{\tau,w-1\}.  \label{eq:constraints} 
\eea
We will refer to $\{a,b,\tau,w\}$ as {\em the parameter set} of the Delay-Constrained SW (DC-SW) channel model \footnote{The terminology ``DC-SW channel model'' clearly is an abuse of notation since the delay constraint is not part of the channel model, but rather a constraint placed on the decoder.   We employ this terminology for the sake of convenience.  This allows us to refer for example, to the set $\{a,b,\tau,w\}$ simply as the parameter set of the DC-SW channel.}.  We will assume throughout the remainder of this paper that the parameter set satisfies the constraints laid out in \eqref{eq:constraints}. 
\begin{figure}[!htb]
	\centering
	\captionsetup{justification=centering}
	\includegraphics[scale=0.45]{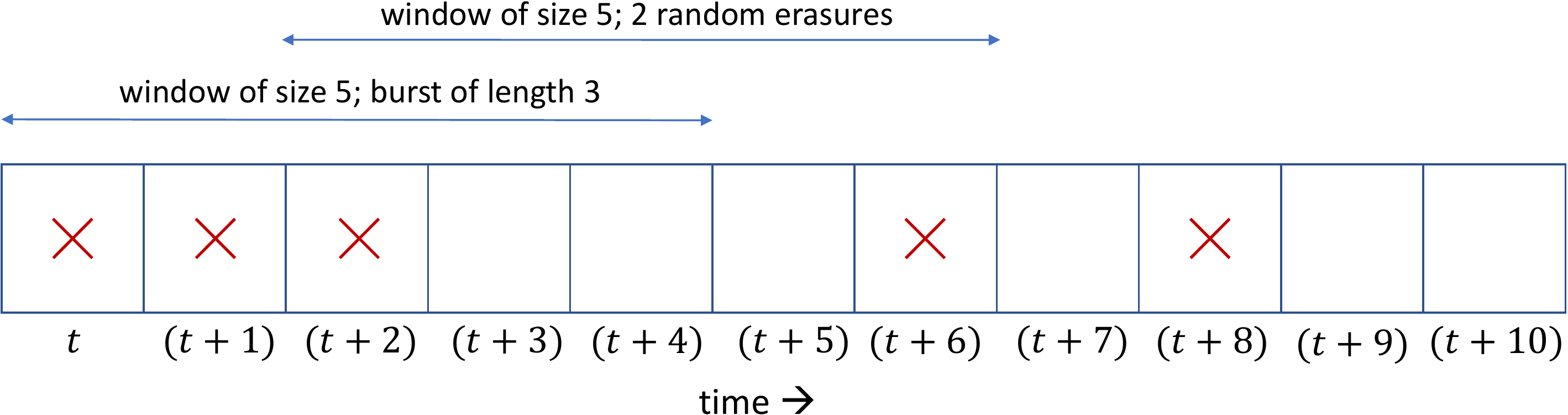}
	\caption{An example channel realization under the SW channel model, for parameters $a=2, b=3, w=5$. Here each $\color{red} \times$ indicates an erasure.}
	\label{fig:s-w_channel_model}
\end{figure}

Set $\delta\triangleq(b-a)$ and define the {\em effective time delay} parameter $\tau_\text{eff}\triangleq \min\{\tau,w-1\}$. The rate $R$ of a {\em streaming code} \cstr\ which can faithfully recover all data with a delay not exceeding $\tau$, from all of the erasure patterns permitted under the DC-SW channel model, was shown in \cite{BadrPatilKhistiTIT17} to be upper bounded as below:

\begin{equation}\label{eq:rate_upper_bound}
R\leq \frac{(\tau_\text{eff}-a+1)}{(\tau_\text{eff}+\delta+1)}.
\end{equation}
where $\tau_\text{eff} = \min\{\tau,w-1\}$. We give a short proof for \eqref{eq:rate_upper_bound} as follows.

\bpf
\bit
\item[]
\item[]
Consider the DC-SW channel with parameters $\{a,b,\tau,w\}$.
\item(\textit{Case I:} $w\geq(\tau+1)$) Let \cstr\ be a code which can tolerate all the erasure patterns of the DC-SW channel. Consider a periodic erasure channel  with a period of $\nu = (\tau+\delta +1)$, as shown in Figure \ref{fig:periodic_erasure_channel}.  We show that this code can also tolerate the erasure patterns occurring in the periodic erasure channel. 
\ben
\item The first $(b-a+1)$ packets can be recovered within a delay of $\tau$, as every packet in this set encounters a burst of length at most $b$.
\item Consider the starting index as $0$ in Figure \ref{fig:periodic_erasure_channel}. A packet indexed $\ell$ in $[b-a+1:b-1]$ encounters the erasure pattern $[\ell:b-1]\cup [\delta+1+\tau:\ell+\tau]$ within the decoding window of $[\ell:\ell+\tau]$. The length of this erasure pattern is $a$, irrespective of $\ell$. 
\een
Hence the code can recover from the first erasure burst spanning coordinates $[0:b-1]$ with a delay of $\tau$. This observation can be easily extended to show  recoverability from the other erasure bursts as well. So, the rate of \cstr\ must be less than the capacity of the periodic erasure channel: 
\begin{equation}\label{eq:rate_upper_bound_1}
R \le \frac{\tau-a+1}{\tau+\delta+1}.
\end{equation} 
\item (\textit{Case II:} $w<(\tau+1)$) Consider a periodic erasure channel with a period $\nu = w+\delta$, where the first $b$ packets at the beginning of each period are erased, similar to the previous case. The erasure pattern as set by this periodic erasure channel is a permitted pattern for the SW channel with parameters $\{a,b,w\}$. Thus we have: 
\begin{equation}\label{eq:rate_upper_bound_2}
R \le \frac{w-a}{w+\delta}.
\end{equation} 
\eit
Combining equations \ref{eq:rate_upper_bound_1} and \ref{eq:rate_upper_bound_2}, we obtain \eqref{eq:rate_upper_bound}.
\epf
\begin{figure}[!htb]
	\centering
	\captionsetup{justification=centering}
	\includegraphics[scale=0.7]{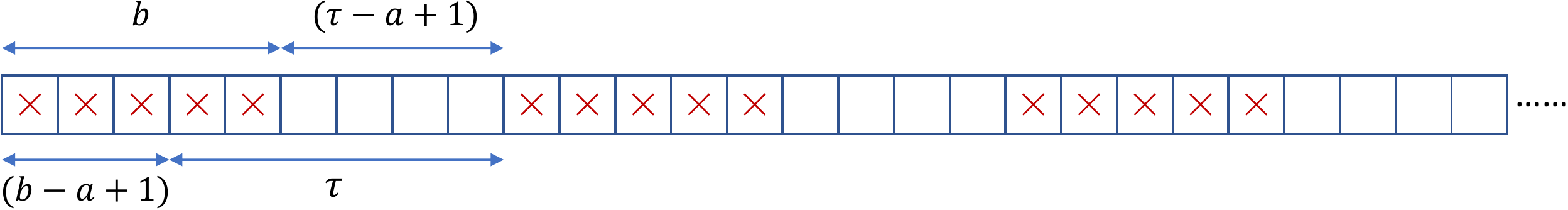}
	\caption{A periodic erasure channel with period $(\tau+\delta +1)$ for parameter set $\{a=3,b=5,\tau=6,w=9\}$. Here each $\color{red} \times$ indicates an erasure. }
	\label{fig:periodic_erasure_channel}
\end{figure}

\begin{defn}[Rate-Optimal Streaming Code] \normalfont Given a DC-SW channel model with parameter set $\{a,b,\tau,w\}$, a streaming code \cstr\ is said to be {\it rate-optimal} if the rate of the code meets the upper bound in \eqref{eq:rate_upper_bound} with equality and \cstr\ permits recovery from all the erasure patterns as set by the DC-SW model.  
\end{defn} 

\subsection{Relative Sizes of Window Length and Delay Parameters}

We show here that one can without loss of generality assume that: 
\bea
\tau_\text{eff} \ = \ \tau \ = \ (w-1). \label{eq:teqwm1}
\eea
In order to see this, let \cstr\ be a streaming code that handles all the erasure patterns permitted by the DC-SW having parameter set $\{a,b,w,\tau\}$, in which $w>(\tau+1)$.  It is not hard to see that \cstr\ can handle all the erasure patterns permitted by a second DC-SW model where the parameters $a,b,\tau$ are unchanged, but where the window-size parameter $w$ has the smaller value $w=(\tau+1)$. As reducing the window size represents in general, a more stringent constraint, we can assume without loss of generality, that: 
\bea
w & \leq & (\tau+1), \text{  or equivalently} \\
\tau & \geq & (w-1).
\label{eq:wt1} 
\eea
Next, consider the situation when in the DC-SW channel, for some parameter set $\{a,b,\tau,w\}$, we have that $\tau>(w-1)$.  It turns out that in this case, we can construct for any given parameter set, rate-optimal codes having finite field size $q$ that is $O(\tau^2)$ and which incur no rate penalty when $\tau$ is reduced to equal $(w-1)$.  This code is constructed in Section \ref{sec:construction_A}.   Thus regarding $O(\tau^2)$ as an acceptable field-size, which is reasonable given that $\tau$ is typically small in the applications envisaged here, then we can without loss in performance, set $\tau=w-1$, as stated in \eqref{eq:teqwm1}.   We should note however, that of the four constructions presented in this paper, apart from Construction A which has a $O(\tau^2)$ field-size requirement, all the others have a field-size requirement that is $O(\tau)$, while maintaining $\tau = w-1$.   We will from here on assume that \eqref{eq:teqwm1} holds and hence will replace $\tau_{\text{eff}}$ with $\tau$.   Thus the parameter sets of the DC-SW channel model that we will focus on from here onward are of the form $\{a,b,\tau,w=\tau+1\}$.
	We note that the authors of \cite{BadrPatilKhistiTIT17} also set $w\geq(\tau+1)$, although employing a slightly different argument.
%

	\subsection{An Equivalent Set of Conditions for Erasure Recovery}\label{sec:streaming_equivalent_conditions}
	Given a DC-SW channel with parameter set $\{a,b,\tau,w\}$ and $w=(\tau+1)$, it can be easily shown that a streaming code \cstr\  recovers from all erasure patterns permissible under the DC-SW model iff the following conditions are true, for any $t\in \{0,1,2,\ldots\}$: 
	\begin{enumerate}
		\item[J1.] [{\em Burst Erasure Requirement}] \cstr\ can guarantee recovery of $\underline{x}(t)$ with a delay of at most $\tau$ in the presence of an erasure-burst of length at most $b$, involving coded packet $\underline{x}(t)$,  
		\item[J2.] [{\em Random Erasure Requirement]} \cstr\ can guarantee recovery of $\underline{x}(t)$ with a delay of at most $\tau$ in the presence of t most `$a$' erasures involving coded packet $\underline{x}(t)$ and $\leq(a-1)$ other coded packets.
		\een
		
		\subsection{Convolutional Codes Derived from the Diagonal Embedding of a Block Code}\label{subsec:Diag_Embedding}
		
		The particular streaming code, which is a convolutional code, employed in the Martinian-Trott~\cite{MartTrotISIT07}  scheme is constructed by embedding a block code in diagonal fashion (see Fig. \ref{fig:diag_interleaving_example}).   The same scheme has since been employed in \cite{RobustStreamingCodes,FongKhisti,NikPVK} as it reduces the problem of designing a streaming code to that of carefully designing a block code that satisfies multiple constraints.  We adopt the same approach in the present paper as well. Formally, the diagonal embedding scheme can be described as follows.  
		
		Let $\mathcal{C}$ be an $[n,k]$ code having a $(k\times n)$ systematic generator matrix $G=[I_k\ P]$ and set $r\triangleq (n-k)$. The parity symbols of the resultant systematic convolutional code after diagonally embedding $\mathcal{C}$ are given by:
		\begin{equation}
		\big[p_0(t)\ p_1(t+1)\ \ldots\ p_{r-1}(t+r-1)\big]=\big[s_{0}(t-k)\ s_{1}(t-k+1)\ \ldots\  s_{k-1}(t-1)\big]P,
		\end{equation}
		for $t\in \mathbb{Z}$. Set $s_j(t)\triangleq0$ when $t<0$, for $0\leq j\leq (k-1)$. We make the diagonal embedding technique explicit with the help of an example illustrated in Fig. \ref{fig:diag_interleaving_example}, where $[n=6,k=3]$ and the submatrix $P$ is given by:
		\begin{equation*}
		P=\left[
		{\begin{array}{ccc}
			p_{0,0} & p_{1,0} & p_{2,0} \\
			p_{0,1} & p_{1,1} & p_{2,1} \\
			p_{0,2} & p_{1,2} & p_{2,2}
			\end{array}}\right].
		\end{equation*} 
		In order to bring out the convolutional nature of this encoder, we represent the encoding process in a more traditional manner using shift registers in Fig. \ref{fig:diag_interleaving_example_2}.
		\begin{figure}[!htb]
			\centering
			\captionsetup{justification=centering}
			\includegraphics[scale=0.6]{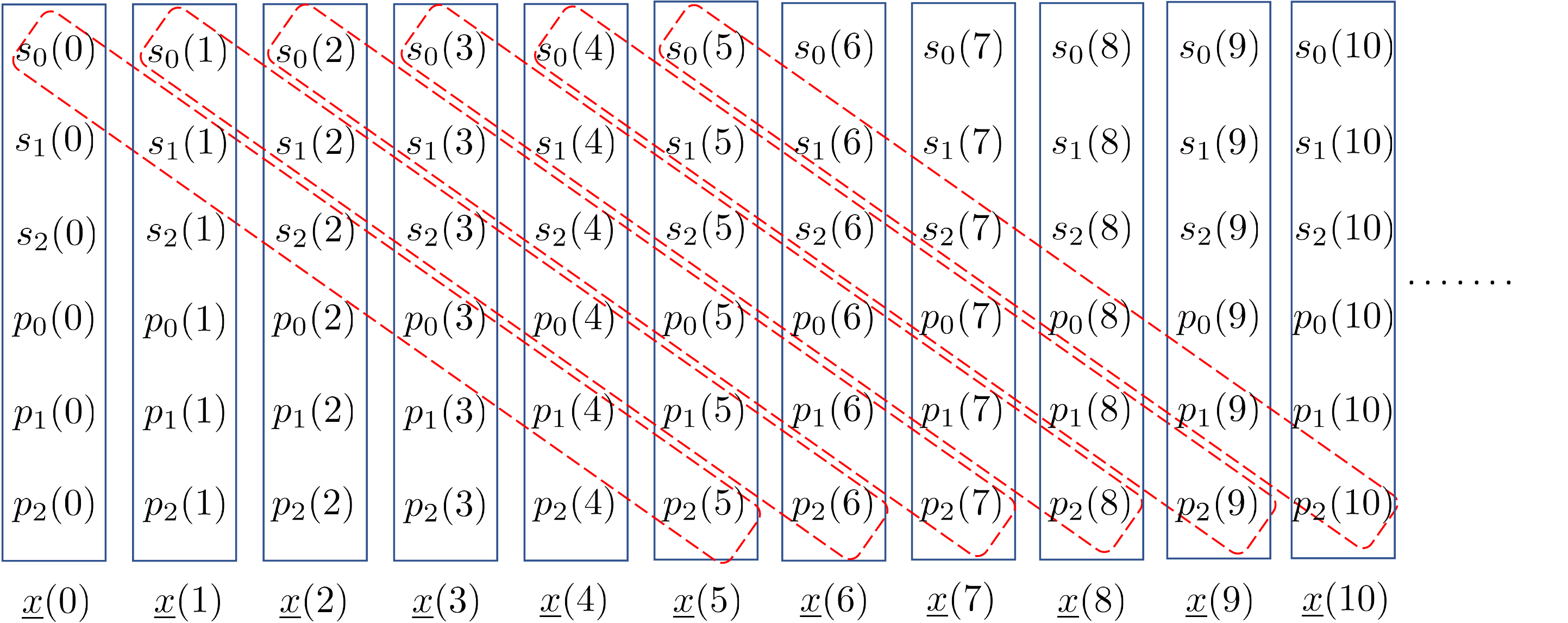}
			\caption{The streaming code \cstr\ obtained by diagonally embedding a $[6,3]$ systematic block code $\mathcal{C}$. Each diagonal of the form $[s_0(t)\ s_1(t+1)\ s_2(t+2)\ p_0(t+3)\ p_1(t+4)\ p_2(t+5)]$ is a codeword in $\mathcal{C}$, where $t\in\mathbb{Z}$. The symbols $\{s_i(t)\}$ are raw message symbols belonging to the message packet $\underline{s}(t)$, whereas $\{p_i(t)\}$ are parity symbols.}
			\label{fig:diag_interleaving_example}
		\end{figure}

		\begin{figure}[!htb]
			\centering
			\captionsetup{justification=centering}
			\includegraphics[scale=0.6]{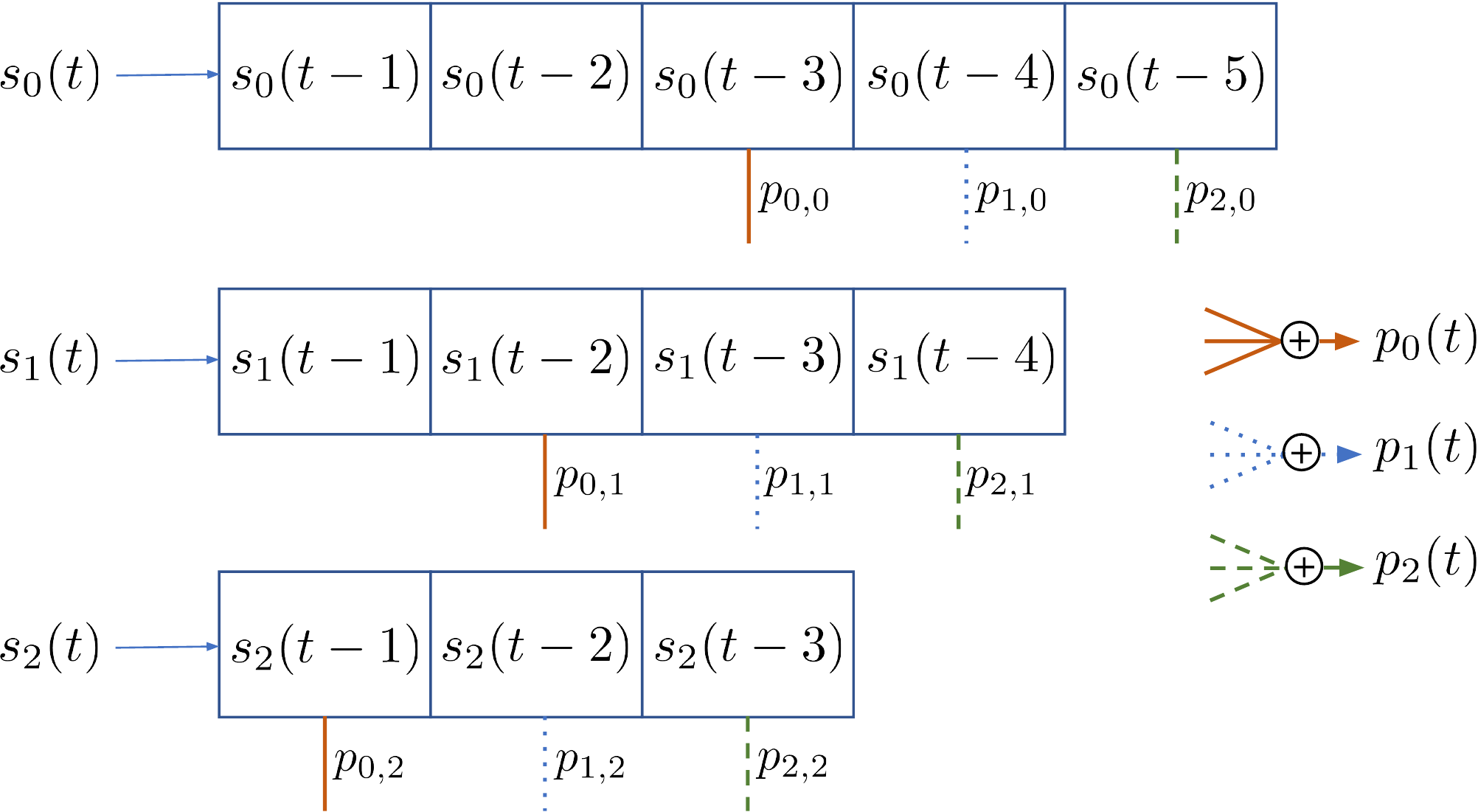}
			\caption{An alternate representation of the streaming code \cstr\ obtained by diagonally embedding a $[6,3]$ block code $\mathcal{C}$ (see Fig. \ref{fig:diag_interleaving_example}) using shift registers, to more clearly bring out its convolutional nature. Here $p_0(t)=p_{0,0}s_0(t-3)+p_{0,1}s_1(t-2)+p_{0,2}s_2(t-1)$ and so on.}
			\label{fig:diag_interleaving_example_2}
		\end{figure}
		
In this paper, we follow the approach of building \cstr\ via diagonal embedding a block code $\mathcal{C}$. Thus our aim is to construct, using the diagonal embedding technique, a streaming code \cstr\ that can recover from any erasure pattern permissible under the DC-SW channel model. 
In the following, we discuss the requirements imposed on the block code $\mathcal{C}$.


\begin{table}  
	\begin{center}
		\begin{tabular}{||c|c|c|c|c||} \hhline{#=|=|=|=|=#}
			Symbols to be recovered & Past known symbols & Erasure pattern & Symbols available & Unavailable symbols \\ 
			& & & (to the decoder) & (due to delay constraint) \\  \hhline{#=|=|=|=|=#}
			$s_0(5)$ & - & $s_0(5),s_1(6),s_2(7)$ & $p_0(8)$ & $p_1(9),p_2(10)$ 
			\\ \hline 
			$s_1(5)$ & $s_0(4)$  & $s_1(5),s_2(6),p_0(7)$ & $s_0(4),p_1(8)$ & $p_2(9)$ 
			\\ \hline 
			$s_2(5)$ & $s_0(3),s_1(4)$ & $s_2(5),p_0(6),p_1(7)$ & $s_0(3),s_1(4),p_2(8)$ & - 
			\\ \hline 
			$p_0(5)$ & $s_0(2),s_1(3),s_2(4)$ & $p_0(5),p_1(6),p_2(7) $ & $s_0(2),s_1(3),s_2(4)$ & - 
			\\ \hline  
			$p_1(5)$ & $s_0(1),s_1(2),s_2(3),p_0(4)$ & $p_1(5),p_2(6) $ & $s_0(1),s_1(2),s_2(3),p_0(4)$ & - 
			\\ \hline 
			$p_2(5)$ & $s_0(0),s_1(1),s_2(2),p_0(3),p_1(4)$ & $p_2(5)$ & $s_0(0),s_1(1),s_2(2),p_0(3),p_1(4)$ & - 
			\\  \hhline{#=|=|=|=|=#} 
		\end{tabular}
	\end{center}
    \caption{Assume received packets $\underline{y}(5),\underline{y}(6),\underline{y}(7)$ are erased in Fig.~\ref{fig:diag_interleaving_example}, i.e., a burst erasure of length $3$.  Here we state requirements to recover $\underline{x}(5)$ with a delay of at most $\tau=3$, in terms of the symbols appearing in the streaming code \cstr.}
	\label{tab:c_str_rec_requirements}
\end{table}

\begin{table} 
	\begin{center}
		\begin{tabular}{||c|c|c|c|c||} \hhline{#=|=|=|=|=#}
			Symbols to be recovered & Past known symbols & Erasure pattern & Symbols available & Unavailable symbols \\ 
			& & & (to the decoder) & (due to delay constraint) \\ \hhline{#=|=|=|=|=#}
			$c_0$ & - & $c_0,c_1,c_2$ & $c_3$ & $c_4,c_5$ 
			\\ \hline 
			$c_1$ & $c_0$  & $c_1,c_2,c_3$ & $c_0,c_4$ & $c_5$ 
			\\ \hline 
			$c_2$ & $c_0,c_1$ & $c_2,c_3,c_4$ & $c_0,c_1,c_5$ & - 
			\\ \hline 
			$c_3$ & $c_0,c_1,c_2$ & $c_3,c_4,c_5$ & $c_0,c_1,c_2$ & - 
			\\ \hline  
			$c_4$ & $c_0,c_1,c_2,c_3$ & $c_4,c_5$ & $c_0,c_1,c_2,c_3$ & - 
			\\ \hline 
			$c_5$ & $c_0,c_1,c_2,c_3,c_4$ & $c_5$ & $c_0,c_1,c_2,c_3,c_4$ & - 
			\\\hhline{#=|=|=|=|=#}
		\end{tabular}
	\end{center}
    \caption{The requirements imposed on the code symbols of the diagonally-embedded block code ${\cal C}$ used to build the streaming code \cstr.}
	\label{tab:diag_embedding_rec_requirements}
\end{table}

		
		\begin{example}\normalfont
			Consider the parameter set $\{a=1,b=3, \tau=3,w=(\tau+1)=4\}$. Our interest is in constructing a code \cstr\ which can recover from all the erasure patterns permitted by the DC-SW channel model given these values of the parameters. Consider an $[n=6,k=3]$ block code $\mathcal{C}$  which is diagonally embedded to obtain \cstr. Note that the rate $R$ of the streaming code \cstr\ is always equal to the rate of the block code used for diagonal embedding. In the present case, the rate $R=0.5$, which meets the \eqref{eq:rate_upper_bound} since here, $\delta=(b-a)=1$ and hence: 
			\bean
			R \leq \frac{(\tau_\text{eff}-a+1)}{(\tau_\text{eff}+\delta+1)} \ = \ \frac{3}{6} \ = \ \frac{1}{2}.
			\eean
			In Fig. \ref{fig:diag_interleaving_example} which corresponds to diagonally embedding a $[6,3]$ code, assume coded packets  $\underline{x}(5),\underline{x}(6), \underline{x}(7)$ are erased, i.e., a burst erasure of length $b=3$ starting at time $5$. Consider the recovery of  $\underline{x}(5)$ with a delay of at most $\tau=3$. In Table \ref{tab:c_str_rec_requirements}, we list down various requirements to recover $\underline{x}(5)$, in terms of the symbols appearing in the streaming code $\cstr$ formed via diagonally embedding $\mathcal{C}$. Let $\underline{c}^T=(c_0\ c_1\ \ldots\ c_5)$ denote an arbitrary codeword in $\mathcal{C}$. In Table \ref{tab:diag_embedding_rec_requirements}, we translate the requirements  presented in Table \ref{tab:c_str_rec_requirements} to that on the code symbols $\{c_i\}_{i=0}^5$. To summarize, we have the condition on $\mathcal{C}$ that for all codewords $\underline{c}^T\in \mathcal{C}$, $c_i$ must be recoverable from $\{c_j:j\in[0:i-1]\cup [i+3:\min\{5,i+3\}]\}$. Note that here we considered the case where coded packet is erased as part of a burst erasure (i.e., condition J1 in Section \ref{sec:streaming_equivalent_conditions}). An analogous condition can be imposed on the block code $\mathcal{C}$ even for the case of random erasures (corresponds to condition J2 in Section \ref{sec:streaming_equivalent_conditions}). We formally summarize in the following subsection the requirements on $\mathcal{C}$ so that the streaming code \cstr\ ( formed via diagonally embedding $\mathcal{C}$) can recover any coded packet $\underline{x}(t)$  from a burst erasure of length  $b$ or $a$ random erasures, with a delay constraint of $\tau$.
		
%
%
		\end{example}
		\subsection{Requirements on the Block Code} \label{sec:bc_req} 
		
		{\it Case (i)}: Let $t\in[0:n-2-\tau]$ denote an erased coordinate. Owing to the delay constraint $\tau$, all the coordinates in $[t+\tau+1:n-1]$ are unavailable to the delay-constrained decoder irrespective of whether some of these coordinates are erased or not. For any erased coordinate $i<t$, as the $i$-th coordinate should be decodable by accessing code symbols up to the coordinate $(i+\tau)<(t+\tau)$. Hence during the decoding of $c_t$, all the symbols $c_0,\ldots,c_{t-1}$ can be assumed to be known. In summary, we have:
		\bean
		\underbrace{c_0,\cdots,c_{t-1} }_{\text{  known  }}, \ \underbrace{c_t}_{\text{  symbol to be recovered  }}, \underbrace{c_{t+1},\cdots,c_{t+\tau} }_{\text{ all the non-erased symbols are  accessible  }},
		\underbrace{c_{t+\tau+1},\cdots,c_{n-1} }_{\text{  inaccessible symbols, beyond delay constraint  }}.
		\eean
		
		Let $\mathcal{K}$ denote the set of coordinates $[0:t-1]$ and $\mathcal{U}$ denote the set of coordinates $[t+\tau+1:n-1]$.
		Thus one is faced with the task of decoding the code symbol $c_t$ when the code symbols $c_i, i \in \mathcal{K}$ are known, $c_i, i \in \mathcal{U}$ are inaccessible and some of the code symbols among $c_i,i\in[t+1:t+\tau]$ are possibly erased. Let $\mathcal{C}_{\mathcal{K},\mathcal{U}}$ denote the code obtained from \calc\ by: 
		\bit
		\item shortening on the coordinates in $\mathcal{K}$ and 
		\item puncturing on coordinates in $\mathcal{U}$.
		\eit 
		Let $H$ be the p-c matrix of \calc\ and set:
		\bean
		\underline{s} & = & - \sum_{j \in \mathcal{K}} c_j \underline{h}_{j}.
		\eean
		Then in order to recover $c_t$, we are faced with decoding the coset of the code $\mathcal{C}_{\mathcal{K},\mathcal{U}}$ corresponding to syndrome $\underline{s}$, when some of the code symbols from the set $\{c_{t+1},\ldots,c_{t+\tau}\}$ are erased. 
		The p-c matrix of $\mathcal{C}_{\mathcal{K},\mathcal{U}}$  can be obtained by first creating a matrix $H_1$ obtained by deleting the columns of $H$ corresponding to coordinates in $\mathcal{K}$.  Then one identifies a basis for the subspace of the rowspace of $H_1$ consisting of all the vectors having zeros in the coordinates making up $\mathcal{U}$ (see Lemma \ref{lem:punc_short_duality}). The required p-c matrix of $\mathcal{C}_{\mathcal{K},\mathcal{U}}$ will be formed using the basis vectors as its rows, after removing the trivial zero coordinates corresponding to $\mathcal{U}$.
		
		%
		%
		%
		%
		%
		
		{\it Case (ii)}:
		Let $t\in[n-1-\tau:n-1]$ denote an erased coordinate. Here one is faced with the task of decoding the code symbol $c_t$ when the code symbols $c_i, i \in \mathcal{K}$ are known and some of the symbols among $c_i,i\in [t+1:n-1]$ are possibly erased. The scenario is summarized as follows:
		
		\bean
		\underbrace{c_0,\cdots,c_{t-1} }_{\text{  known  }}, \ \underbrace{c_t}_{\text{  symbol to be recovered  }}, \underbrace{c_{t+1},\cdots,c_{n-1} }_{\text{all the non-erased symbols are  accessible }}.
		\eean

		Let $\mathcal{C}^{\mathcal{K}^c}$ denote the code obtained from \calc\ by shortening on the coordinates in $\mathcal{K}$. Let $H$ be the p-c matrix of \calc\ and set
		\bean
		\underline{s} & = & - \sum_{j \in \mathcal{K}} c_j \underline{h}_{j}.
		\eean
	In order to recover $c_t$, we are faced with decoding the coset of the code $\mathcal{C}^{\mathcal{K}^c}$ corresponding to the syndrome $\underline{s}$. 
		The p-c matrix of $\mathcal{C}^{\mathcal{K}^c}$ here can be obtained by simply deleting the columns of $H$ corresponding to the coordinates in $\mathcal{K}$.
		
	\subsection{Consequent Requirements on the P-C Matrix $H$} \label{sec:four_conditions} 
		
		Let $H$ be the p-c matrix of the code \calc.  For $0 \leq \ell \leq (n-2-\tau)$, set
		\bean
		H^{(\ell)} & = & \left[\underline{h}^{(\ell)}_{0}\ \ldots\  \underline{h}^{(\ell)}_{\ell+\tau}  \right],
		\eean
		be a p-c matrix for the punctured code $\mathcal{C}|_{[0:\ell+\tau]}$. Recall that $\mathcal{C}|_{[0:\ell+\tau]}$ is the restriction of \calc\ to the coordinates in $[0:\ell+\tau]$.  We will refer to \hsupl\ as a {\em shortened p-c matrix} as the dual of a punctured code is a shortened code and the rowspace of \hsupl\ is precisely the dual code. 
		Then applying Lemma \ref{lem:recovery_parity_col_span} and based on the observations in Section~\ref{sec:bc_req}, we have that the following conditions need to be satisfied by the p-c matrices $\{H^{(\ell)}\}$ and $H$: 
		\bean
		H, \ \ \{H^{(\ell)} \mid 0 \leq \ell \leq n-2-\tau\}.
		\eean 
		\ben
		\item {\bf Condition B1} For $0\leq \ell \leq (n-2-\tau) $, the $\ell$-th column, $\underline{h}^{(\ell)}_{\ell}$ of $H^{(\ell)}$ should be linearly independent of the set of $(b-1)$ columns  
		\bean
		\left\{ \underline{h}^{(\ell)}_{j} \mid \ell+1 \leq j \leq \ell+b-1 \right\}.
		\eean
		\item {\bf Condition B2} For $(n-1-\tau) \leq \ell \leq (n-b)$, the set
		\bean
		\left\{ \underline{h}_{j} \mid \ell \leq j \leq \ell +b-1 \right\}, 
		\eean
		of columns of $H$ should be linearly independent. 
		\item {\bf Condition R1}   
		For $0\leq \ell \leq (n-2-\tau) $, the column $\underline{h}^{(\ell)}_{\ell}$ of $H^{(\ell)}$ should be linearly independent of any set of $(a-1)$ columns drawn from the set 
		\bean
		\left\{\underline{h}^{(\ell)}_{j} \mid \ell+1 \leq j \leq \ell+\tau \right\}. 
		\eean
		\item {\bf Condition R2} Any set of $a$ columns from the set 
		\bean
		\left\{ \underline{h}_{j} \mid n-1-\tau \leq j \leq n-1 \right\} 
		\eean
		should be linearly independent. 
		\een

		\begin{remark}\normalfont\label{rem:dim_length_block_code}
			In this paper, we aim to construct streaming codes for the DC-SW channel whose rates meet \eqref{eq:rate_upper_bound} with equality. Hence throughout the remainder of this paper, we choose the dimension $k$ of the block code $\mathcal{C}$ to be diagonally embedded, as $(\tau-a+1)$ and code length $n$ of $\mathcal{C}$ to be $(\tau+\delta+1)$. This way, the resultant convolutional code after diagonal embedding, \cstr\ has a rate $R$ which meets the bound \eqref{eq:rate_upper_bound}. We note in passing that if $(\tau-a+1,b)=1$, then such a construction will also have least possible coded packet size $n$, i.e., will also be {\it packet-size-optimal}.
		\end{remark}
		
        \begin{remark}[Systematicity]\normalfont As we choose dimension $k$ as $(\tau-a+1)$ and code length $n$ as $(\tau+\delta+1)$, we have $(n-k)=b$. Condition B2 ensures that, for the p-c matrix $H$ of size $b\times (\tau+\delta+1)$, the last $b$ columns are independent. This implies that the code \calc\ has a generator matrix of the form $[I_k\ P]$, where the first $k$ columns are independent. This aligns with the description of the diagonal-embedding-based encoder we illustrated in Fig. \ref{fig:diag_interleaving_example}.
        \end{remark}

	In Fig. \ref{fig:our_approach}, we summarize the approach used in this paper to construct streaming codes for the DC-SW channel model.

		\begin{figure}[!htb]
			\centering
			\captionsetup{justification=centering}
			\includegraphics[scale=0.5]{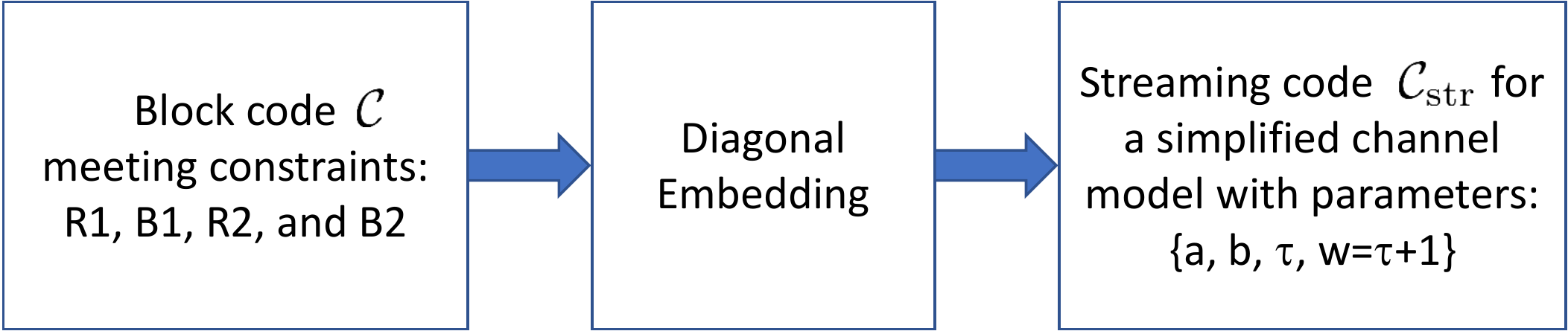}
			\caption{The approach used in this paper for construction of streaming codes for the DC-SW channel model.}
			\label{fig:our_approach}
		\end{figure}
		

	\section{Quadratic Field-Size Construction A: for all Parameter Sets}\label{sec:construction_A}
 
In the present section and the next three sections to follow, we will provide four different constructions for the diagonally-embedded block code \calc\ underlying the streaming code \cstr.  We will simply refer to \calc\ as the block code. As the parameter $w=(\tau+1)$ is redundant, we consider the reduced parameter set \params\ for \calc. As mentioned in Remark \ref{rem:dim_length_block_code}, our approach towards constructing a rate-optimal streaming code \cstr\ whose rate matches with the upper bound in \eqref{eq:rate_upper_bound}, is to construct a block code $\mathcal{C}$ with dimension, $k=(\tau-a+1)$ and code length, $n=(\tau+1+\delta)$. Under these choice of values for parameters $k,n$, since $(n-k)=(\delta+a)=b$, the  p-c matrix will always be of size $(b\times n)$.

The construction in the present section yields block codes for all parameters $a,b,\tau$, and employs a field-size that is $O(\tau^2)$.  The required field-size in this construction, which we will refer to as Construction A, is $q^2$ where $q\geq(\tau+1)$. We describe the construction by successively refining or updating in four steps, our description of the p-c matrix $H$ of the code $\mathcal{C}$ over $\mathbb{F}_{q^2}$. We initialize $H$ to be the $(b\times n)$ all-zero matrix.
\bit 
\item {\it Step-a}: Set the submatrix $H(\delta:b-1,0:\tau)$ of $H$ to be the generator matrix of an MDS code over $\fq\subseteq\fqs$ that is of the form $[I_a\ C]$, where $C$ is an $a\times(\tau+1-a)$ Cauchy-like matrix.  As $a+(\tau+1-a)=(\tau+1)\leq q$, there exists a Cauchy matrix of size $a\times (\tau+1-a)$ \cite[Ch.~11]{macwilliamssloane}, which may be chosen as $C$.
\item {\it Step-b}: Set $H(0:\delta-1,0:\delta-1)= \alpha I_{\delta}$, where $\alpha\in \mathbb{F}_{q^2}\setminus \mathbb{F}_q$.
\item {\it Step-c}: For $i \in[0:\delta-1]$, $j \in [b+i:\tau+i]$, set $H(i,j)=v_{i,j}$.  Here the \vij\ are variables which will be assigned values drawn from \fq.
\item {\it Step-d}: Additionally, if $\delta >0$, set $H(\delta,\tau+\delta)=1$ (any non-zero value in place of $1$ would also work).
\eit

In Fig. \ref{fig:construction_A_example}, we illustrate Construction A for parameters \params\ $=\{5,8,12\}$.

\begin{figure}[h!]
	\centering
	\captionsetup{justification=centering}
	\includegraphics[scale=0.55]{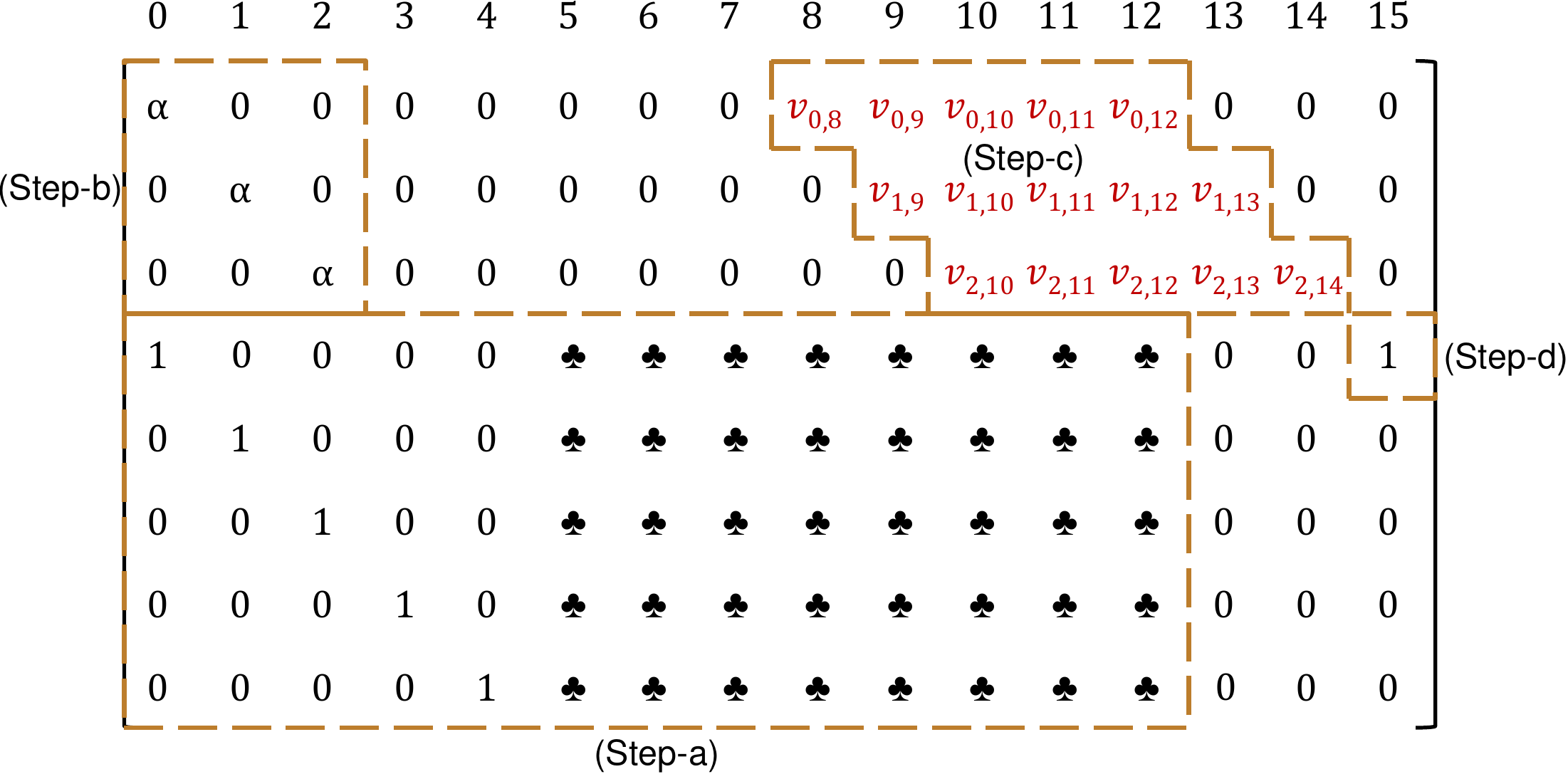}
	\caption{The p-c matrix of an example code constructed using Construction A for the parameter set $\{a=5,b=8,\tau=12\}$. Here we set $q=2^4$ and $\alpha \in \mathbb{F}_{q^2}\setminus\mathbb{F}_{q}$. The submatrix $H(3:7,5:12)$ (elements denoted by $\clubsuit$) is set to be a $(5\times 8)$ Cauchy-like matrix $C$ whose elements belong to $\mathbb{F}_{q}$.  The \vij\ are variables which will also be assigned values drawn from the subfield $\mathbb{F}_{q}$. }
	\label{fig:construction_A_example}
\end{figure}

\begin{remark}\normalfont
	For the trivial case $b=a$ and hence $\delta=0$, the p-c matrix $H$ of $\calc$ takes the form $[I_a\ C]$. Thus $\calc$ is a $[\tau+1,\tau-a+1]$ MDS code over \fq. One can easily show that such an MDS code satisfies properties B1, B2, R1, R2 and therefore, results in a rate-optimal streaming code via diagonal embedding. Hence throughout the paper, we assume $\delta>0$.
\end{remark}

\begin{thm}\label{thm:constr_A}
	If $q\geq (\tau+1)$, there exists an assignment of the variables $\{v_{i,j}\}$ over $\mathbb{F}_q$ such that the code $\mathcal{C}$ over $\mathbb{F}_{q^2}$ having p-c matrix $H$, when employed as the diagonally-embedded block code, will yield a rate-optimal streaming code \cstr.
\end{thm}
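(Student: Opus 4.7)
The plan is to apply Combinatorial Nullstellensatz (Lemma~\ref{lem:comb_null}) to choose the free variables $\{v_{i,j}\}$. By Section~\ref{sec:four_conditions}, conditions B1, B2, R1 and R2 are each equivalent to the non-vanishing of the determinant of a specific submatrix either of $H$ or of a shortened p-c matrix $H^{(\ell)}$ derived from $H$. I would form the product $f(\{v_{i,j}\})\in \mathbb{F}_{q^2}[\{v_{i,j}\}]$ of all these determinants; any assignment with $f\neq 0$ then yields an $H$ simultaneously satisfying all four conditions and, via the diagonal-embedding reduction of Section~\ref{subsec:Diag_Embedding}, produces the desired rate-optimal streaming code.

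The main obstacle is to show that $f$ is not identically zero, equivalently that each factor is a non-zero polynomial in the $v_{i,j}$. Determinants whose underlying submatrix lies entirely within the bottom $a$ rows of $H$ are constants, and their non-vanishing follows immediately from the MDS property of $[I_a \mid C]$ via Lemma~\ref{lem:shortened_mds}. For determinants whose submatrix intersects the top $\delta$ rows, I would exploit the field separation provided by the choice $\alpha\in \mathbb{F}_{q^2}\setminus \mathbb{F}_q$: expanding along the $\alpha I_\delta$ block and viewing the determinant as a polynomial in $\alpha$, it suffices to show that some coefficient (most naturally, the coefficient of the highest $\alpha$-power achievable given the selected columns) is a non-zero element of $\mathbb{F}_q[\{v_{i,j}\}]$. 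That coefficient is itself a determinant of a residual submatrix formed from the bottom-$a$ MDS rows together with possibly a few variable entries, and I expect the non-vanishing to follow by exhibiting an explicit assignment of the involved $v_{i,j}$ that extends the MDS structure and then invoking Lemma~\ref{lem:shortened_mds}. A case analysis on how many top rows participate and where the selected columns lie (in the $I_a$ block, the Cauchy block, or the trailing variable columns) will be required, but each case reduces to an MDS-independence assertion.

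Finally, I would bound the degrees. Each $v_{i,j}$ sits in one entry of $H$ and hence contributes at most degree $1$ to any individual determinant; a careful count based on the index ranges of conditions B1, B2, R1, R2 in Section~\ref{sec:four_conditions} gives that the total degree of each $v_{i,j}$ in $f$ is at most $\tau$. Writing $f = f_0 + \alpha f_1$ with $f_0, f_1 \in \mathbb{F}_q[\{v_{i,j}\}]$, at least one of $f_0, f_1$ is non-zero, and its per-variable degrees remain bounded by $\tau$. The hypothesis $q \geq \tau + 1$ then permits application of Lemma~\ref{lem:comb_null} to produce an assignment in $\mathbb{F}_q$ making that component, and hence $f$ itself, non-zero, completing the proof.
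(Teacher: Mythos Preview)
Your plan has a genuine gap in its treatment of conditions R1 and R2. You assert that each of B1, B2, R1, R2 ``is equivalent to the non-vanishing of the determinant of a specific submatrix,'' but this is false for R1 and R2: for each $\ell$, condition R1 requires $\underline{h}^{(\ell)}_\ell$ to lie outside the span of \emph{every} $(a-1)$-subset of $\{\underline{h}^{(\ell)}_j : \ell+1\le j\le \ell+\tau\}$, which is $\binom{\tau}{a-1}$ separate rank conditions, not one. Likewise R2 is a statement about all $a$-subsets of a set of $\tau+1$ columns. Even setting aside the issue that ``rank $=a$'' for a $b\times a$ submatrix is an OR of minors rather than a single determinant, once you multiply all these factors into $f$ the degree of each $v_{i,j}$ is no longer bounded by anything close to $\tau$; it is on the order of $\binom{\tau}{a-1}$. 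Combinatorial Nullstellensatz would then demand an exponentially large field, destroying the $O(\tau^2)$ conclusion.

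The paper avoids this entirely by proving that B1, R1 and R2 hold for \emph{every} assignment of the $v_{i,j}$ in $\mathbb{F}_q$, so none of them enter the polynomial $f$. For B1 this is immediate from the shape of row~$\ell$. For R1 the paper uses the field separation not to analyse determinant polynomials but to derive a direct contradiction: restricting a putative dependence to the bottom $(a-1)$ Cauchy rows forces all coefficients into $\mathbb{F}_q$, whereupon the row carrying $\alpha$ gives $\alpha\in\mathbb{F}_q$, impossible. R2 is handled by a short structural argument using the MDS bottom block and the already-established B2. Only B2 then requires choosing the $v_{i,j}$: the paper shows each $\det(P_\ell)$, $\delta\le\ell\le\tau-a+1$, is a non-zero polynomial by exhibiting an explicit diagonal-band assignment making it non-zero, and applies Lemma~\ref{lem:comb_null} to $\prod_\ell \det(P_\ell)$ with per-variable degree at most $\tau-b+2<q$. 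Your proposal is missing precisely this decoupling; without it the degree bound, and hence the field-size claim, does not go through.
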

\begin{proof}  Clearly, it suffices to show that there exists an assignment of the variables \vij\ over $\mathbb{F}_q$ so that the p-c matrix $H$ satisfies the four conditions laid out in Section~\ref{sec:four_conditions}. The proof is deferred to Appendix \ref{app:proof_constrn_a}. 
	
\end{proof}

\begin{remark} [$O(\tau)$ Field-Size Construction for $\delta=1$] \normalfont
	If $\delta=1$, the property $\alpha\in\fqs\setminus\fq$ will not be used in the proof of Theorem \ref{thm:constr_A}. Hence for the case of $(\delta=1)$, $\alpha$ may be chosen to be a non-zero element drawn from $\mathbb{F}_q$. This results in a code \calc\ over $\mathbb{F}_q$.
\end{remark}
	

		\section{Linear Field-Size Construction B:  for $\delta\geq a$ and $(\tau+1)\geq b+\delta$} \label{sec:construction_B}
	
	In this section, we provide a second construction (which we will refer to as Construction B), which is explicit and furthermore, requires a field-size $q$ that is linear in $\tau$.  More specifically, the construction works for all parameters \params\ satisfying:
	\bit
	\item  $(\tau+1)\geq(b+\delta)$, $\delta\geq a$ and $q\geq(\tau+1)$. 
	\eit
	As in the other three constructions, we assume that the block code length, $n= (\tau+1+\delta)$ and dimension, $k=(\tau-a+1)$.   Again, we describe the construction by successively refining in four steps, our description of the p-c matrix $H$ of the code $\mathcal{C}$ over $\mathbb{F}_q$. We initialize $H$ to be the $(b\times n)$ all-zero matrix. 
	\bit
	\item Step-a: Let \gmds\ be a ZB generator matrix corresponding to a $[(\tau+1),b]$ MDS code, say $\mathcal{C}_\text{\tiny MDS}$, over $\mathbb{F}_q$.  We update $H$ by setting: $H(:,0:\tau)=\gmds$. As $q\geq (\tau+1)$, the existence of such a matrix \gmds\ is guaranteed.
\item Step-b: Next, we update $H$ by setting $H(a:\delta,\tau+a:\tau+\delta)=H(a:\delta,a:\delta)$ (elements of the submatrix on the RHS are already defined in Step-a).
\item {\it Step-c}: For $1\leq j\leq (a-1)$, we set $H(b-j,\tau+j)=1$ (any non-zero value in place of $1$ would also work).
\item {\it Step-d}: In the final update, we replace $H(\delta:b-1,0:b-1)$ with an $(a\times b)$ Cauchy-like matrix $C$. As $(a+b)\leq (\delta+b)\leq (\tau+1)\leq q$, we are guaranteed the existence of such a Cauchy-like matrix $C$.
\eit
	
	In Fig. \ref{fig:delta_geq_a_example_step1}--\ref{fig:delta_geq_a_example_step4}, we illustrate the four steps involved in the construction of $H$ for parameters \params\ $=\{3,8,14\}$.
	
	\begin{figure}[!htb]
		\centering
		\captionsetup{justification=centering}
		\includegraphics[scale=0.6]{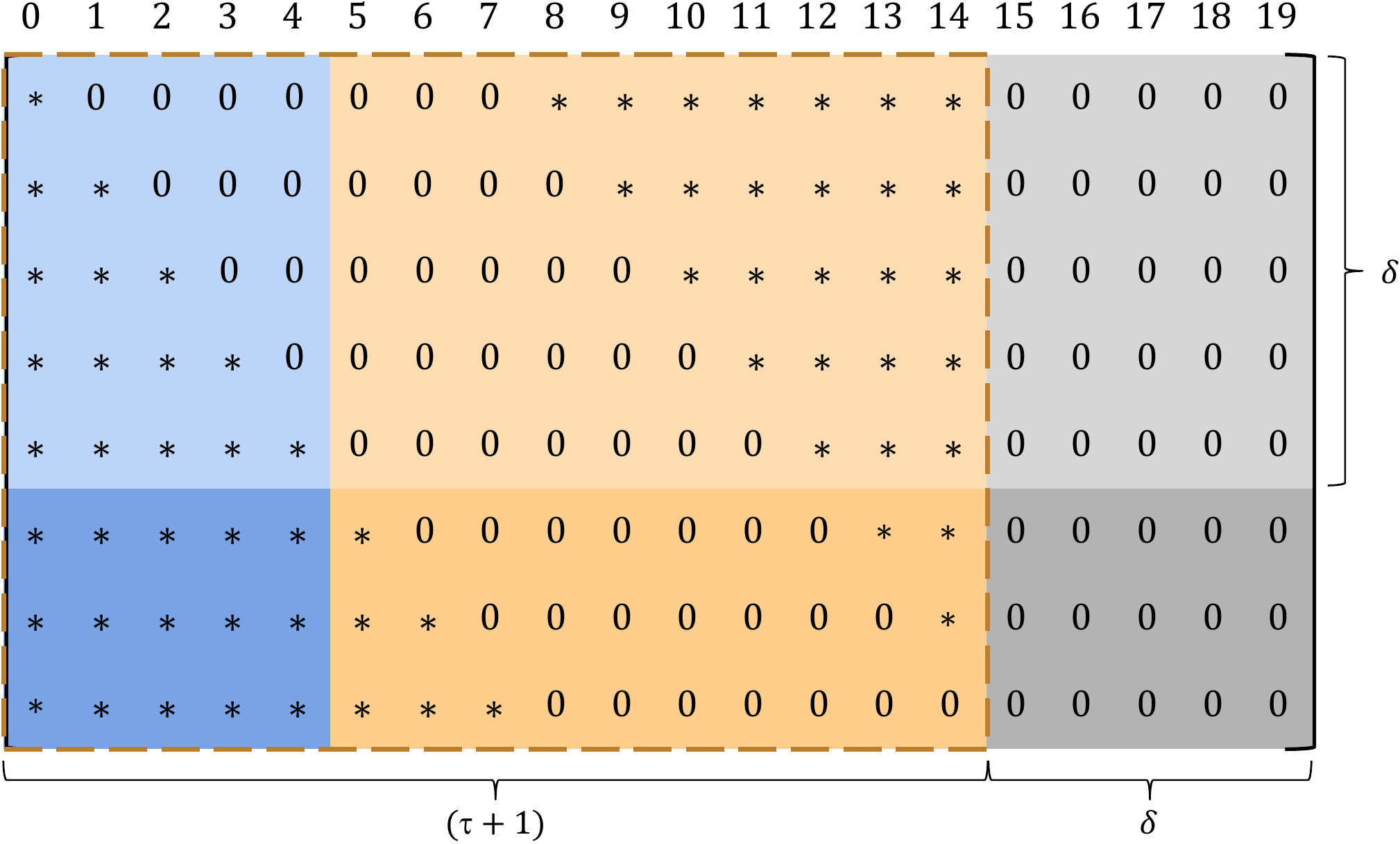}
		\caption{$H$ after Step-a for parameters \params\ $=\{3,8,14\}$. The submatrix $H(:,0:14)$ (demarcated by dashed lines) is a ZB generator matrix \gmds\ corresponding to a $[15,8]$ MDS code. }
		\label{fig:delta_geq_a_example_step1}
	\end{figure}

	\begin{figure}[!htb]
		\centering
		\captionsetup{justification=centering}
		\includegraphics[scale=0.6]{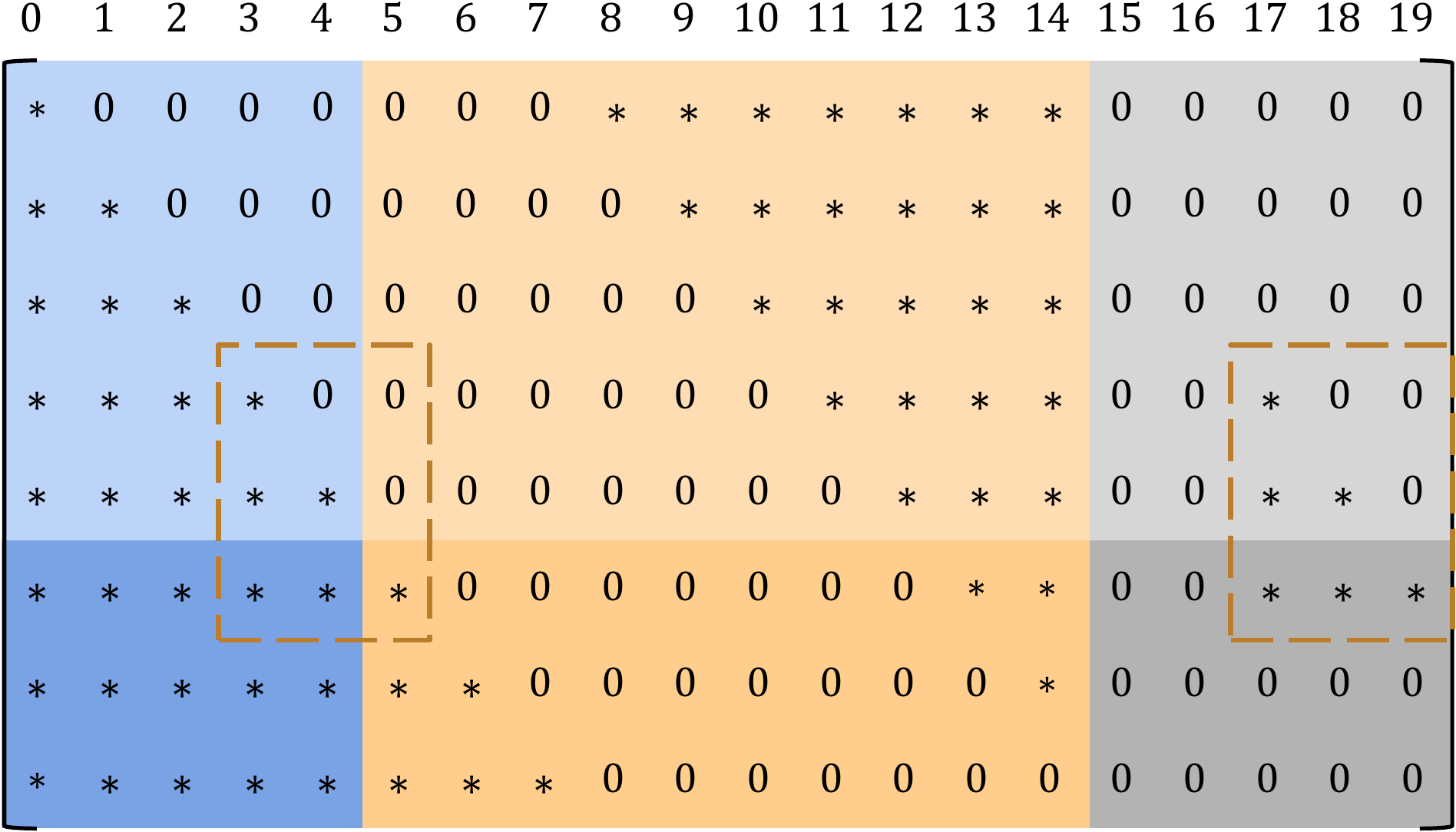}
		\caption{$H$ after Step-b for parameters \params\ $=\{3,8,14\}$. Here we replicate a portion of the p-c matrix.}
		\label{fig:delta_geq_a_example_step2}
	\end{figure}

	\begin{figure}[!htb]
		\centering
		\captionsetup{justification=centering}
		\includegraphics[scale=0.6]{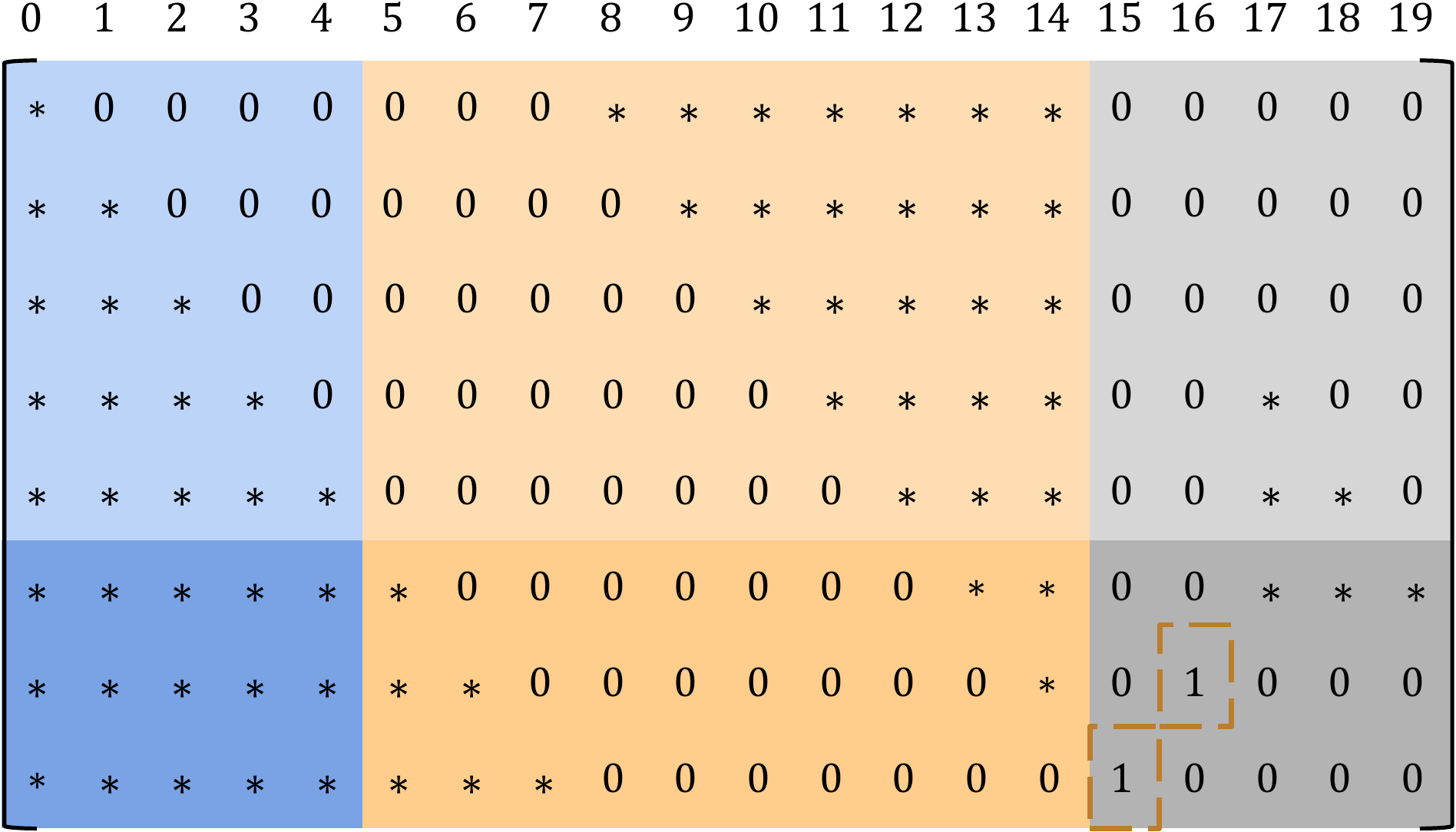}
		\caption{$H$ after Step-c for parameters \params\ $=\{3,8,14\}$. Here we replace some of the $0$'s with $1$'s. }
		\label{fig:delta_geq_a_example_step3}
	\end{figure}

	\begin{figure}[!htb]
		\centering
		\captionsetup{justification=centering}
		\includegraphics[scale=0.6]{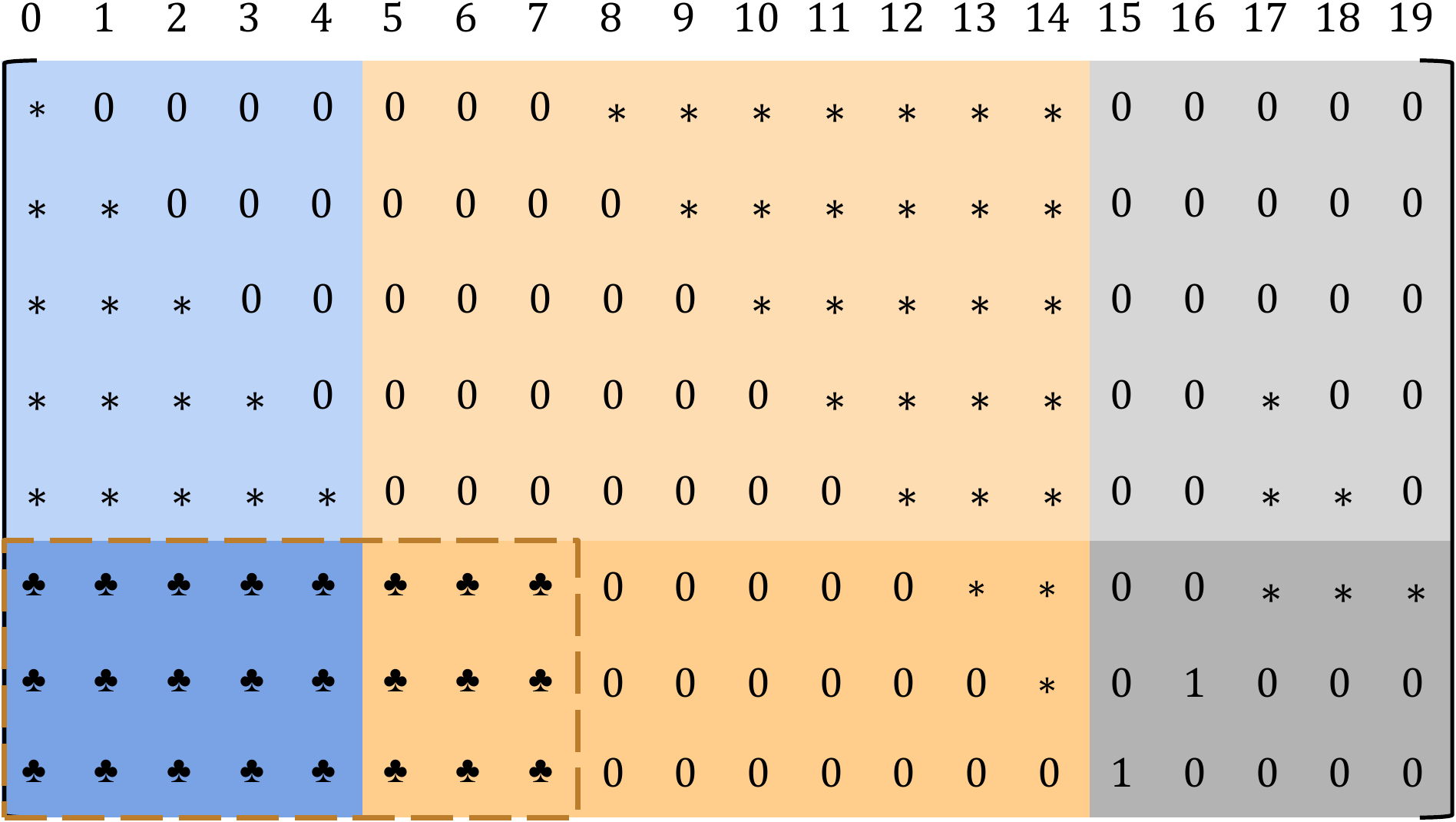}
		\caption{$H$ after Step-d for parameters \params\ $=\{3,8,14\}$. In Step-d, we set the demarcated section of $H$ to be a Cauchy-like matrix. Here $\clubsuit$'s denote elements of a Cauchy-like matrix.}
		\label{fig:delta_geq_a_example_step4}
	\end{figure}
	
		\begin{thm}\label{thm:constr_B}
		The code $\mathcal{C}$ over $\mathbb{F}_{q}$ having p-c matrix $H$ based on Construction B, when employed as the diagonally-embedded block code, will yield a rate-optimal streaming  code \cstr.	
	\end{thm}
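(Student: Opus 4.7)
Plan: I would verify that $H$ satisfies the four conditions B1, B2, R1, R2 of Section~\ref{sec:four_conditions}; by the reduction in Section~\ref{subsec:Diag_Embedding}, this suffices to show that the diagonal embedding of $\calc$ yields a streaming code that handles all DC-SW erasure patterns at rate $k/n = (\tau-a+1)/(\tau+\delta+1)$, matching the bound \eqref{eq:rate_upper_bound}.

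I would first tackle B2 and R2, which concern $H$ directly. For B2, the column windows $[\ell,\ell+b-1]$ with $\ell\in[\delta,\tau+1-a]$ straddle both the $G_{\text{\tiny MDS}}$ region $[0,\tau]$ and the post-$\tau$ region modified by Steps b and c. Inside $[0,\tau]$ the MDS property of $\gmds$ gives independence of any $b$ columns; the verification is that after the Step-d Cauchy-like replacement in the bottom $a$ rows and leftmost $b$ columns, and after the Step-b/Step-c modifications of the trailing columns, the relevant $b\times b$ submatrices remain non-singular. For R2, I would exploit the fact that in $H(:,\delta{:}\tau+\delta)$ the top $\delta$ rows come from $\gmds$ (with MDS-type consecutive-column independence on columns $[\delta,\tau]$), while the bottom $a$ rows combine the Cauchy-like entries with the anti-diagonal $1$'s from Step-c and the lower-triangular copies from Step-b; this combined structure forces any $a$ columns to be independent.

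The main obstacle lies in verifying B1 and R1, which require an explicit description of $H^{(\ell)}$ for $\ell\in[0,\delta-1]$. By Lemma~\ref{lem:punc_short_duality} the rowspace of $H^{(\ell)}$ equals the subspace of the rowspace of $H$ consisting of row combinations $\sum_i \lambda_i H(i,:)$ that vanish on columns $[\ell+\tau+1,n-1]$. The Step-c placements (a single $1$ at $(b-j,\tau+j)$ for $1\le j\le a-1$) instantly force $\lambda_{b-j}=0$ for each vanished Step-c column; and because $H(a{:}\delta,\tau+a{:}\tau+\delta)=H(a{:}\delta,a{:}\delta)$ is lower-triangular with non-zero diagonal (from the ZB structure of $\gmds$), vanishing on the trailing Step-b columns forces $\lambda_i=0$ for an appropriate suffix of $[a,\delta]$ via back-substitution from $j=\delta$ downward. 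A case split on whether $\ell<a-1$ or $\ell\ge a-1$ then shows that $H^{(\ell)}$ has exactly $a+\ell$ rows, obtained by restricting a specific subset of rows of $H$ to coordinates $[0,\ell+\tau]$.

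With $H^{(\ell)}$ in hand, B1 and R1 become column-independence statements on a matrix whose rows are either restrictions of rows of $\gmds$ (to which Lemma~\ref{lem:mds_consec_rows} applies, using the ZB zero-bands to identify consecutive runs) or restrictions of rows of the Cauchy-like block (to which Lemma~\ref{lem:shortened_mds} applies after identifying shared zero coordinates). The hardest sub-case I anticipate is R1 near $\ell=\delta-1$, where the surviving rows mix $\gmds$-rows and Cauchy-like rows and the competing $a-1$ columns may be placed anywhere in $[\ell+1,\ell+\tau]$; here the Cauchy-like property must ``absorb'' competing columns in $[0,b-1]$ while Lemma~\ref{lem:mds_consec_rows} handles competing columns in $[b,\tau]$, and one must check that the zero-band alignment of $\gmds$ reduces the relevant submatrix to a shortened MDS code to which Lemma~\ref{lem:shortened_mds} can be applied.
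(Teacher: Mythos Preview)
Your framework is right, and your back-substitution computation of $H^{(\ell)}$ is correct: for $\ell\in[a,\delta-1]$ the rows of $H$ whose restrictions span the rowspace of $H^{(\ell)}$ are exactly $[0{:}\ell]\cup[\delta{+}1{:}b{-}1]$, and for $\ell\in[0,a{-}1]$ they are $[0{:}a{-}1]\cup[b{-}\ell{:}b{-}1]$. This matches the paper's sets $R_\ell$. For B1 the paper does something much simpler than what you propose: row $\ell$ alone has the form $[\triangle\cdots\triangle\ \ast\ \underbrace{0\cdots0}_{b-1}\ \triangle\cdots\triangle\ \underbrace{0\cdots0}_{\delta-\ell}]$, so that single row already certifies B1.

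There is, however, a genuine gap in your plan for R1 when $\ell\in[a,\delta-1]$. Your proposed mechanism (``Cauchy-like rows absorb competing columns in $[0,b{-}1]$, Lemma~\ref{lem:mds_consec_rows} handles $[b,\tau]$'') cannot close the argument as stated, for two structural reasons. First, in the top block $U_\ell=H([0{:}a{-}1],0{:}\ell{+}\tau)$ the column $\ell$ itself is the \emph{zero} column (since $\ell\in[i{+}1,i{+}b{-}1]$ for every $i\in[0,a{-}1]$), so Lemma~\ref{lem:mds_consec_rows} applied to $U_\ell$ tells you nothing about $\underline{h}^{(\ell)}_\ell$ directly; it only restricts a hypothetical bad set $A_\ell$ to lie inside the zero-column set $[\ell{:}b{-}1]\cup[\tau{+}1{:}\ell{+}\tau]$. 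Second, row $\delta$ is \emph{not} available in $H^{(\ell)}$ (your own back-substitution kills it), so only $a{-}1$ Cauchy-like rows $L_\ell=H([\delta{+}1{:}b{-}1],0{:}\ell{+}\tau)$ remain. On those $a{-}1$ rows, the columns indexed by $[\ell{:}b{-}1]\cup[\tau{+}1{:}\tau{+}a{-}1]$ form an $(a{-}1)$-row MDS matrix $[C'\ I_{a-1}]$, and nothing prevents $a$ of its columns from being linearly dependent. So $U_\ell$ and $L_\ell$ together are insufficient.

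The missing ingredient, which is the crux of the paper's argument, is the single ``middle'' row $M_\ell=H(\ell,0{:}\ell{+}\tau)$. One checks that $M_{\ell,\ell}\neq 0$ while $M_{\ell,i}=0$ for every $i\in[\ell{+}1{:}b{-}1]\cup[\tau{+}1{:}\tau{+}a{-}1]$; hence any dependency must involve a column from $[\tau{+}a{:}\ell{+}\tau]$. But those columns are zero in $L_\ell$, which then forces a dependency among $\le a{-}1$ columns of the $(a{-}1)$-row MDS matrix $[C'\ I_{a-1}]$, a contradiction. The argument is a three-layer row partition $(U_\ell,M_\ell,L_\ell)$ with successive constraints on $A_\ell$, not a column partition.

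Separately, your B2 paragraph is not yet a plan: you have only restated the goal. The paper needs a five-case split depending on where $[\ell{:}\ell{+}b{-}1]$ sits relative to the three column ranges $[\delta{:}\tau]$, $[\tau{+}1{:}\tau{+}a{-}1]$, $[\tau{+}a{:}\tau{+}\delta]$ and whether $\ell\le b$; in each case the $b\times b$ block is shown to be (after a column permutation or column operations) block-triangular with diagonal blocks that are either square Cauchy-like submatrices, nonzero-column submatrices of consecutive ZB rows (Lemma~\ref{lem:mds_consec_rows}), or standard-basis blocks from Step-c. The R2 argument likewise proceeds by contradiction, using the three MDS submatrices $H([0{:}a{-}1],[b{:}\tau])$, $H([\delta{:}b{-}1],[\delta{:}b{-}1]\cup[\tau{+}1{:}\tau{+}a{-}1])$, and $H([\delta{+}1{:}b{-}1],[\delta{:}b{-}1]\cup[\tau{+}1{:}\tau{+}a{-}1])$ to successively pin down where a bad $A_\ell$ could live.
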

	\begin{proof}
		Clearly, it suffices to show that $H$ meets all the four conditions; B1, B2, R1 and R2 described in Section \ref{sec:four_conditions}. The proof is deferred to Appendix \ref{app:proof_constrn_b}.
    \end{proof}

		
		\section{Construction C: Interleaving MDS Codes }\label{sec:construction_C}
		
		In this section, we present a second linear field-size construction, Construction C, obtained by simply interleaving MDS codes.  The code requires parameters \params\  to satisfy: 
		\bit
		\item $a\mid b \mid(\tau-a+1)$.
		\eit
		As always, $n= (\tau+\delta+1)$, and thus we have $b\mid n$ as well. Let $\alpha\triangleq \frac{b}{a}$, $\beta\triangleq \frac{n}{b}$. In terms of $\alpha,\beta$ and $a$, one can express $\tau$ and $\delta$ as: $\tau=(\beta-1)\alpha a+a-1$ and $\delta=(\alpha-1)a$, respectively. The construction is over a field \fq, of size $q$ satisfying: $q\geq a\beta$.   The description below of Construction C is, again, in terms of the p-c matrix $H$: 
		
	 \bit
	 \item  Let $G_\text{\tiny MDS}$ denote the generator matrix  of an $[a\beta,a]$ MDS code \cmds\ over $\mathbb{F}_q$.  Clearly, as $q\geq a\beta$, the required generator matrix $G_\text{\tiny MDS}$ of the MDS code can be found.
	 
	 \item  For $0\leq i\leq (\beta-1)$, group the $a$ adjacent columns $[ia:(i+1)a-1]$ of $G_\text{\tiny MDS}$ to form the matrix $G^{(i)}_\text{\tiny MDS}$.  Thus we have $G_\text{\tiny MDS}=[G^{(0)}_\text{\tiny MDS} \ G^{(1)}_\text{\tiny MDS}\ \ldots \ G^{(\beta-1)}_\text{\tiny MDS}]$. Here, without loss of generality, one can assume that  $G^{(\beta-1)}_\text{\tiny MDS}=I_a$. 
	 
	 \item The p-c matrix $H$ of the code is then built up of the matrices $\left\{G^{(i)}_\text{\tiny MDS} \right\}$ as shown in Fig. \ref{fig:interleaved_mds_code}. 
	 
	 	\begin{figure}[!htb]
	 	\centering
	 	\captionsetup{justification=centering}
	 	\includegraphics[scale=0.8]{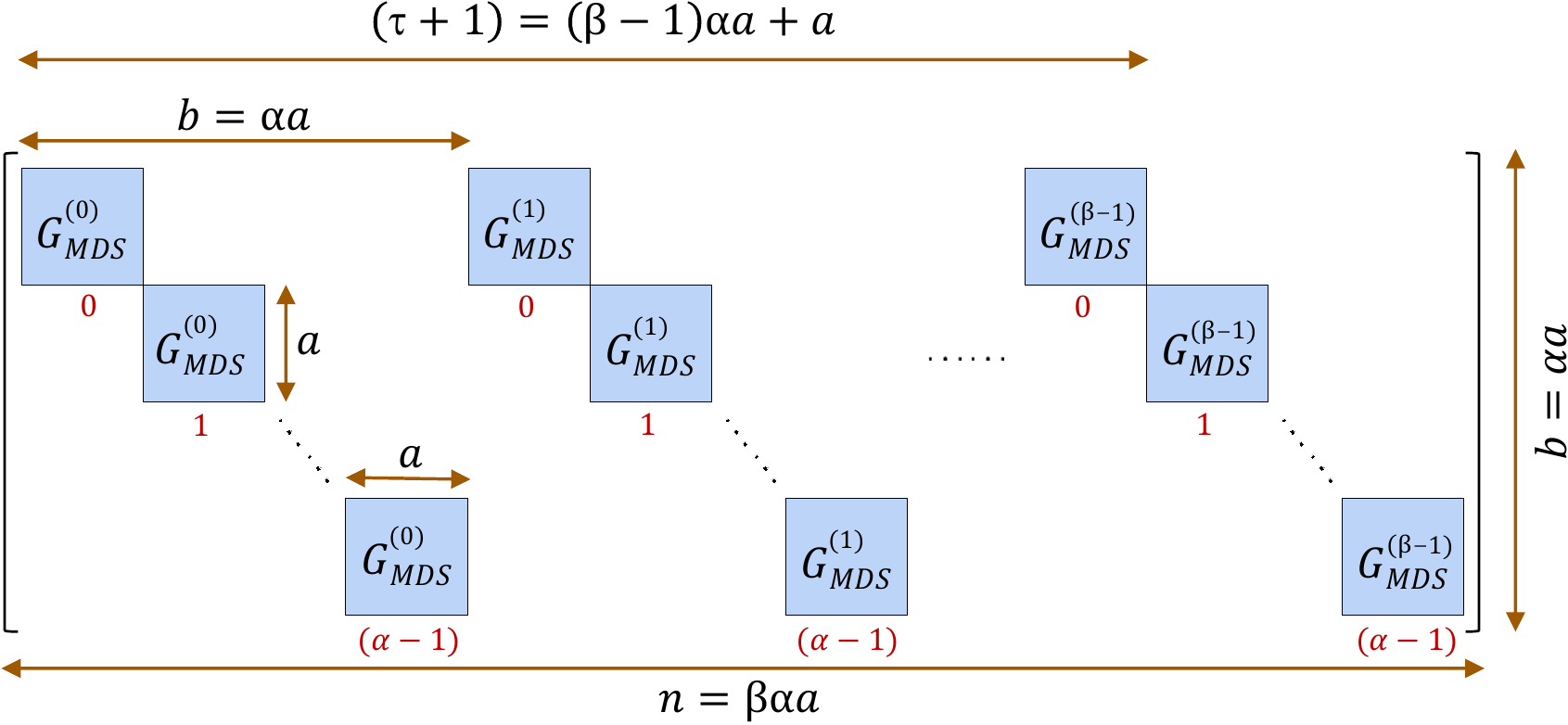}
	 	\caption{The p-c matrix associated to Construction C.   Here $G_\text{\tiny MDS}=[G^{(0)}_\text{\tiny MDS} \ G^{(1)}_\text{\tiny MDS}\ \ldots \ G^{(\beta-1)}_\text{\tiny MDS}]$ is the generator matrix of a $[a\beta,a]$ MDS code \cmds. From this, the interleaved MDS nature of the construction is apparent.}
	 	\label{fig:interleaved_mds_code}
	 \end{figure}
	 \eit

		\begin{example}\label{eg:interleaved_mds_code}\normalfont
		Let parameters $\params=\{2,6,13\}$. Thus we have $\alpha=3,\beta=3,n=18,\delta=4$. Let $G_\text{\tiny MDS}$ be as follows:
		\begin{equation*}
		G_\text{\tiny MDS}=\left[
		{\begin{array}{cc|cc|cc}
			g_{0,0} & 	g_{0,1} & 	g_{0,2} & 	g_{0,3} &	1 & \ \	0 \\
		g_{1,0} & 	g_{1,1} & 	g_{1,2} & 	g_{1,3} &	0 & \ \	1  
			\end{array}}\right].
		\end{equation*}
		The p-c matrix $H$ associated to Construction C is as given in Fig. \ref{fig:interleaved_mds_code_example_1}.		
		
	  \begin{figure}[!htb]
			\centering
			\captionsetup{justification=centering}
			\includegraphics[scale=0.65]{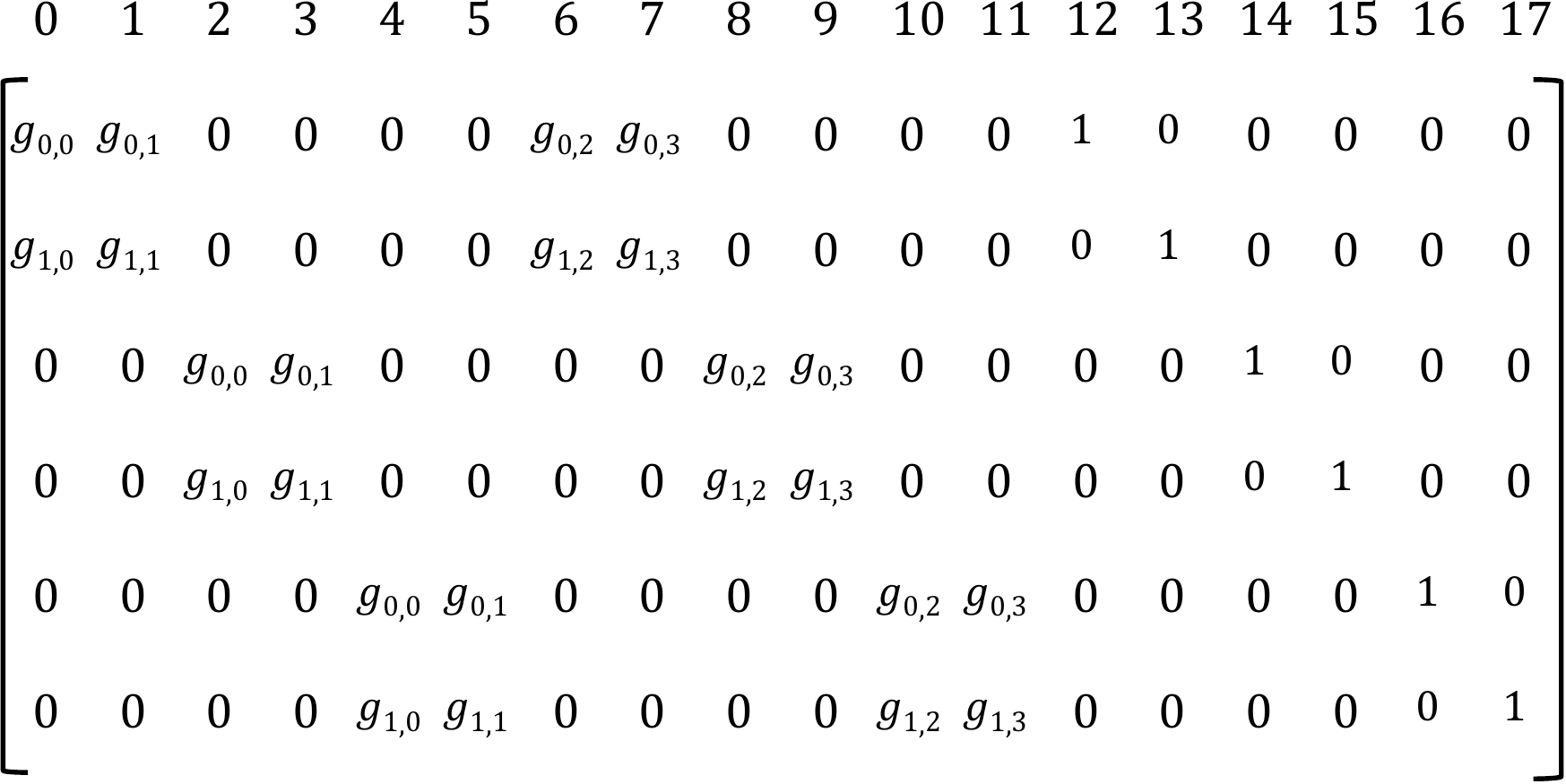}
			\caption{$H$ corresponding to the Example \ref{eg:interleaved_mds_code}. Here $\{a=2,b=6,\tau=13\}$.}
			\label{fig:interleaved_mds_code_example_1}
		\end{figure}
		
\end{example}

		
		\begin{thm}\label{thm:constr_C}
	The code $\mathcal{C}$ over $\mathbb{F}_{q}$ having p-c matrix $H$ based on Construction C, when employed as the diagonally-embedded block code, will yield a rate-optimal streaming  code \cstr.
	   \end{thm}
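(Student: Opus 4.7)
My strategy is to exploit the interleaved-MDS structure of Construction C so that all four conditions B1, B2, R1, R2 decompose neatly across the $\alpha$ row-bands of $H$. The construction partitions the $n$ coordinates of \calc\ into $\alpha$ \emph{groups}, where group $q$ (for $0 \le q \le \alpha - 1$) consists of the coordinates $\{jb + qa + r' : 0 \le j \le \beta-1,\ 0 \le r' \le a-1\}$. The parity-check equations arising from row-band $q$ (rows $qa$ through $qa+a-1$ of $H$) involve only the coordinates in group $q$ and collectively assert that the codeword's restriction to group $q$ lies in the $[a\beta, a(\beta-1)]$ MDS code $\mathcal{C}_{\text{\tiny MDS}}^\perp$. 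Equivalently, the columns of $H$ (and, as I argue below, also those of any shortened p-c matrix $H^{(\ell)}$) at coordinates in distinct groups have disjoint row-supports, so linear independence decomposes group by group.

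Using this decomposition, conditions B2 and R2 are immediate. For B2, any $b$ consecutive columns of $H$ cover every residue modulo $b$ exactly once, placing exactly $a$ columns into each of the $\alpha$ groups; within any one group these are $a$ distinct columns of \gmds\ and hence linearly independent, and the claim follows. The argument for R2 is essentially identical: any $a$ columns drawn from the last $\tau+1$ coordinates contribute at most $a$ columns per group, and the MDS property of \gmds\ closes the case.

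To handle B1 and R1, I must first pin down the structure of $H^{(\ell)}$. The deleted coordinates $[\ell+\tau+1:n-1]$ all lie in column-group $\beta - 1$; writing $q^\star := \lfloor \ell / a \rfloor$ and $s := \ell - q^\star a$, a direct count shows that for bands $q \le q^\star$ no positions are deleted, for band $q^\star + 1$ exactly $a - s$ positions are deleted, and for bands $q \ge q^\star + 2$ all $a$ of that band's coordinates in column-group $\beta-1$ are deleted. Since $\mathcal{C}_{\text{\tiny MDS}}^\perp$ is MDS with minimum distance $a+1$, puncturing any one group by at most $a$ coordinates preserves both its dimension and its MDS property. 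Consequently the rows of $H^{(\ell)}$ split as follows: bands $q \le q^\star$ each contribute all $a$ of their rows, giving the full \gmds\ as that group's p-c matrix; band $q^\star + 1$ contributes an $s$-dimensional subspace, acting as a p-c matrix of the punctured MDS code on group $q^\star + 1$; and bands $q \ge q^\star + 2$ contribute no rows at all.

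Finally, I apply the group-wise decomposition to $H^{(\ell)}$ itself. The coordinate $\ell = q^\star a + s$ lies in group $q^\star$ at position $(j,r') = (0,s)$, so $\underline{h}^{(\ell)}_\ell$ is supported only on group-$q^\star$ rows, where it equals the $s$-th column of \gmds. For B1, the $b - 1$ coordinates in $[\ell+1:\ell+b-1]$ cover each residue in $[q^\star a : q^\star a + a - 1] \setminus \{q^\star a + s\}$ modulo $b$ exactly once inside group $q^\star$, corresponding (inside \gmds) to the $a-1$ columns indexed $\{s+1, s+2, \ldots, s + a - 1\}$; together with column $s$ from $\underline{h}^{(\ell)}_\ell$, these yield $a$ consecutive columns of \gmds\ and are hence linearly independent. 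For R1, any $a - 1$ columns chosen from $[\ell+1:\ell+\tau]$ place at most $a - 1$ columns in group $q^\star$, which together with $\underline{h}^{(\ell)}_\ell$ give at most $a$ distinct columns of \gmds, again linearly independent by MDS. The main obstacle I anticipate is the combinatorial bookkeeping of the third paragraph---correctly identifying which linear combinations of band-$(q^\star + 1)$ rows survive the puncturing constraint and verifying that the resulting total row-count $a + \ell$ matches the dimension of the dual of the punctured code.
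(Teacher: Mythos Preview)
Your argument is correct and follows essentially the same route as the paper: partition $H$ into the $\alpha$ row-bands $H_0,\ldots,H_{\alpha-1}$, observe that band $q$ is supported exactly on the coordinate group $\{jb+qa+r'\}$, and reduce every one of B1, B2, R1, R2 to the statement ``any $\leq a$ distinct columns of \gmds\ are independent.'' Your treatment of B2 and R2 matches the paper's almost verbatim.

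The one place you do more work than necessary is your third paragraph. The paper never determines the full row space of $H^{(\ell)}$; it simply notes that band $H_{q^\star}$ has its last $(\alpha-1-q^\star)a \ge \delta-\ell$ columns equal to zero, so the rows of $H_{q^\star}(:,0{:}\ell+\tau)$ already lie in the row space of $H^{(\ell)}$. That single band is all that is needed for both B1 and R1, since column $\ell$ is non-zero only in band $q^\star$, and the MDS argument you give in your final paragraph then applies directly. Your analysis of bands $q^\star+1,\ldots,\alpha-1$ (how many dimensions each contributes after shortening) is correct---your count of $a+\ell$ surviving rows is right---but none of it is used in establishing B1 or R1. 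So the ``main obstacle'' you anticipated can simply be dropped.
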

       \begin{proof}
       	Clearly, as in the case of previous constructions, it suffices to show that $H$ satisfies all the four conditions; B1, B2, R1 and R2.
       	\bit
       	\item[]
       	\item {\it Recovery from burst erasure of length $\leq b$}:  
       	\bit
       	\item Condition B1: Let $0\leq\ell\leq (\alpha-1)a-1\triangleq(\delta-1)$. Partition the p-c matrix into $\alpha$ submatrices as follows (also, see Fig. \ref{fig:interleaved_mds_code_be_proof_1}):
       	 \bean
       	H=\left[ \begin{array}{c} H_0\\ H_1\\ \vdots\\H_{(\alpha-1)} \end{array} \right]. 	
       	\eean
       	  
       	\begin{figure}[!htb]
       		\centering
       		\captionsetup{justification=centering}
       		\includegraphics[scale=0.8]{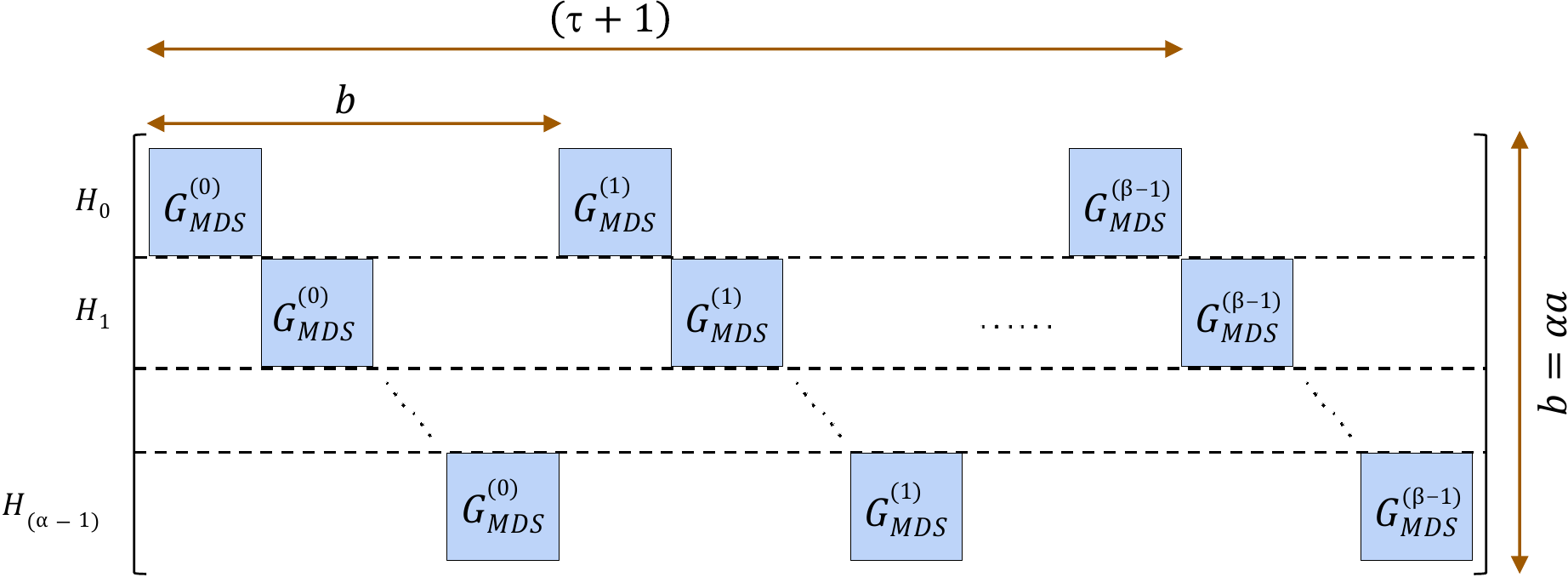}
       		\caption{Figure showing the partitioning of the p-c matrix shown in \ref{fig:interleaved_mds_code}.}
       		\label{fig:interleaved_mds_code_be_proof_1}
       	\end{figure}
       	Clearly, the $\beta a$ non-zero columns within an $H_i$ ($0\leq i\leq (\alpha-1)$) correspond to the $\beta a$ columns of \gmds. Fix $\ell\in[0:\delta-1]$. The row $H(\ell,:)$ naturally belongs to one of the submatrices $\{H_0.\ldots,H_{\alpha-2}\}$. In order to make this mapping explicit, we set $\ell=\mu a+\nu$, for some $\mu,\nu$ such that $0\leq \mu \leq (\alpha-2)$ and $0\leq \nu\leq (a-1)$. In this form, we have that $H(\ell,:)$ is a row of $H_\mu$. It can be verified from Fig. \ref{fig:interleaved_mds_code_be_proof_1} that the corresponding $H_\mu$ has the last $(\alpha-1-\mu)a\geq (\alpha-1-\mu)a-\nu=(\delta-\ell)$ coordinates as zero columns. Thus rows of $H_\mu(:,0:\ell+\tau)$ lie in the row space of \hsupl (see Fig. \ref{fig:interleaved_mds_code_be_proof_2} for an example). We note that any consecutive set of $b$ coordinates involve precisely $a$ non-zero columns of $H_i$ for any $i\in [0:\alpha-1]$. Moreover, these $a$ non-zero columns which correspond to each $H_i$ form an independent set as they are columns of an MDS matrix \gmds. It follows that $H_\mu(:,\ell)$ (which is a non-zero column of $H_\mu$) is linearly independent of the set of $(b-1)$ columns (only $(a-1)$ of them are non-zero columns):  
       	\bean
       	\left\{ H_\mu(:,j) \mid \ell+1 \leq j \leq \ell+b-1 \right\},
       	\eean
       	which implies condition B1.
     
       \begin{figure}[!htb]
       		\centering
       		\captionsetup{justification=centering}
       		\includegraphics[scale=0.65]{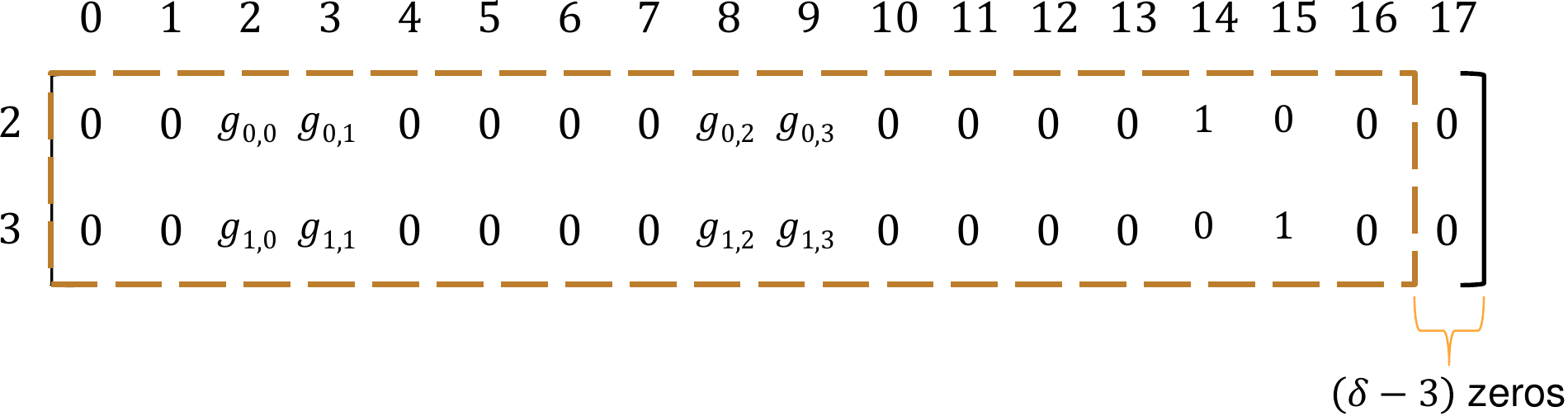}
       		\caption{Consider the Example \ref{eg:interleaved_mds_code}. Let $\ell=3$. Thus we have $\mu=1$ and $\nu=1$. In the figure, we have shown the submatrix $H_1$ (as $\mu=1$). The last $(\delta-3)=1$ columns of $H_1$ are zero columns. Hence rows of the matrix demarcated by dashed lines lie in the row space of $H^{(3)}$.  
       		}
       		\label{fig:interleaved_mds_code_be_proof_2}
       	\end{figure}

       	\item Condition B2:  Let $\ell\in[\delta: \tau-a+1]$. Consider any set of $b$ consecutive coordinates $[\ell:\ell+b-1]$, which are erased. Here, again, we note that any consecutive set of $b$ coordinates, $[\ell:\ell+b-1]$,  involve precisely $a$ non-zero coordinates of any $H_i$, $0\leq i\leq (\alpha-1)$ and all these $a$ non-zero columns of each $H_i$ form an independent set. Hence it follows that  	
       	\bean
       	\left\{ \underline{h}_{j} \mid \ell \leq j \leq \ell +b-1 \right\} 
       	\eean
        form an independent set (condition B2).
         \eit
        \item {\it Recovery from $\leq a$ arbitrary erasures}:  
        \bit
        \item Condition R1: Let $\ell\in[0:\delta-1]$ and recall the definition of $\mu,\nu$ such that: $\ell=a\mu+\nu$, where $0\leq \mu \leq (\alpha-2)$ and $0\leq \nu\leq (a-1)$. As seen in the condition B1 case, rows of $H_\mu(:,[0:\ell+\tau])$ lie in the row space of \hsupl. Also, we have already seen that $H_\mu(:,\ell)$ is non-zero and any $a$ non-zero columns of $H_\mu$ form an independent set. Clearly this implies condition R1.

        \item Condition R2: As any $a$ non-zero columns of $H_i$, $0\leq i\leq (\alpha-1)$ form an independent set, it
        follows that any $a$ columns of $H$ form an independent set. Hence condition R2 is satisfied.
        \eit
        \eit
        \end{proof}


		\section{Linear Field-Size Construction D: for $\delta=(a-1)$ and $b\mid(\tau+1-\delta)$ }\label{sec:construction_D}

	In this section, we give a construction over \fq\ (will be referred to as Construction D) for parameters \params\ such that:
	\bit
	\item  $\delta\triangleq (b-a)=(a-1)$, $(\tau+1)=b+\delta+\gamma b$, for some $\gamma\in \{0,1,\ldots\}$ and $q\geq (\tau+2)$.
	\eit
	
	 As in the other three constructions, we set $n= (\tau+1+\delta)$ and $k=(\tau-a+1)$. The p-c matrix $H$ for Construction D is obtained by refining our description of $H$ in four steps. Initialize $H$ to be the zero matrix of size $(b\times n)$.
	 \bit
		
		\item {\it Step-a}: Consider a $[b+\delta,b]$ MDS code \cmds\ over \fq. For $i\in [0:\delta-1]$, choose a non-zero codeword $\underline{c}_i^T\in\mathcal{C}_\text{\tiny MDS}$ such that $c_{i,j}=0$ for $j\in [i+1:i+b-1]$. Also, choose a non-zero codeword $\underline{c}_\delta^T\in\cmds$ such that $c_{\delta,j}=0$ for $j\in [0:b-2]$. For $l\in [0:\delta]$, set $H(l,0:b+\delta-1)=\underline{c}_l^T$. As $q\geq (\tau+2)>(b+\delta)$, the required MDS code $\mathcal{C}_\text{\tiny MDS}$ can be obtained.

		\item {\it Step-b}: For $i\in [1:\gamma]$, set $H(0:\delta,b+\delta+(i-1)b:b+\delta-1+ib)=H(0:\delta,\delta:b+\delta-1)$ (elements of the submatrix on the RHS are already determined in Step-a).

		\item {\it Step-c}: Set all the zero entries of $H(\delta,\delta:\tau)$ to be $1$ (any non-zero value in place of $1$ would also work).

		\item {\it Step-d}: Let $C$ be a $(a\times (\tau+1-\delta))$ Cauchy-like matrix with $C(0,:)$ identical to $H(\delta,\delta:\tau)$. Set $H(\delta:b-1,\delta:\tau)=C$. Also, for $i\in[1:\delta]$, set $H(\delta+i,\tau+i)=1$ (any non-zero value instead of $1$ would also work). As $q\geq (a+\tau+1-\delta)=(\tau+2)$ (since $\delta=(a-1)$), the existence of such a $C$ is guaranteed. This completes the p-c matrix construction.
\eit

		\begin{example}\label{eg:constrn_D_example}\normalfont
		Consider the parameters $\{a=3,b=5,\tau=11\}$. Thus we have $\delta=2$ and $n=(\tau+1+\delta)=14$. We require a p-c matrix $H$ of size $(5\times 14)$. Set $q=2^4\geq (\tau+2)$. Initialize $H$ to be the zero matrix of size $(5\times 14)$. 
		\bit
		\item {\it Step-a}: Consider a $[7,5]$ MDS code $\mathcal{C}_\text{\tiny MDS}$ over $\mathbb{F}_{2^4}$. Set $H(0,0:6)$ to be a non-zero codeword $\underline{c}_0^T\in\mathcal{C}_\text{\tiny MDS}$ such that $c_{0,i}=0$ for $i\in [4]$, $H(1,0:6)$ to be a non-zero codeword $\underline{c}_1^T\in\mathcal{C}_\text{\tiny MDS}$ such that $c_{1,i}=0$ for $i\in [2:5]$, $H(2,0:6)$ to be a non-zero codeword $\underline{c}_2^T\in\mathcal{C}_\text{\tiny MDS}$ such that $c_{2,i}=0$ for $i\in [0:3]$ (see Fig. \ref{fig:construction_D_step1}). It can be verified that the matrix $H(0:2,0:6)$ has a rank of $3$ and  also $|\cup_{j=0}^{2}\text{supp}(H(j,0:6))|=5$. Hence applying Lemma \ref{lem:shortened_mds}, we make the observation that any $3$ non-zero columns of $H(0:2,:)$ form an independent set.
		
		\begin{figure}[!htb]
			\centering
			\captionsetup{justification=centering}
			\includegraphics[scale=0.55]{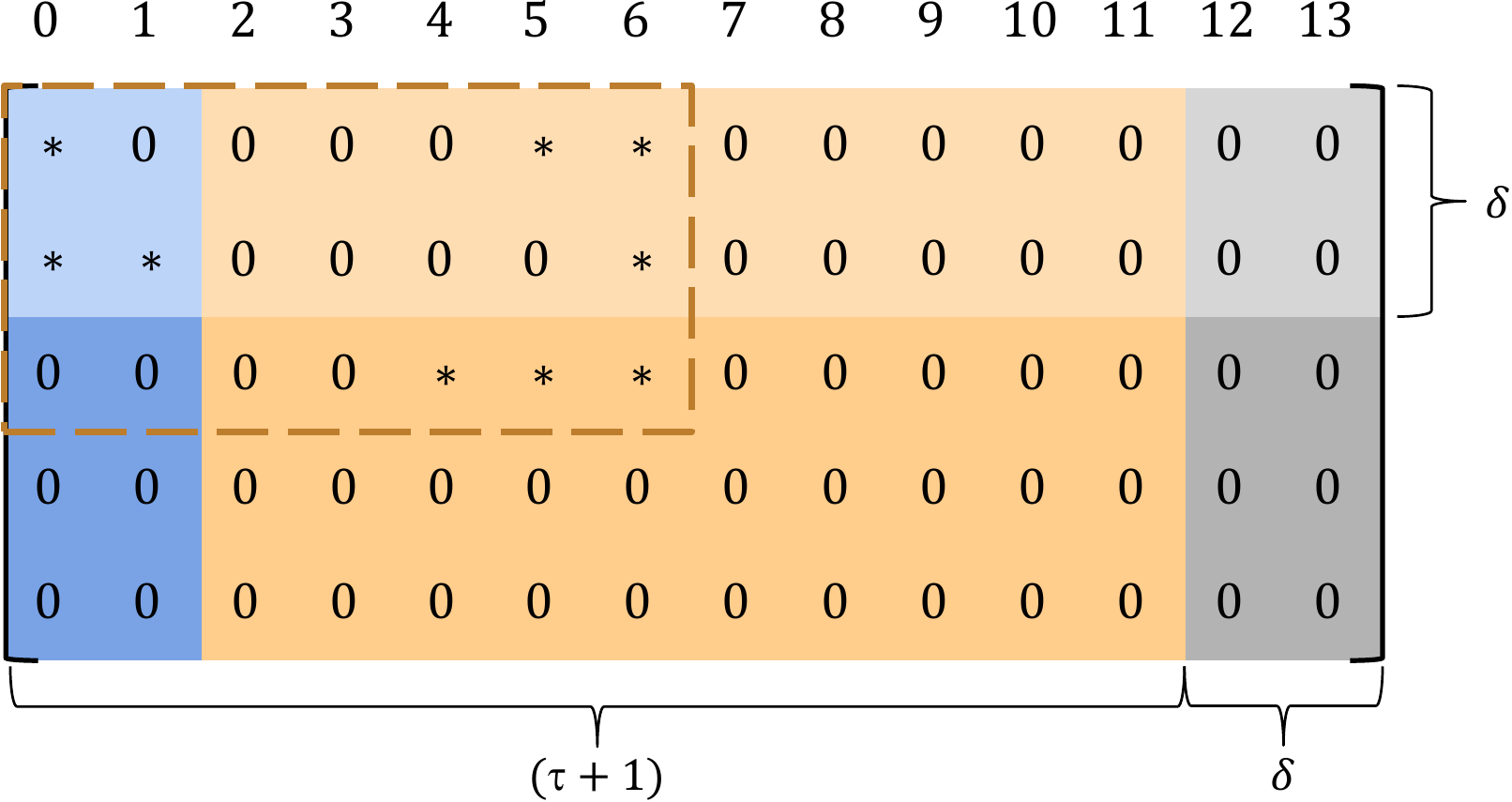}
			\caption{$H$ after Step-a for parameters \params\ $=\{3,5,11\}$.  Here rows of the submatrix demarcated by dashed lines, are codewords of a $[7,5]$ MDS code. After Step-a, any $3$ non-zero columns of $H(0:2,:)$ form an independent set.}
			\label{fig:construction_D_step1}
		\end{figure}
		
		\item {\it Step-b}: Set $H(0:2,7:11)=H(0:2,2:6)$ (see Fig. \ref{fig:construction_D_step2}). From the observation that we made in Step-a, it follows that there does not exist a set of $\leq 2$ columns of $H(0:2,[0:13]\setminus\{i\})$ which has $H(0:2,i)$ in its span, for $i=0,1$.
		
		\begin{figure}[!htb]
			\centering
			\captionsetup{justification=centering}
			\includegraphics[scale=0.55]{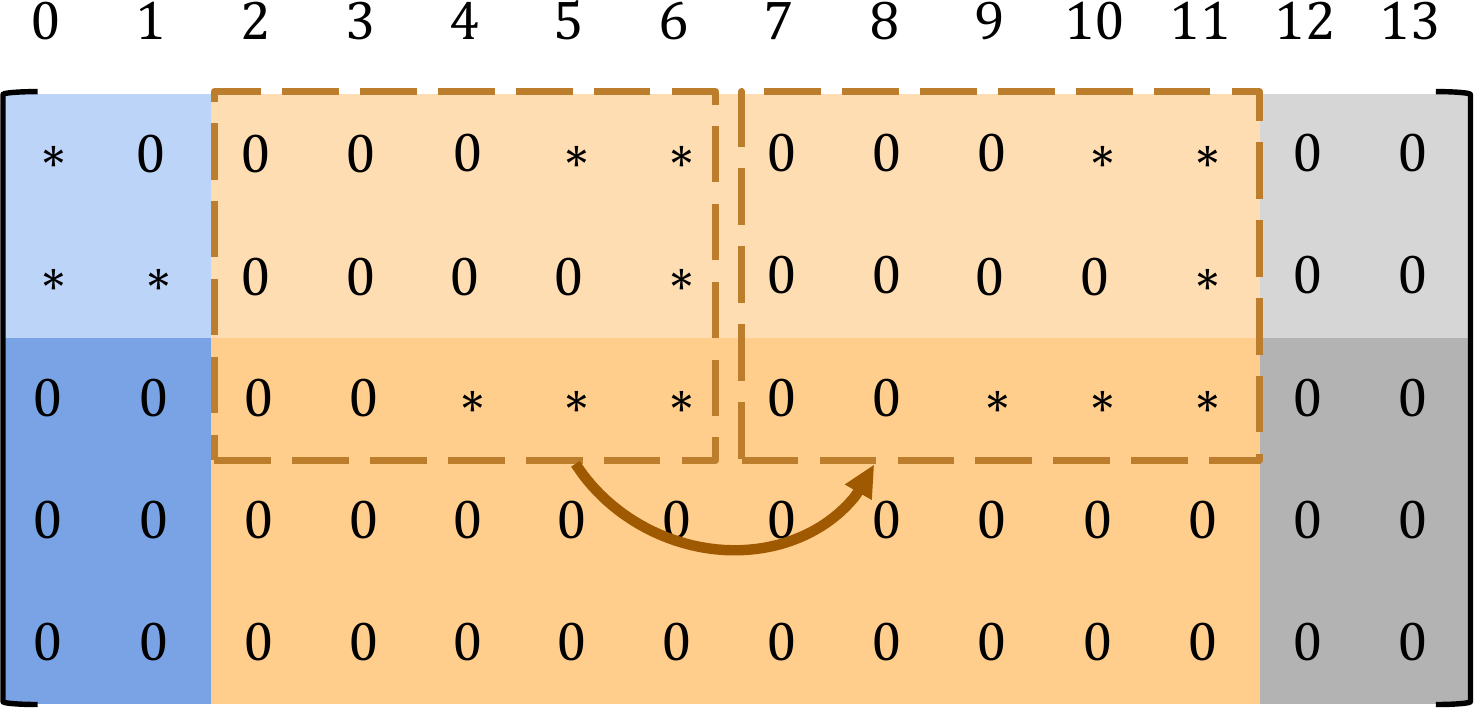}
			\caption{$H$ after Step-b for parameters \params\ $=\{3,5,11\}$. Here a portion of the p-c matrix is replicated.}
			\label{fig:construction_D_step2}
		\end{figure}
		
		\item {\it Step-c}: Set all the entries of $H(2,[2:3]\cup[7:8])$ to be $1$ (see Fig. \ref{fig:construction_D_step3}). Note that as each column of $H(0:2,[2:3]\cup[7:8])$ is a scalar multiple of $H(0:2,4)$, it is still true that there does not exist a set of $\leq 2$ columns of $H(0:2,[0:13]\setminus\{i\})$ which has $H(0:2,i)$ in its span, for $i=0,1$. From here on, the submatrix $H(0:2,0:13)$ will be unchanged.
		
		\begin{figure}[!htb]
			\centering
			\captionsetup{justification=centering}
			\includegraphics[scale=0.55]{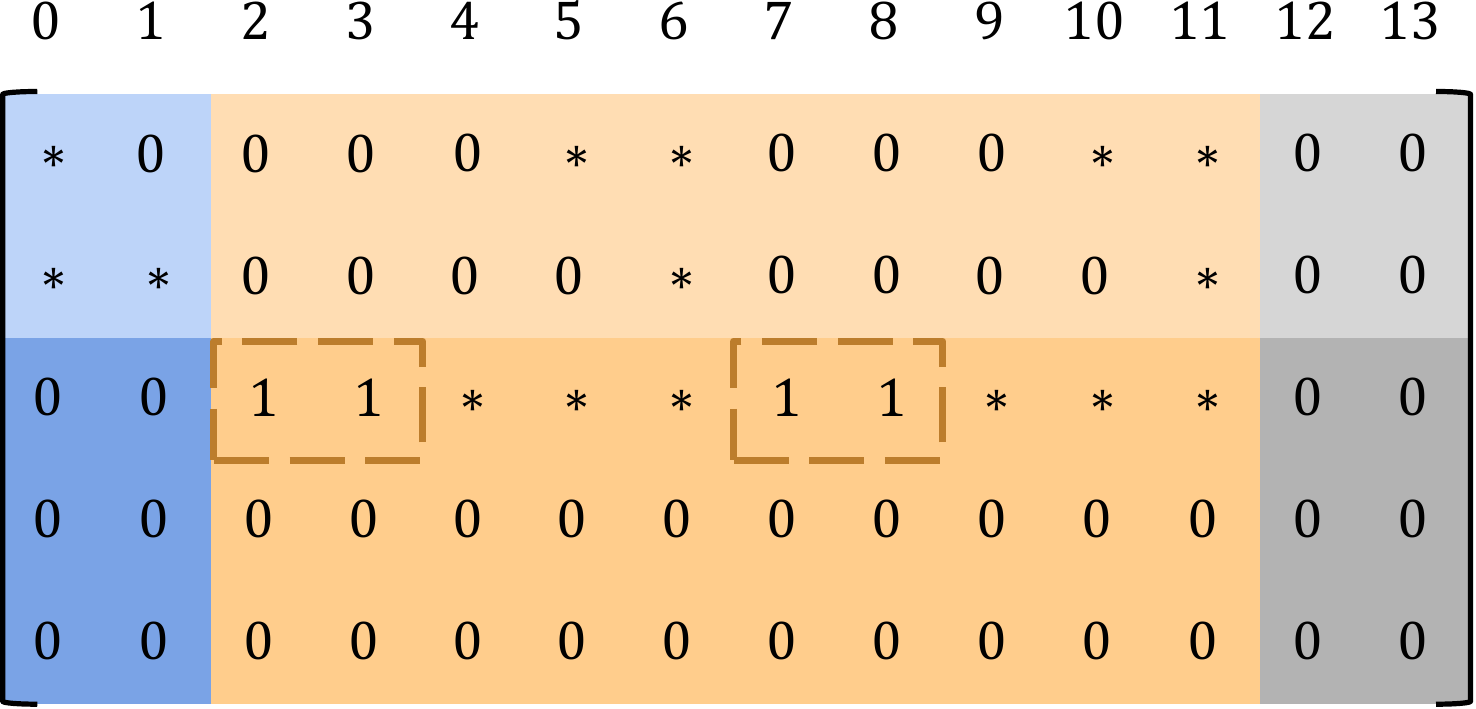}
			\caption{$H$ after Step-c for parameters \params\ $=\{3,5,11\}$. All the $0$'s of $H(2,2:11)$ are replaced with $1$'s.}
			\label{fig:construction_D_step3}
		\end{figure}
		
		\item {\it Step-d}: Let $C$ be a $(3\times 10)$ Cauchy-like matrix over $\mathbb{F}_{2^4}$ with $C(0,:)$ same as $H(2,2:11)$. Such a Cauchy-like matrix always exist, as columns of any given $(3\times 10)$ Cauchy-like matrix $C'$ can be scaled to arrive at $C$. Set $H(2:4,2:11)=C$. Also, set $H(3,12)=H(4,13)=1$ (see Fig. \ref{fig:construction_D_step4}). This completes the construction.
		
		\begin{figure}[!htb]
			\centering
			\captionsetup{justification=centering}
			\includegraphics[scale=0.55]{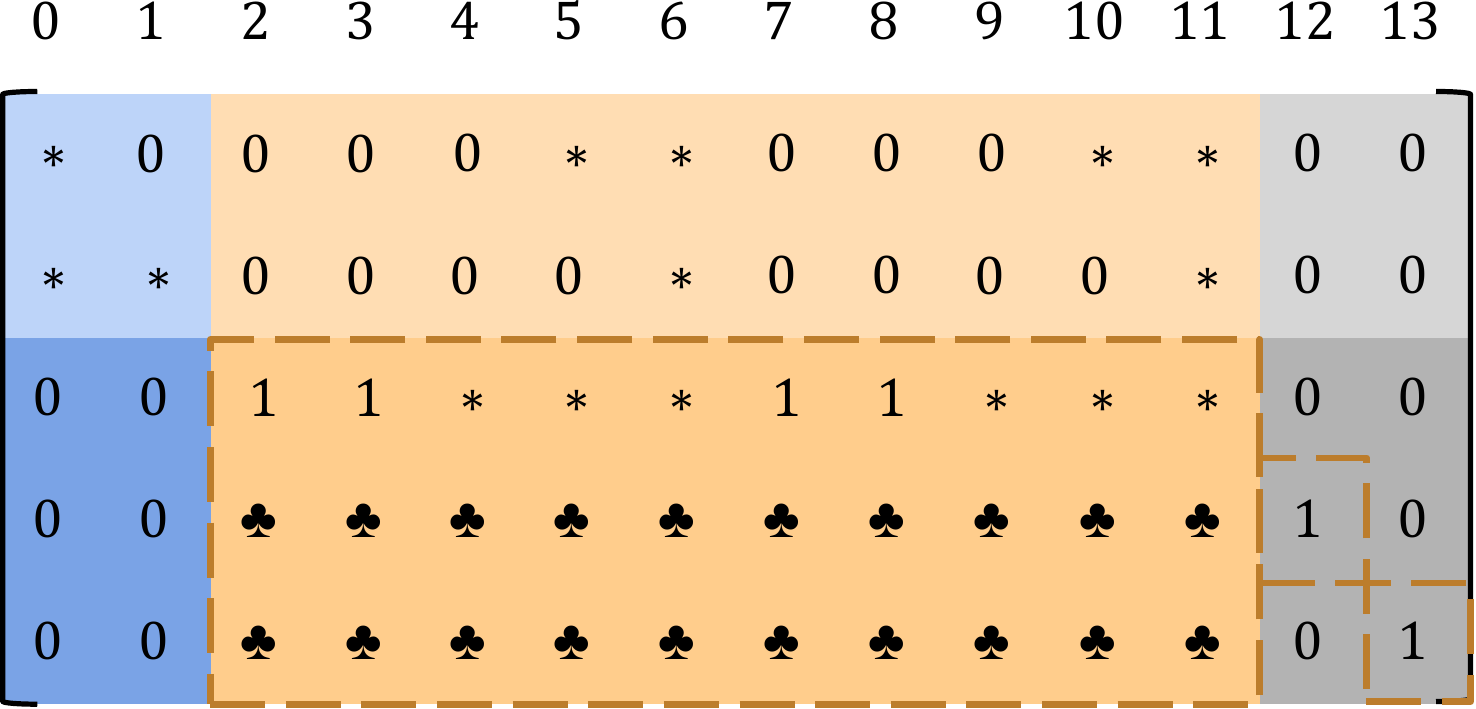}
			\caption{$H$ after Step-d for parameters \params\ $=\{3,5,11\}$. The submatrix $H(2:4,2:11)$ is set to be a Cauchy-like matrix $C$. Here $C(0,:)$ is such that $H(2,2:11)$ is unchanged from Step-c. Also, $0$'s at $H(3,12)$ and $H(4,13)$ are replaced with $1$'s.}
			\label{fig:construction_D_step4}
		\end{figure}
		\eit
	
		 \end{example}

		\begin{thm}\label{thm:constr_D}
				The code $\mathcal{C}$ over $\mathbb{F}_{q}$ having p-c matrix $H$ based on Construction D, when employed as the diagonally-embedded block code, will yield a rate-optimal streaming  code \cstr.
		\end{thm}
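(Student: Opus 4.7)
The plan is to verify that $H$ satisfies the four requirements B1, B2, R1, R2 of Section~\ref{sec:four_conditions}; by the framework of Section~\ref{sec:setting}, this yields rate-optimality of the streaming code obtained by diagonally embedding $\mathcal{C}$. The matrix $H$ decomposes into a \emph{top block} (rows $0, \ldots, \delta$), whose first $b+\delta$ entries in each row agree with $\delta+1$ codewords of the $[b+\delta, b]$ MDS code of Step-a (with row $\delta$ subsequently modified by Step-c) and which is periodically extended on $[b+\delta, \tau]$ via Step-b, and a \emph{bottom block} consisting of the Cauchy-like matrix $C$ in rows $[\delta, b-1]$ on columns $[\delta, \tau]$ together with the unit entries $H(\delta+i, \tau+i)=1$ for $i \in [1, \delta]$. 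Row $\delta$ is shared between the two blocks. Because the top rows vanish on $[\tau+1, n-1]$ and row $\delta+i$ has its only nonzero entry there at column $\tau+i$, for $\ell \in [0, \delta-1]$ the shortened p-c matrix $H^{(\ell)}$ can be taken to equal $H([0:\delta+\ell], [0:\ell+\tau])$.

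Conditions B2 and R2 concern columns of $H$ itself. For any chosen window of columns, I split them into those in the Cauchy region $[\delta, \tau]$ and those in the unit-entry region $[\tau+1, \tau+\delta]$. Unit-entry columns contribute independent standard basis vectors in distinct rows of $[\delta+1, b-1]$; stripping them off reduces the problem to $\le a$ Cauchy columns, which are independent in the bottom $a$ rows by the Cauchy-like property of $C$. For B2 one additionally needs the top $\delta$ rows (rows $0, \ldots, \delta-1$, which remain genuine MDS codewords after Step-c) to contribute the residual $\delta$-dimensional independence; this follows by applying Lemma~\ref{lem:shortened_mds} with the explicit support pattern $\text{supp}(\underline{c}_i) \subseteq [0, i] \cup [i+b, b+\delta-1]$ and its periodic replicas induced by Step-b.

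To verify B1 and R1 for $\ell \in [0, \delta-1]$, I use that $\underline{h}^{(\ell)}_\ell$ lies left of both the Cauchy and unit-entry regions, hence is nonzero only in rows $[\ell, \delta-1]$. Any purported dependence on columns drawn from $[\ell+1, \ell+b-1]$ (for B1) or on an $(a-1)$-subset of $[\ell+1, \ell+\tau]$ (for R1) must sum to zero in the bottom rows of $H^{(\ell)}$. The Cauchy-like property forces zero coefficients on candidate columns inside $[\delta, \tau]$, and the unit-entry rows force zero coefficients on candidate columns at $\tau+i$ for $i \in [1, \ell]$. The only surviving candidates then lie in $[\ell+1, \delta-1]$ and act purely in the top block; Lemma~\ref{lem:shortened_mds} applied to the genuine MDS rows $\underline{c}_0, \ldots, \underline{c}_{\delta-1}$ rules out any residual dependence.

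The principal obstacle is the delicate treatment of row $\delta$, which plays a dual role as the last MDS-block row and the first Cauchy-block row. After Step-c it is no longer a codeword of the MDS code of Step-a, so the shortened-MDS argument for B1 and R1 must be carried out on rows $0, \ldots, \delta-1$ alone. This is consistent because $\underline{c}_\delta$ vanishes on $[0, b-2] \supseteq [0, \delta-1]$, so row $\delta$ contributes zero to $\underline{h}^{(\ell)}_\ell$ for $\ell < \delta$ anyway, while the Cauchy argument handles row $\delta$'s contribution to the bottom independence via its identification with $C(0, :)$ from Step-d. A secondary subtlety is that Step-b produces top-block columns identical to earlier ones; these must be incorporated into Lemma~\ref{lem:shortened_mds} via the replicated joint support, which is straightforward because periodic replication preserves the row-wise MDS rank structure.
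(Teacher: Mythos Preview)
Your treatment of B2 and R2 has the right ingredients (though the phrase ``reduces the problem to $\le a$ Cauchy columns'' is accurate only for R2; for B2 one is left with $b-k_2\ge a$ columns in $[\delta,\tau]$, and the paper instead uses that the top $\delta$ rows over any $b$ consecutive columns are, by the period-$b$ replication of Step-b, a permutation of $H([0{:}\delta-1],[\delta{:}b+\delta-1])$, giving exactly $\delta$ nonzero columns which are independent by Lemma~\ref{lem:shortened_mds}).

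The real gap is in your R1 (and B1) argument.  For $\ell\in[0,\delta-1]$ the matrix $H^{(\ell)}$ has only $\ell+1$ ``bottom'' rows $[\delta:\delta+\ell]$, so the Cauchy-like submatrix you invoke is $C([0{:}\ell],:)$.  But for R1 the erasure set may place up to $a-1=\delta$ coordinates inside $[\delta,\tau]$, and for B1 the burst $[\ell+1,\ell+b-1]$ contains $\ell+a$ coordinates in $[\delta,\tau]$.  In either case the number of Cauchy-region candidates exceeds $\ell+1$ (for R1 whenever $\ell<\delta-1$; for B1 always, since $a\ge 2$), so the Cauchy-like property of $(\ell+1)$ rows cannot force those coefficients to zero.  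Your residual step, which applies Lemma~\ref{lem:shortened_mds} to rows $\underline{c}_0,\ldots,\underline{c}_{\delta-1}$ alone, then gives only ``any $\delta$ nonzero columns independent''; combined with column $\ell$ on the left-hand side this is a relation among up to $\delta+1$ columns, one too many.

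The paper's fix is precisely \emph{not} to drop row $\delta$ from the MDS side.  For R1 it restricts to rows $R=[0{:}\delta]$ and argues in three stages: (i) in $H_a$ (after Step-a only) these $\delta+1$ rows are independent MDS codewords with joint support of size $2\delta+1$, so Lemma~\ref{lem:shortened_mds} yields ``any $\delta+1$ nonzero columns independent''; (ii) Step-b only replicates columns $[\delta{:}b+\delta-1]$, so column $\ell\in[0{:}\delta-1]$ is never duplicated and hence cannot lie in the span of $\le\delta=a-1$ other columns; (iii) Step-c turns certain zero columns into $[0,\ldots,0,1]^T$ in rows $R$, which is a scalar multiple of the already-present column $H(R,b-1)$, so the conclusion of (ii) survives.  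Thus the ``delicate'' row $\delta$ is used on the MDS side, not the Cauchy side.  For B1 the paper's argument is in fact trivial: the single row $\underline{h}_{\text{row},\ell}$ has $H(\ell,\ell)\neq 0$ and $H(\ell,[\ell+1{:}\ell+b-1])=0$ (the zero band of $\underline{c}_\ell$ is untouched by Steps b--d), and this row already lies in the row space of $H^{(\ell)}$.
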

		\begin{proof}
		As in the case of other three constructions, we prove this by showing that $H$ meets all the conditions; B1, B2, R1 and R2.
		\bit
		\item[]
		\item {\it Recovery from burst erasure of length $\leq b$}: 
		\bit
		\item Condition B1: 
		Similar to all the previous constructions, for $\ell\in[0:\delta-1]$, the $\ell$-th row of the p-c matrix $H$ takes the form:
		
		\bean
		\hrowl\ & = & [\underbrace{\triangle \cdots \triangle}_{(\ell-1) \text{ symbols}} \ \ast \ \underbrace{0 \cdots 0}_{(b-1) \text{ symbols}} \ \triangle\cdots\triangle \ \underbrace{0 \cdots 0}_{\text{last } (\delta-\ell) \text{ symbols}} ],
		\eean
		 where $\triangle$'s indicate elements over \fq\ and $\ast$ indicates a non-zero element over \fq. Hence condition B1 follows.
		
		\item Condition B2: For $\delta \le \ell \le (\tau-a+1)$, let $\pl \triangleq H(:,\ell:\ell+b-1)$. We make the following two observations:
		\ben
		\item From Step-b, we have that $\pl(0:\delta-1,:)\triangleq H(0:\delta-1,\ell:\ell+b-1)$ is identical to $H(0:\delta-1,\delta:b+\delta-1 )$ up to column permutations.
		\item $\pl(\delta:b-1,:)\triangleq H(\delta:b-1,\ell:\ell+b-1)$ is an $a\times b$ MDS matrix.
		\een
		From these observations, it can be inferred that up to a permutation of columns, \pl\ has the form:
		\bean
		\left[ \begin{array}{cc} [0] & P_{\ell,3} \\ P_{\ell,2} & P_{\ell,4} \end{array} \right]. 
		\eean
		Using 1) we have that $P_{\ell,3}$ is a $\delta\times\delta$ matrix containing $\delta$ non-zero columns of $H(0:\delta-1,\delta:b+\delta-1)$. Consider the matrix $P\triangleq H(0:\delta-1,0:b+\delta-1)$, which is unchanged after Step-a. Recall that the rows of $P$ are codewords of a $[b+\delta,b]$ MDS code \cmds. Applying Lemma \ref{lem:shortened_mds}, any $\delta$ non-zero columns of $P$ (and thus of $H(0:\delta-1,\delta:b+\delta-1)$) form an independent set. Hence $P_{\ell,3}$ is invertible. From 2), we know that $P_{\ell,2}$ is an $a\times a$ matrix composed of $a$ columns of the MDS matrix $\pl(\delta:b-1,:)$, and hence is invertible. Thus \pl\ is invertible. In Fig. \ref{fig:construction_D_BE_proof_1}, we illustrate an example case of $\ell=6$.
		
		\begin{figure}[!htb]
			\centering
			\captionsetup{justification=centering}
			\includegraphics[scale=0.55]{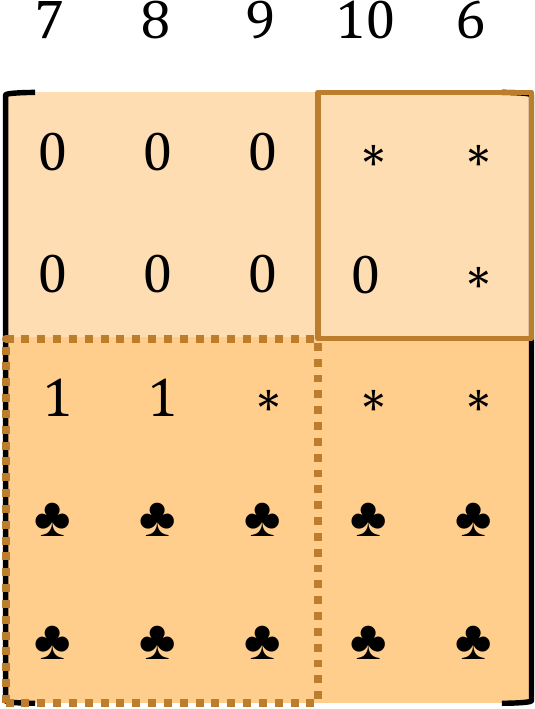}
			\caption{Consider the p-c matrix given in Example \ref{eg:constrn_D_example}. The figure shows a column-permuted version of \pl, where $\ell=6$. Here the submatrices demarcated by solid and dotted lines are both invertible. Thus \pl\ is invertible.}
			\label{fig:construction_D_BE_proof_1}
		\end{figure}

		\eit
		\item {\it Recovery from $\le a$ random erasures}:
		\bit
		\item Condition R1: Fix $\ell\in[0:\delta-1]$. For any $i$ from the set $R\triangleq [0:\delta]$, the $i$-th row $H(i,0:\ell+\tau)$ can be verified to belong to the row space of the shortened p-c matrix \hsupl. This is because each row $H(i,:)$ has a run of $(\delta-\ell)$ zeros across columns $[\ell+\tau+1:n-1]$.  Hence while discussing the recoverability of $\ell$-th code symbol, we will restrict ourselves to the rows $R$. In the following, we make three observations which help us show that condition R1 is satisfied by the p-c matrix $H$. These are generalizations of the observations that we make in Steps a, b and c of Example \ref{eg:constrn_D_example}. Consider the p-c matrix $H_a$ obtained after the refinement in Step-a. The first observation is as follows:
       \ben
	     \item[(i)] The matrix $H_a(R,0:b+\delta-1)$ has a rank of $(\delta+1)$ and  also $|\cup_{j=0}^{\delta}\text{supp}(H_a(j,0:b+\delta-1))|=2\delta+1$. Hence applying Lemma \ref{lem:shortened_mds}, we have that any $(\delta+1)$ non-zero columns of $H_a(R,:)$ form an independent set.
        \een 
		
		Now consider the p-c matrix $H_b$ obtained after the refinement in Step-b. The only change with respect to $H_a$ is that columns of $H_a(R,\delta:b+\delta-1)$ are replicated $\gamma$ times in $H_b$. We make a second observation as follows:
		\ben
		\item[(ii)]  From the observation (i), it follows that there does not exist a set of $\leq \delta=(a-1)$ columns of $H_b(R,[0:n-1]\setminus\{i\})$ which has $H_b(R,i)$ in its span, for any $i\in[0:\delta-1]$.
		\een
		Let $H_c$ be the p-c matrix obtained after Step-c, obtained from $H_b$ by replacing all the zero entries of $H_b(\delta,\delta:\tau)$ with $1$'s. For each entry $H_b(\delta,j)$ thus changed, $H_c(R,j)$ takes the form $[\underbrace{0\ \cdots\ 0}_{\delta}\ 1]^T$. This vector is a scalar multiple of $H_b(R,b-1)$. Hence observation (ii) can be extended to $H_c$ as well. Thus we have:
		 \ben
		\item[(iii)]  There does not exist a set of $\leq \delta=(a-1)$ columns of $H_c(R,[0:n-1]\setminus\{i\})$ which has $H_c(R,i)$ in its span, for any $i\in[0:\delta-1]$.
		\een
		As $H(R,:)$ is unchanged after Step-c, from observation (iii), condition R1 follows. 
		\item Condition R2: The condition R2 is clearly satisfied by the p-c matrix, as $H(\delta:b-1,\delta:n-1)$ is an $a\times (\tau+1)$ MDS matrix. 
		
		\eit
		\eit 
		
	\end{proof}
		
		\section{Rate-Optimal Convolutional Codes for Given Column Distance and Column Span}\label{sec:conv_codes_col_distance_col_span}
		In \cite{FongKhisti}, the authors observe that the rate-optimal streaming codes (which are convolutional codes) they construct for the DC-SW channel, are also rate-optimal convolutional codes with respect to column span and column distance. Here we extend this observation to the constructions presented in this paper.
		
		Consider a rate $\frac{k}{n}$ convolutional code with memory $m$. The relation between input vectors $\{\underline{s}(t)\}$  and $\{\underline{x}(t)\}$ is given by:
		 \beq \label{eq:G_conv}
		 \underline{x}^T(t) = \sum_{i=0}^{m} \underline{s}^T(t-i)G^\text{\tiny conv}_i,
		 \eeq
		 where $\underline{s}(t)\in \fq^{k\times 1}$, $\underline{x}(t)\in \fq^{n\times 1}$  and $G^\text{\tiny conv}_i\in\fq^{k\times n}$. 
		 
		 We borrow the following definitions from \cite{BadrPatilKhistiTIT17,FongKhisti}:
		
		 	\begin{align*}
		 	\text{Column distance},\ d_{\tau} & \triangleq \min\{\mathrm{wt}(\underline{x}(0), \underline{x}(1),\dotsc , \underline{x}(\tau)):\underline{s}(0)\ne \underline{0}\}\\
		 	\text{Column span},\ c_{\tau}&  \triangleq \min\{\mathrm{span}(\underline{x}(0), \underline{x}(1),\dotsc , \underline{x}(\tau)):\underline{s}(0)\ne \underline{0}\},
		 	\end{align*}
		 	where $\mathrm{wt}(\underline{x}(0), \underline{x}(1)\dotsc , \underline{x}(\tau))$ is the number of non-zero vectors in $\{\underline{x}(i)\}_{i=0}^{\tau}$ and
		 	$\mathrm{span}(\underline{x}(0), \underline{x}(1)\dotsc , \underline{x}(\tau))= \max\{i\mid\underline{x}(i)\ne\underline{0}\}-\min\{i\mid\underline{x}(i)\ne\underline{0}\}+1$. Clearly, $d_\tau\leq c_\tau\leq (\tau+1)$.

		   It is shown in \cite{BadrPatilKhistiTIT17} that a convolutional code with column distance, $d_\tau$ and column span, $c_\tau$ is a streaming code which can correct, with a delay $\tau$, all the erasure patterns of the DC-SW channel having parameters $\{a=d_\tau-1,b=c_\tau-1,\tau,w=\tau+1\}$. Thus, from \eqref{eq:rate_upper_bound}, it follows that:
		   \beq\label{eq:col_weight_span_bound}
		    \frac{k}{n}\leq \frac{\tau-d_\tau+2}{\tau-d_\tau+c_\tau+1}.
		   \eeq
		   
		   Conversely, for a streaming code (which also is a convolutional code, as in the case of our constructions) which can recover with a delay $\tau$ from all the erasure patterns of the $\{a,b,\tau,w=\tau+1\}$ DC-SW channel, it is shown in \cite{BadrPatilKhistiTIT17} that $d_\tau\geq (a+1)$ and $c_\tau\geq (b+1)$. Thus, for the rate-optimal streaming codes obtained via diagonally embedding Construction A, B, C or D, we have:
		   \bea \label{eq:col_weight_span_achievability}
		    \frac{k}{n}& = &\frac{\tau-a+1}{\tau-a+1+b}\nonumber\\
		    & \geq &\frac{\tau-d_\tau+2}{\tau-d_\tau+2+b}\nonumber\\
		    & \geq &\frac{\tau-d_\tau+2}{\tau-d_\tau+c_\tau+1} 
		   \eea 
		   A convolutional code having column distance, $d_\tau$ and column span, $c_\tau$ is defined to be rate-optimal, if it satisfies \eqref{eq:col_weight_span_bound} with equality. The following theorem is a direct consequence of Theorem \ref{thm:constr_A} and inequalities \eqref{eq:col_weight_span_bound}, 
		   \eqref{eq:col_weight_span_achievability}.  
		
		\begin{thm}
			For any $d_\tau,c_\tau$ and $\tau$ such that $d_\tau\leq c_\tau \leq (\tau+1)$, there exists a rate-optimal convolutional code \cstr\ with column distance $d_\tau$ and column span $c_\tau$, over $O(\tau^2)$ field-size. 
		\end{thm}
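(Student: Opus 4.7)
The plan is to instantiate Theorem \ref{thm:constr_A} with a carefully chosen DC-SW parameter set so that the resulting rate-optimal streaming convolutional code automatically realizes the prescribed column distance $d_\tau$ and column span $c_\tau$. Concretely, I would set $a := d_\tau - 1$ and $b := c_\tau - 1$; the hypothesis $d_\tau \leq c_\tau \leq \tau+1$ translates into $a \leq b \leq \tau$, so, taking $w = \tau+1$, the tuple $\{a,b,\tau,w\}$ is a feasible DC-SW parameter set in the sense of \eqref{eq:constraints}. Theorem \ref{thm:constr_A} then delivers a rate-optimal streaming code \cstr\ over a field of size $O(\tau^2)$ whose rate equals $\tfrac{\tau-a+1}{\tau-a+1+b} = \tfrac{\tau-d_\tau+2}{\tau-d_\tau+c_\tau+1}$.

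Next, I would invoke the column-distance and column-span lower bounds of \cite{BadrPatilKhistiTIT17} quoted just before \eqref{eq:col_weight_span_achievability}: since \cstr\ corrects every erasure pattern of the DC-SW channel with parameters $(a,b,\tau,w)$, its actual column distance $d$ and actual column span $c$ satisfy $d \geq a+1 = d_\tau$ and $c \geq b+1 = c_\tau$. So far \cstr\ is a rate-optimal streaming code whose column-distance and column-span parameters are at least the targets.

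To finish, I need to promote these inequalities to equalities and to verify rate-optimality in the sense of \eqref{eq:col_weight_span_bound}. Applying \eqref{eq:col_weight_span_bound} with the \emph{actual} parameters $d, c$ gives
\[
\frac{\tau-d_\tau+2}{\tau-d_\tau+c_\tau+1} \ = \ R \ \leq \ \frac{\tau-d+2}{\tau-d+c+1}.
\]
Setting $f(d,c) := (\tau-d+2)/(\tau-d+c+1)$, a short partial-derivative computation yields $\partial f/\partial c < 0$ and $\partial f/\partial d = (1-c)/(\tau-d+c+1)^2$, which is strictly negative for $c \geq 2$. In the non-degenerate regime $c_\tau \geq 2$, any strict increase $d > d_\tau$ or $c > c_\tau$ would pull $f$ strictly below $f(d_\tau,c_\tau) = R$, contradicting the displayed inequality; hence $d = d_\tau$ and $c = c_\tau$, and the inequality becomes an equality, which is precisely rate-optimality. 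The corner case $c_\tau = 1$ forces $d_\tau = 1$ and is handled by the trivial identity code.

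The main step requiring care is this monotonicity-based uniqueness argument, which ensures that Construction A does not accidentally overshoot the prescribed $d_\tau$ or $c_\tau$; modulo that, the theorem is a clean composition of Theorem \ref{thm:constr_A} with the column-distance/span conversion already provided by \cite{BadrPatilKhistiTIT17}.
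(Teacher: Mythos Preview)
Your proposal is correct and follows essentially the same route as the paper: set $a=d_\tau-1$, $b=c_\tau-1$, apply Theorem~\ref{thm:constr_A}, and then sandwich the rate between the upper bound \eqref{eq:col_weight_span_bound} and the lower bound coming from $d\geq a+1$, $c\geq b+1$ to force equality of the actual column distance and span with the targets. Your partial-derivative monotonicity computation is just an alternative phrasing of the inequality chain \eqref{eq:col_weight_span_achievability} that the paper uses for the same sandwich; the explicit handling of the degenerate case $c_\tau=1$ is a harmless addition.
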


		\section{Numerical Evaluation}\label{sec:simulations}
		In this section, we study the performance of two of our proposed diagonally-embedded constructions (A and C), the random convolutional code appearing in \cite{FongKhisti}, the rate-optimal burst erasure correcting code of \cite{MartTrotISIT07} and a diagonally-embedded MDS code. Constructions A, C and the random convolutional code have parameters $\{a,b,\tau\} = \{4,8,11\}$. The optimal burst erasure correcting code has parameters $\{b,\tau\} = \{11,11\}$ and the MDS code has parameters $[n=12,k=6]$. Note that, with the chosen parameters, the code rate, $R=0.5$ and delay parameter, $\tau=11$ are the same for all the five code constructions. With regard to the field-size requirements, Construction A, random convolutional code and Construction C are over $\mathbb{F}_{2^8}$, $\mathbb{F}_{2^{10}}$ and  $\mathbb{F}_{2^3}$, respectively. The MDS code is over $\mathbb{F}_{2^4}$ and as $R=0.5$, the burst erasure correcting code turns out to be a repetition-code-based scheme needing just $\mathbb{F}_2$.  The simulations are performed over  Gilbert-Elliott and Fritchman channels. Each data point is a result of $10^8$ simulations. Note that the parameters $\{a,b,\tau\} = \{4,8,11\}$ that we have chosen, lie outside the permitted parameter ranges for Constructions B and D.
		\subsection{The Gilbert-Elliott Channel}
		The Gilbert-Elliott (GE) channel is a Markov model consisting of a good state and a bad state. The model is characterized by the tuple  $(\alpha,\beta,\epsilon)$. Here $\alpha$ and $\beta$ are the transition probabilities from the good state to the bad state and vice versa, respectively. In the good state, a packet is lost with probability $\epsilon$, i.e., the channel behaves as a binary erasure channel with erasure probability $\epsilon$. All packets transmitted while the channel is in the bad state, are lost. 
		
		We perform simulations over GE channels with $\alpha = 5\times 10^{-4},\beta =0.5$ and $\epsilon$ varying from 0.001 to 0.04.  Fig. \ref{fig:geplot} illustrates the performance of all the five coding schemes. As expected, the burst-erasure correcting code performs the best for very low $\epsilon\ (<0.003)$,  when the channel behavior is dominated by burst erasures. On the other end of the spectrum, the MDS code starts outperforming all the other codes when $\epsilon$ is large enough and random erasures become the dominant factor. In the intermediate range, despite smaller field-size requirements, Construction A and Construction C perform better than the random convolutional code.

		\begin{figure}
			\centering
			\includegraphics[scale=0.6]{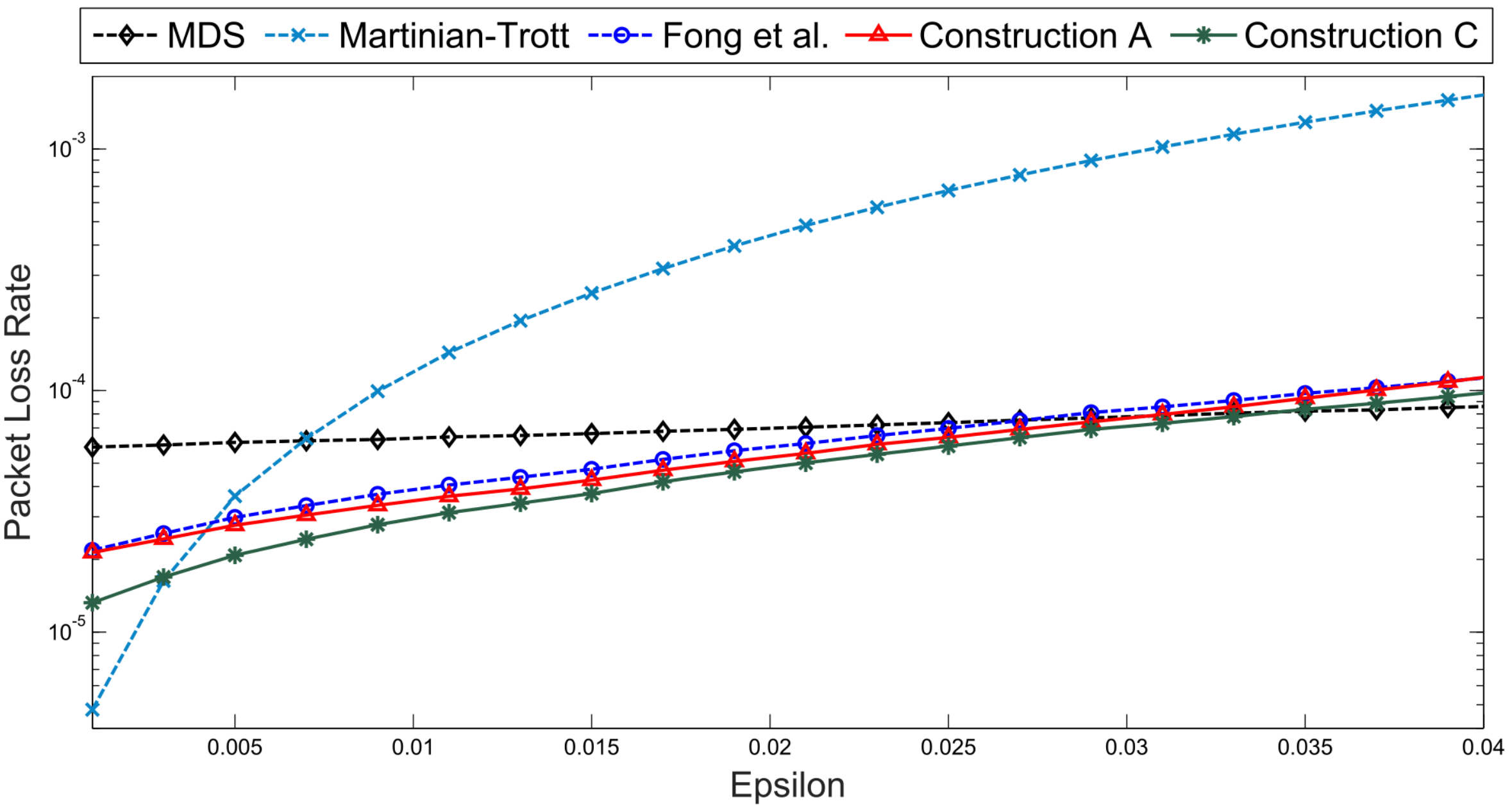}
			\caption{We consider the GE channel with parameters $\alpha = 5\times10^{-4},\beta = 0.5$ and $\epsilon$ varying from $0.001$ to $0.04$. The packet-loss probabilities of five coding schemes having rate, $R=0.5$ and delay parameter, $\tau=11$, are plotted; a diagonally-embedded $[12,6]$ MDS code, Martinian-Trott code \cite{MartTrotISIT07} for burst erasures having parameters $\{b=11,\tau=11\}$, Fong et al. code \cite{FongKhisti} for parameters  $\{a,b,\tau\} = \{4,8,11\}$ and two codes from the present paper; Constructions A and C, again, for parameters  $\{a,b,\tau\} = \{4,8,11\}$. }
			\label{fig:geplot}
		\end{figure}

		\subsection{The Fritchman Channel}
		The Fritchman Channel is a generalization of the two-state GE model. It is characterized by parameters $(\alpha,\beta,\epsilon,M)$. It consists of one good state $G$ and $M$ bad states, $E_1,\dotsc,E_M$. In the good state, the channel behaves as a Binary Erasure Channel with erasure probability $\epsilon$. All the packets transmitted while the channel is in any bad state, are lost with probability 1. Transitions between these states are governed by the following 3 rules:
		\ben
		\item If the channel is in the good state, it will remain in the same state with probability $(1-\alpha)$, or transition to $E_1$ with probability $\alpha$, in the next time slot. 
		\item If the channel is in a state $E_l,l\in [M-1]$, it will remain in the same state with probability $(1-\beta)$, or transition to $E_{l+1}$ with probability $\beta$, in the next time slot. 
		\item  If the channel is in state $E_M$, it will remain in the same state with probability $(1-\beta)$, or transition to $G$ with probability $\beta$, in the next time slot. 
		\een
		
		Figure \ref{fig:fritchman_state} shows the state transition probabilities for a Fritchman channel with $M = 4$. Note that the GE channel is a special case of Fritchman channel, where $M=1$. We perform simulations over Fritchman channels with $\alpha = 10^{-4},\beta =0.75, M=4$ and $\epsilon$ varying from $0.004$ to $0.05$. Fig. \ref{fig:fritchman_plot} illustrates the performance of these coding schemes, where the trends are similar to that of the GE channel.
		\begin{figure}[!htb]
			\centering
			\captionsetup{justification=centering}
			\includegraphics[scale=0.7]{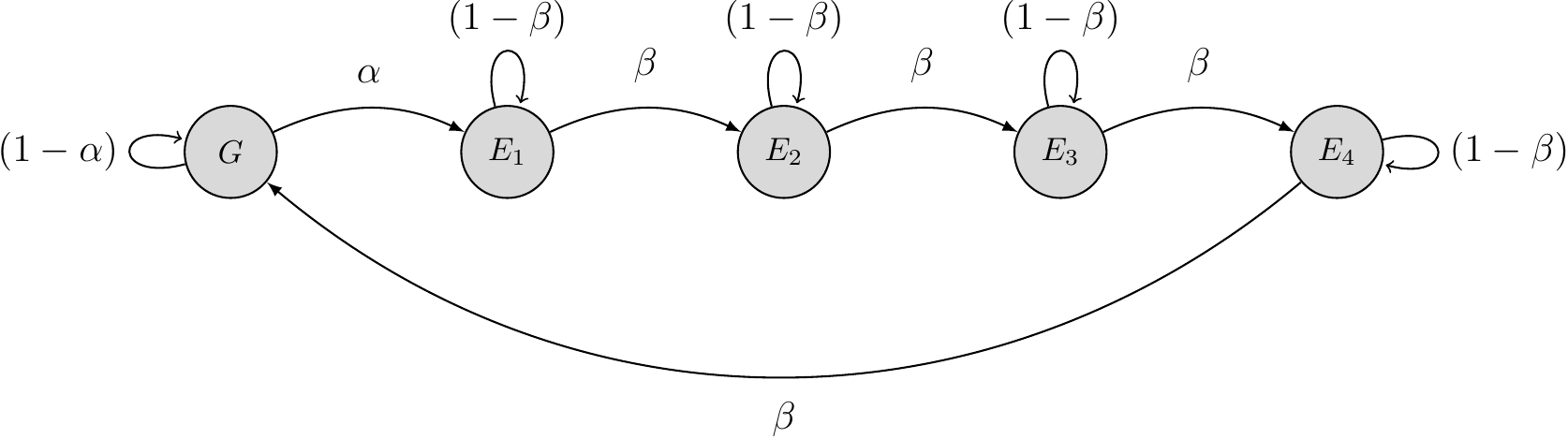}
			\caption{State transition diagram for the Fritchman channel with $M=4$ bad states. }
			\label{fig:fritchman_state}
		\end{figure}

	\begin{figure}
		\centering
		\includegraphics[scale=0.6]{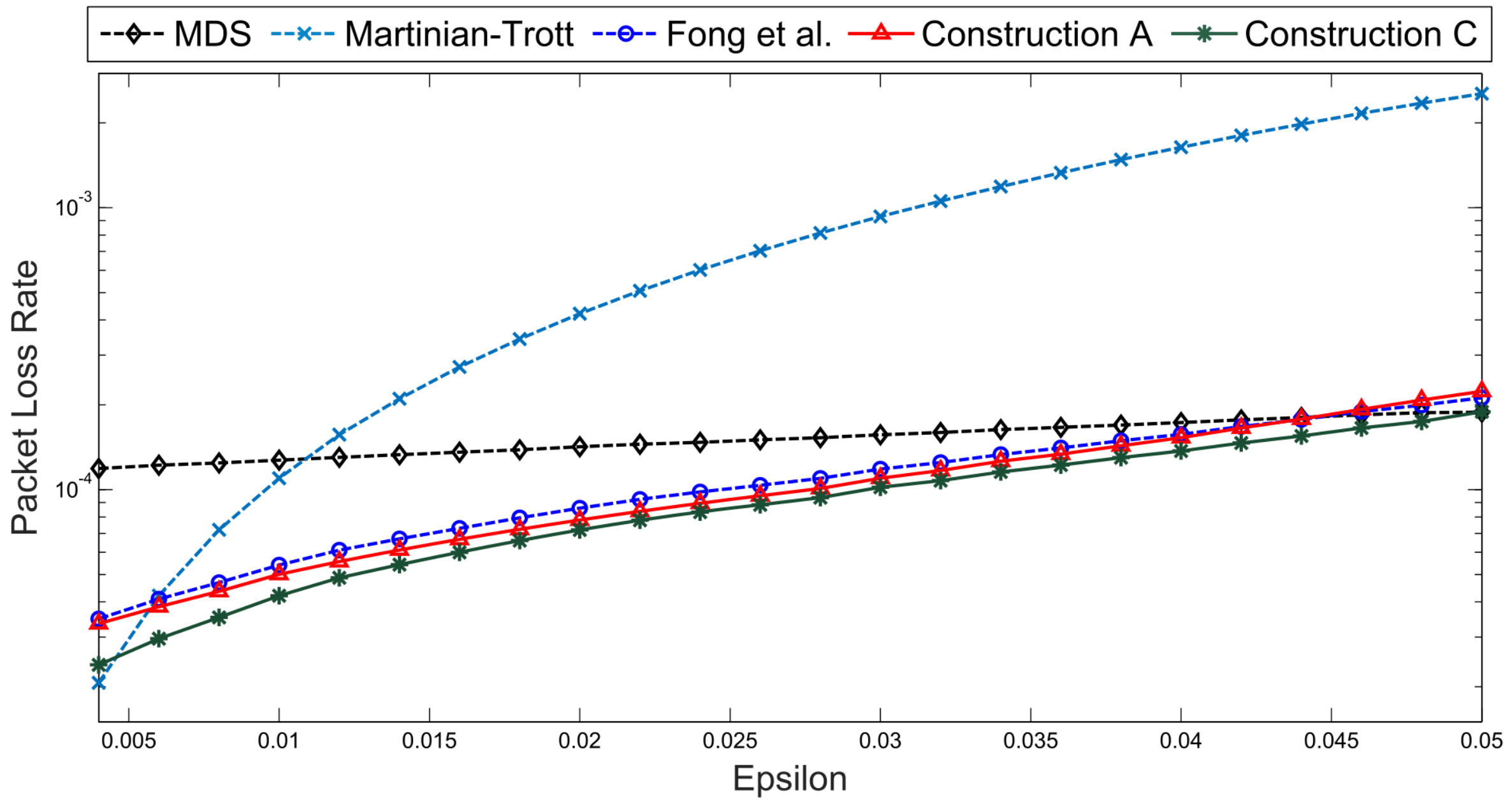}
		\caption{Here we consider the Fritchman channel with $5$ states, which is a generalization of the GE channel. The channel parameters are $\alpha =  10^{-4},\beta = 0.75, M=4$ and $\epsilon$ varying from $0.004$ to $0.05$. The five code constructions that are being compared, are the same as that we use in the case of GE channel. }
		\label{fig:fritchman_plot}
	\end{figure}
		
	\begin{remark}\normalfont
		As Construction A and the random convolutional code are not completely explicit, these simulations have been done on particular realizations of these constructions. The complete set of  Matlab codes used for these simulations, are available at: \texttt{https://github.com/deeptanshu04/StreamingCodes}. 
	\end{remark}	
		\appendices
		\section{Proof of Theorem \ref{thm:constr_A} (Construction A)}\label{app:proof_constrn_a} 
		\bit
		\item[]
		\item {\it Recovery from burst erasures of length $\leq b$}:  
		\bit
		\item Condition B1:   In order to establish that condition B1 is satisfied, we have to show that if there is a burst of $b$ erasures, where the burst begins with code symbol $\ell$, $\ell \in [0:\delta-1]$, we can recover the $\ell$-th symbol by accessing symbols belonging to the coordinates $[0:\ell-1]\cup[\ell+b:\ell+\tau]$. The $\ell$-th row \hrowl\ of the p-c matrix $H$ takes the form:
		
		\bean
		\hrowl\ & = & [\underbrace{0 \cdots 0}_{(\ell-1) \text{ symbols}} \ \alpha \ \underbrace{0 \cdots 0}_{(b-1) \text{ symbols}} \ \triangle\cdots \triangle\ \underbrace{0 \cdots 0}_{\text{last } (\delta-\ell) \text{ symbols}} ],
		\eean
		where $\triangle$'s denote field elements drawn from \fq.	It is clear that using this row, one can recover the $\ell$-th code symbol, when there is a burst of $b$ erasures starting with symbol $\ell$.
		
		\item Condition B2:  For $\delta \leq \ell \leq (\tau-a+1)$, let \pl\ denote the $(b\times b)$ square submatrix of $H$ corresponding to the column set $[\ell:\ell+b-1]$, i.e., 
		\bean
		\pl\ & \triangleq & H(:, [\ell:\ell+b-1]). 
		\eean
		In order to establish that the code is capable of recovering from any burst erasure of $b$ erasures occurring within coordinates $[\delta:n-1]$, it is sufficient to show that there exists an assignment of the variables \vij\ for which all the \plset\ have non-zero determinants.   For showing this, we begin by noting that $\det(\pl)$ for each $\ell$, is a multivariate polynomial in the variables \vij, which is of degree at most $1$ in each of the variables.  It follows that the product $\prod_{\ell} \det(\pl)$ is a multivariate polynomial that is of degree at most $ (\tau-a+1-\delta)+1= (\tau-b+2)$ in each variable.  If one can prove that each determinant is a non-zero polynomial, we will have established that the product is also a non-zero polynomial and we can then apply the Combinatorial Nullstellensatz (Lemma \ref{lem:comb_null}) to show that over a field of size $q > (\tau-b+2)$, there exists an assignment of the variables for which the product determinant is non-zero.  
		
		For showing that each individual determinant $\det(\pl)$ is a non-zero polynomial, we will make an  assignment of the \vij, specific to each $\ell$, for which $\det(\pl)$ is non-zero.  The specific assignment which we will use, is to set a selected `diagonal band' of the \vij\ to have the value $1$ and assign the value zero to the remaining \vij.  This is illustrated in Fig.~\ref{fig:construction_A_example_2}, for the cases $\ell=3,7,8$, with respect to the example p-c matrix $H$ shown in Fig. \ref{fig:construction_A_example}.  As can be seen from the Fig. \ref{fig:construction_A_example_2}, this assignment causes the resultant $(b \times b)$ matrix \pl, for $\ell\in[\delta:(\tau-a)]$, to be of the form: 
		\bea\label{eq:pl_case1}
		P_{\ell } & = & \left[ \begin{array}{cc} [0] & P_{\ell,3} \\ P_{\ell,2} & P_{\ell,4} \end{array} \right], 	
		\eea
		where both submatrices $P_{\ell,2}$ and $P_{\ell,3}$ are non-singular. It follows that:  
		\bean
		\det(P_\ell) & = & \det(P_{\ell,2}) \det(P_{\ell,3}) \ \neq \ 0. 
		\eean
		
		For the remaining case $\ell=(\tau-a+1)$, $P_{\ell}$ initially takes on the form: 
		\bea\label{eq:pl_case2}
		\left[ \begin{array}{cc} P_{\ell,1} & P_{\ell,3} \\ P_{\ell,2} & P_{\ell,4} \end{array} \right], 	
		\eea
		which can be reduced using elementary column operations, to the form:
		\bea\label{eq:pl_col_reduced_form}
		\left[ \begin{array}{cc} [0] & Q_{\ell,3} \\ Q_{\ell,2} & Q_{\ell,4} \end{array} \right], 	
		\eea
		where both $Q_{\ell,2}$ and $Q_{\ell,3}$ are non-singular. In Fig. \ref{fig:construction_A_example_3}, we illustrate the resultant matrix corresponding to $P_8$ (since $\ell=(\tau-a+1)=8$ in this example) in Fig.\ref{fig:construction_A_example_2}, following the elementary column operations.  
		
		Having outlined the assignment with respect to the example p-c matrix $H$ given in Fig. \ref{fig:construction_A_example}, we provide below the explicit assignments for the general case, followed by a justification as to why the relevant submatrices appearing in \eqref{eq:pl_case1} and \eqref{eq:pl_col_reduced_form} are non-singular. 
		\begin{figure}[!htb]
			\centering
			\captionsetup{justification=centering}
			\includegraphics[scale=0.6]{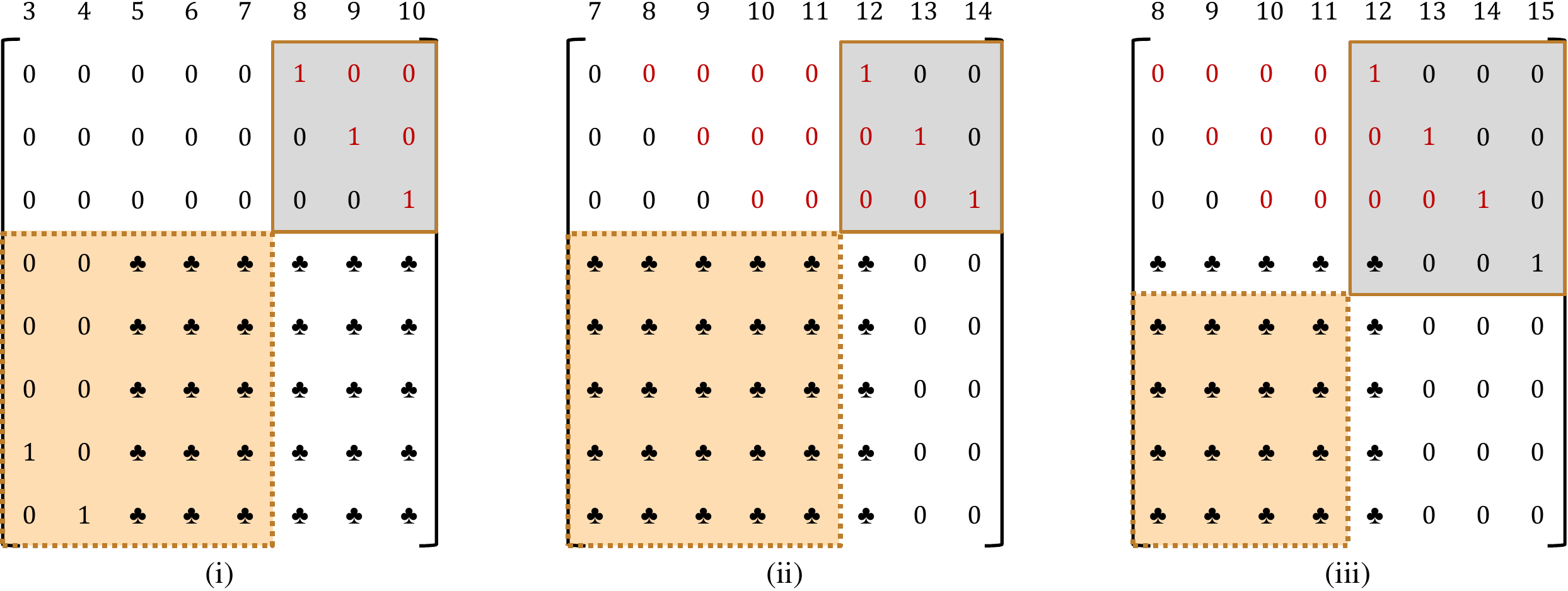}
			\caption{Three $(b \times b)$ square submatrices of the p-c matrix in Figure \ref{fig:construction_A_example}. The red-colored entries denote an assignment of values to the \vij. The assignments ensure respectively, that all determinants, $\det(P_3),\det(P_7)$ and $\det(P_8)$ are non-zero. The submatrix demarcated by dotted lines in each of the figures (i)-(iii) is invertible, as it is a square submatrix of the matrix $H(3:7,0:12)$.  This is because the submatrix $H(3:7,0:12)$ is of the form $[I_5\ C]$, where $C$ is a Cauchy-like matrix. Each submatrix demarcated by solid lines is also invertible, as it corresponds to a lower-triangular matrix with $1$'s along the diagonal.  }
			\label{fig:construction_A_example_2}
		\end{figure}
		\begin{figure}[!htb]
			\centering
			\captionsetup{justification=centering}
			\includegraphics[scale=0.6]{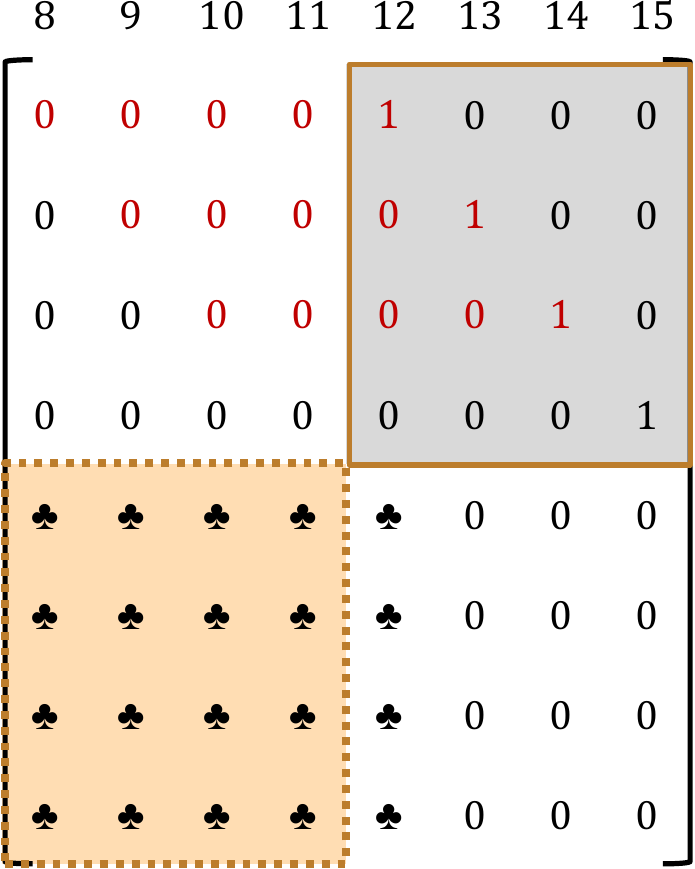}
			\caption{The resultant matrix after performing column operations on $P_8$.  }
			\label{fig:construction_A_example_3}
		\end{figure}
		When $\ell\in[\delta:(\tau-a)]$, for $0\leq i\leq (\delta-1)$ and $b+i\leq j\leq \tau+i$, we make the assignment:
		\bea\label{eq:assign_1}
		v_{i,j} & = & \left\{ \begin{array}{rl} 1, & \text{if } j=(a+i+\ell),\\
			0, & \text{otherwise}. \end{array} \right.
		\eea
		
		For the remaining $\ell=(\tau-a+1)$ case, we make the assignment:
		\bea\label{eq:assign_2}
		v_{i,j} & = & \left\{ \begin{array}{rl} 1, & \text{if } j=(a-1+i+\ell), \\
			0, & \text{otherwise}. \end{array} \right.
		\eea
		
		\begin{figure}[!htb]
			\centering
			\captionsetup{justification=centering}
			\includegraphics[scale=0.6]{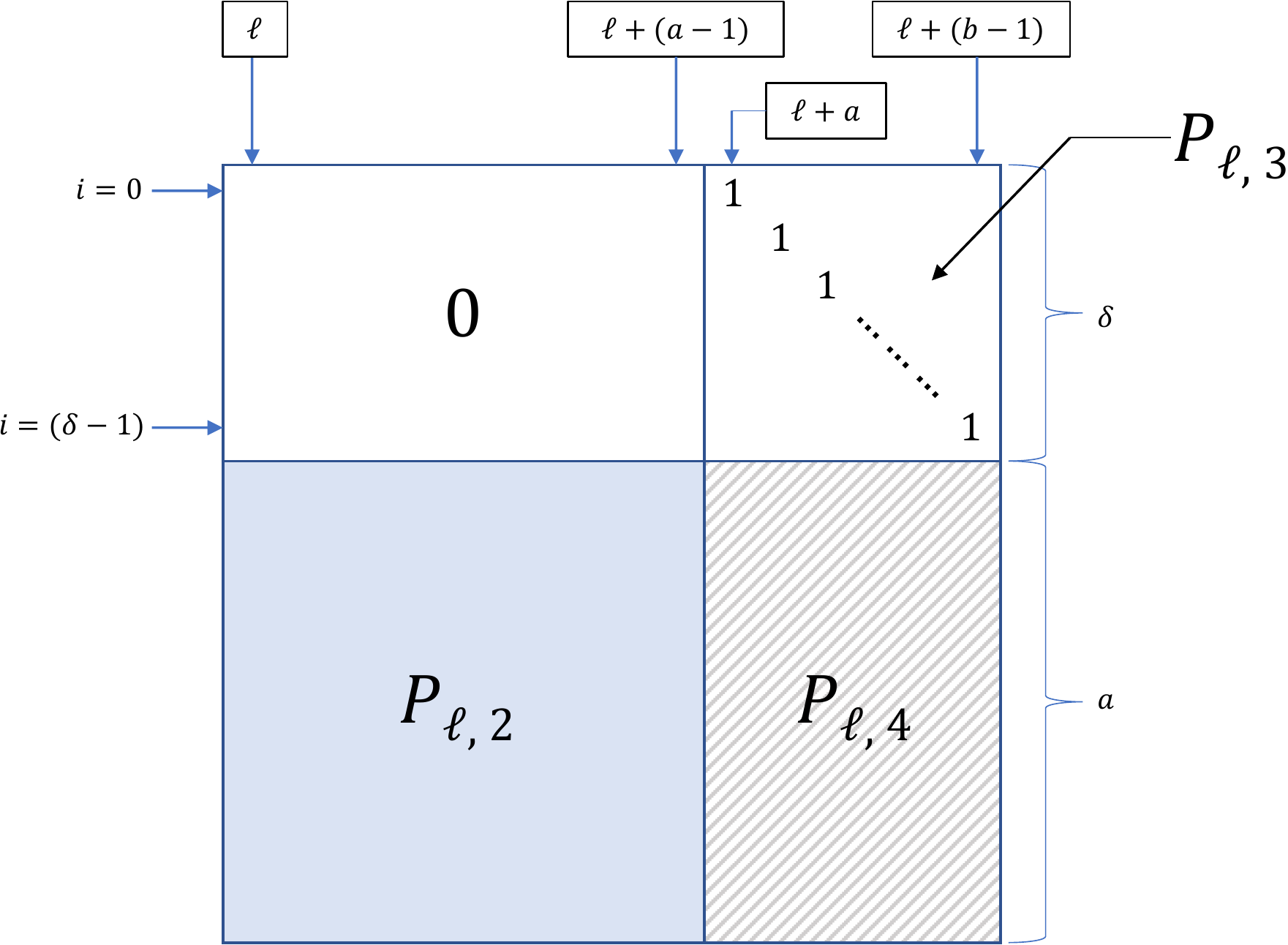}
			\caption{In this figure, we illustrate the structure of the matrix $H(:,[\ell:\ell+b-1])$ after the assignment \eqref{eq:assign_1}. Here $P_{\ell,2}$ is an $(a\times a)$ submatrix of an $(a\times(\tau+1))$ MDS matrix and hence is invertible.}
			\label{fig:construction_A_nz_poly_proof_1}
		\end{figure}

		\begin{figure}[!htb]
			\centering
			\captionsetup{justification=centering}
			\includegraphics[scale=0.6]{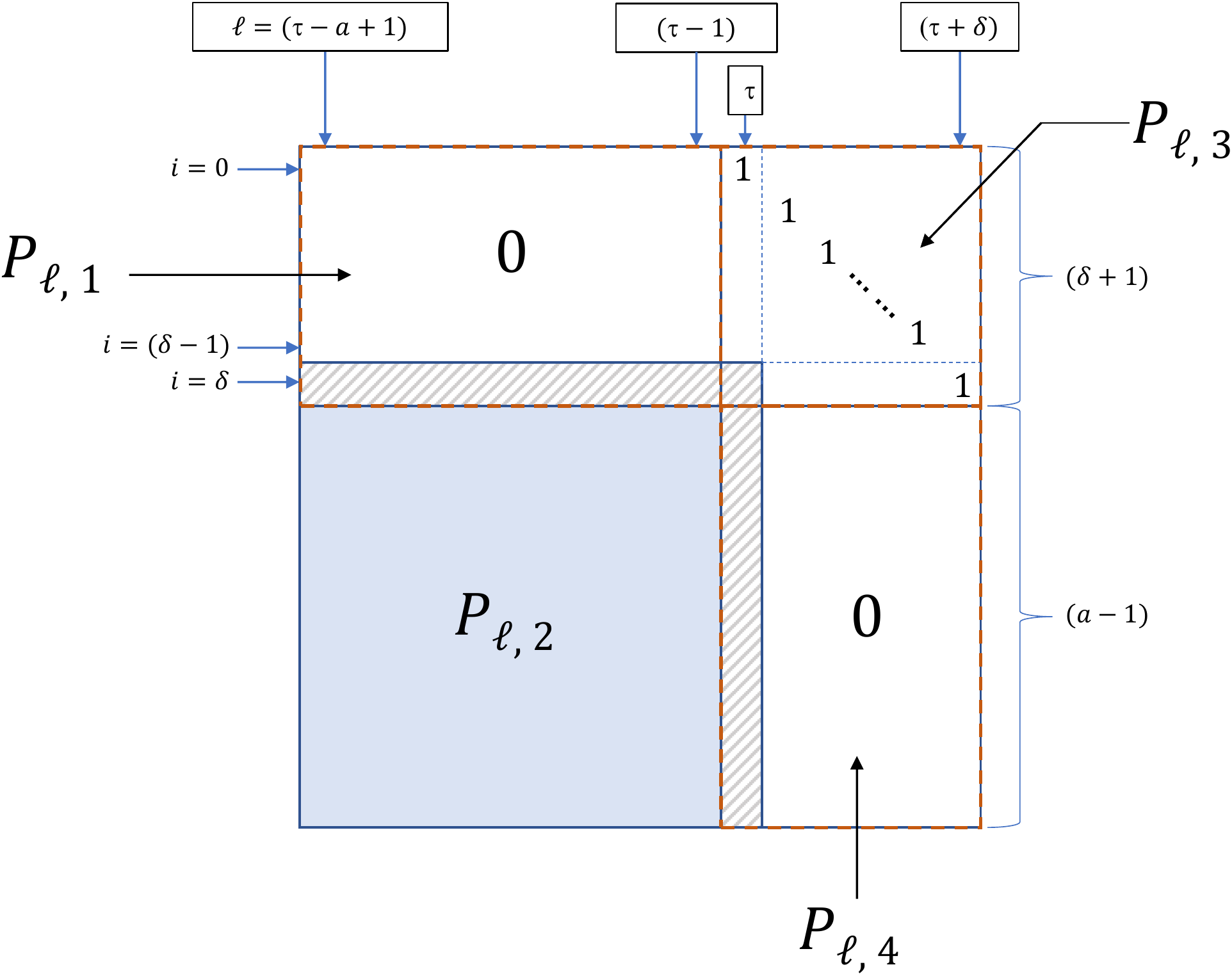}
			\caption{In this figure, we illustrate the structure of $H(:,[\ell:\ell+b-1])$, for $\ell=(\tau-a+1)$, after the assignment \eqref{eq:assign_2}. Here $P_{\ell,2}$ is an $(a-1)\times (a-1)$ submatrix of the $(a-1)\times \tau$ MDS matrix $H([\delta+1:b-1],[1:\tau])$ and is hence invertible.  We note that the non-zero entries appearing in $P_{\ell,1}$ and the non-zero entry in the bottom left corner of $P_{\ell,3}$ can be made $0$ via elementary column operations without altering any of the other entries in $P_\ell$.}
			\label{fig:construction_A_nz_poly_proof_2}
		\end{figure}
		For the general case, the submatrices $\{P_{\ell,i}\}_{i=1}^4$ appearing in \eqref{eq:pl_case1}, \eqref{eq:pl_case2} are as shown in Fig. \ref{fig:construction_A_nz_poly_proof_1} and Fig. \ref{fig:construction_A_nz_poly_proof_2}, for the cases $\ell\in[\delta:\tau-a]$ and $\ell=(\tau-a+1)$, respectively. It remains to formally prove invertibility of the matrices $\{P_{\ell,2},P_{\ell,3}\}$ appearing in \eqref{eq:pl_case1} and of the matrices $\{Q_{\ell,2},Q_{\ell,3}\}$ appearing in \eqref{eq:pl_col_reduced_form}.   We do this below. 
		
		For the case $\ell\in[\delta:\tau-a]$, it can be verified that $P_{\ell,2}\triangleq H(\delta:b-1,\ell:\ell+a-1)$ is an $(a\times a)$ submatrix of the $(a\times (\tau+1))$ MDS matrix $H(\delta:b-1,0:\tau)$ and is hence invertible. The matrix $P_{\ell,3}$, in this case, is the identity matrix $I_\delta$ and is thus invertible.
		
		Similarly, for $\ell=(\tau-a+1)$, $P_{\ell,2}\triangleq H(\delta+1:b-1,\tau-a+1:\tau-1)$ is an $(a-1)\times (a-1)$ submatrix of the  MDS matrix $H(\delta+1:b-1,1:\tau)$ of size $(a-1)\times \tau$ and is hence invertible. For the case $\ell=(\tau-a+1)$, the matrix $P_{\ell,3}$, as can be observed in Fig. \ref{fig:construction_A_nz_poly_proof_2}, takes on the form $I_{\delta+1}$ with an extra non-zero element in the bottom left corner. However applying a series of elementary column operations, one can make all the non-zero entries of the row $H(\delta,\tau-a+1:\tau+\delta)$ with the exception of the last entry, $H(\delta,\tau+\delta)$, to be $0$, without perturbing any of the other entries of \pl. Here the column operations can be described as follows. The last column of $P_{\tau-a+1}$  has a $1$ in row $\delta$ and zeros elsewhere.  We zero out all the other non-zero entries of the matrix $P_{\tau-a+1}$ in row $\delta$ by subtracting an appropriate scalar multiple of the last column. This will cause  \pl\ to be of the form \eqref{eq:pl_col_reduced_form}, where $Q_{\ell,2}=P_{\ell,2}$ and $Q_{\ell,3}=I_{\delta+1}$, both of which are invertible. 
		\eit
		\item {\it Recovery from $\leq a$ random erasures}:
		\bit
		\item Condition R1: Fix $\ell\in[0:\delta-1]$. For any $i$ from the set $\rl\triangleq [0:\ell]\cup [\delta+1:b-1]$, the $i$-th row $H(i,0:\ell+\tau)$ can be verified to belong to the row space of the shortened p-c matrix \hsupl. This is because each row $H(i,:)$ has a run of $(\delta-\ell)$ zeros across columns $[\ell+\tau+1:n-1]$ (see Fig. \ref{fig:construction_A_RE_1} for an example). Hence while discussing the recoverability of $\ell$-th code symbol, we will restrict ourselves to the rows $R_\ell$ (or its subsets) of the p-c matrix $H$. We divide the proof into two cases.
		\begin{figure}[!htb]
			\centering
			\captionsetup{justification=centering}
			\includegraphics[scale=0.6]{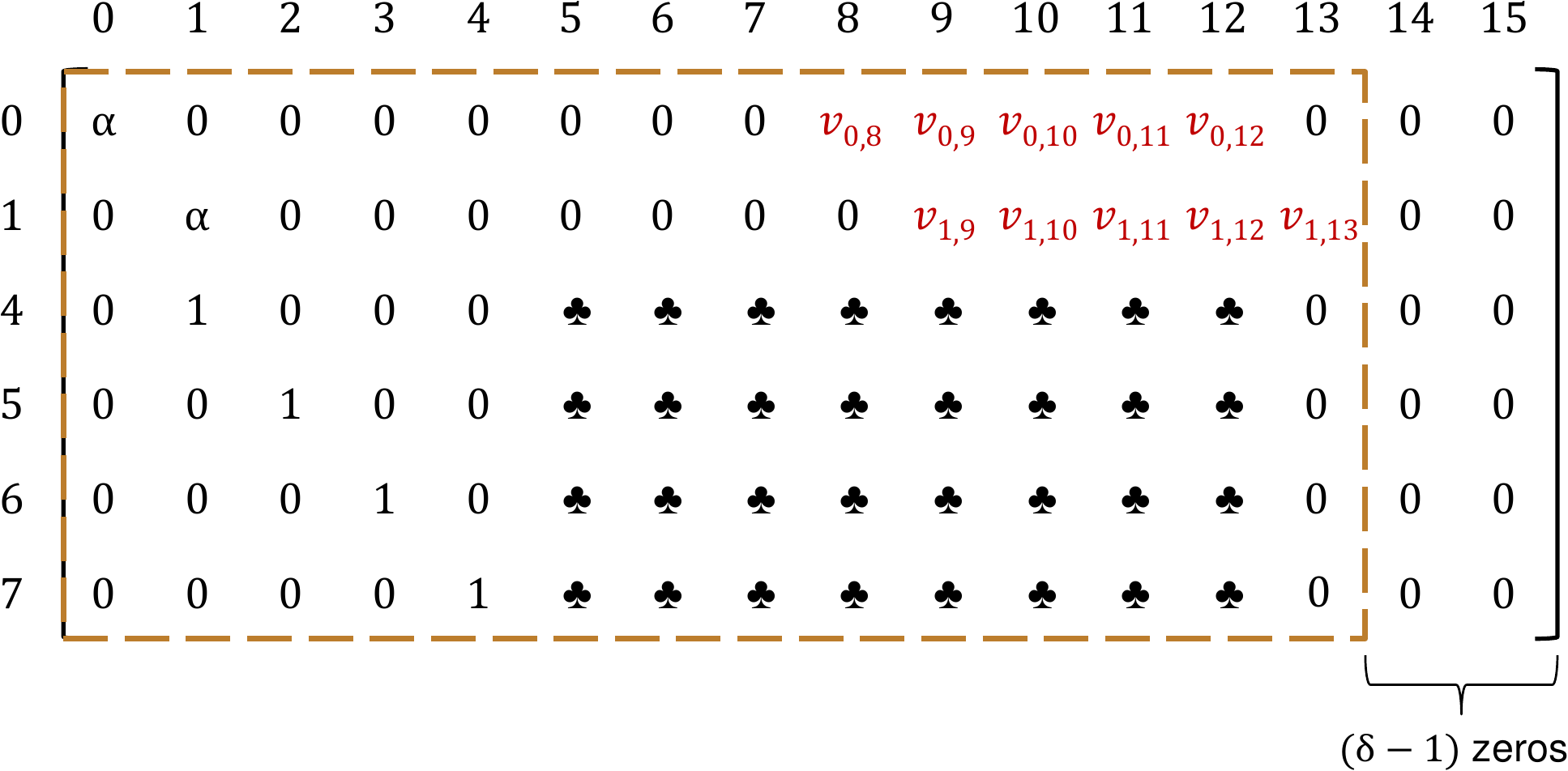}
			\caption{Consider the p-c matrix $H$ given in \ref{fig:construction_A_example}. Let $\ell=1$. All the rows $[0:1]\cup[4:7]$ of $H$ have a run of zeros among the last $(\delta-\ell)=2$ coordinates. Hence all the rows of the matrix demarcated here in dashed lines lie in the row space of $H^{(1)}$.}
			\label{fig:construction_A_RE_1}
		\end{figure}
		
		\bit
		\item (\textit{Case I}: $\ell=0$): Here we consider a subset of the rows $R\triangleq [\delta+1:b-1]\subseteq R_0$. The $(a-1)\times (\tau+1)$ matrix $H(R,0:\tau)$ has a zero-column at column $0$ and rest of the columns of $H(R,0:\tau)$ form an $(a-1)\times \tau$ MDS matrix (see Fig. \ref{fig:construction_A_RE_2}). Hence R1 is satisfied for $\ell=0$. 
		
		\begin{figure}[!htb]
			\centering
			\captionsetup{justification=centering}
			\includegraphics[scale=0.6]{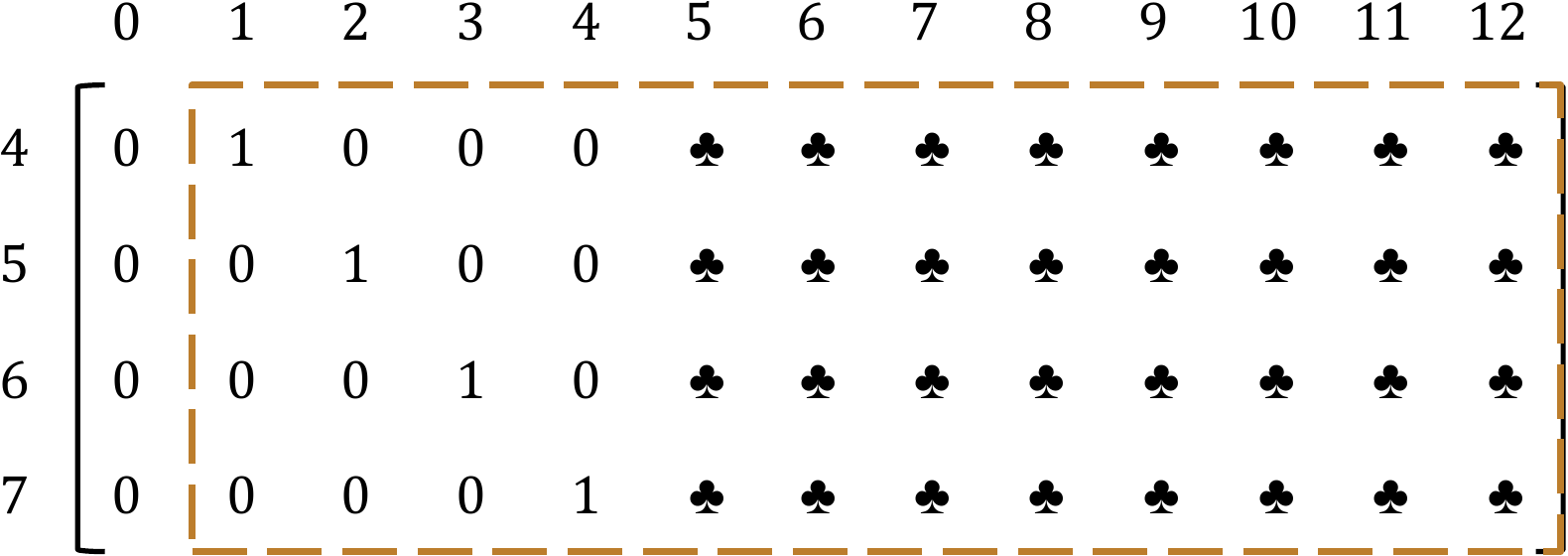}
			\caption{An illustration of $H(R,0:\tau)$ for the p-c matrix given in \ref{fig:construction_A_example}. The submatrix of size $4\times 12$ demarcated by dashed lines, is an MDS matrix.}
			\label{fig:construction_A_RE_2}
		\end{figure}
		\item (\textit{Case II}: $\ell\in[1:\delta-1]$): Here we consider the rows $S_\ell\triangleq\{\ell\}\cup[\delta+1:b-1]\subseteq R_\ell$ during the proof. Partition the matrix $H(S_\ell,0:\ell+\tau)$ in the following form:
		\bean
		\left[ \begin{array}{c} U_{\ell}\\ L_{\ell} \end{array} \right], 	
		\eean
		where  $U_\ell\triangleq H(\ell,0:\ell+\tau)$ and $L_\ell\triangleq H([\delta+1:b-1],0:\ell+\tau)$ (see Fig. \ref{fig:construction_A_RE_3} for an example case of $\ell=2$).   Note that the rows of $H(S_\ell,0:\ell+\tau)$ belong to the row space of \hsupl\ and so we are justified in working with the submatrix $H(S_\ell,0:\ell+\tau)$.  Note also that $U_\ell=[U_{\ell,0}\ \ U_{\ell,1}\ \ \cdots\ \ U_{\ell,\ell+\tau}]$ is a row vector.
		
		\begin{figure}[!htb]
			\centering
			\captionsetup{justification=centering}
			\includegraphics[scale=0.6]{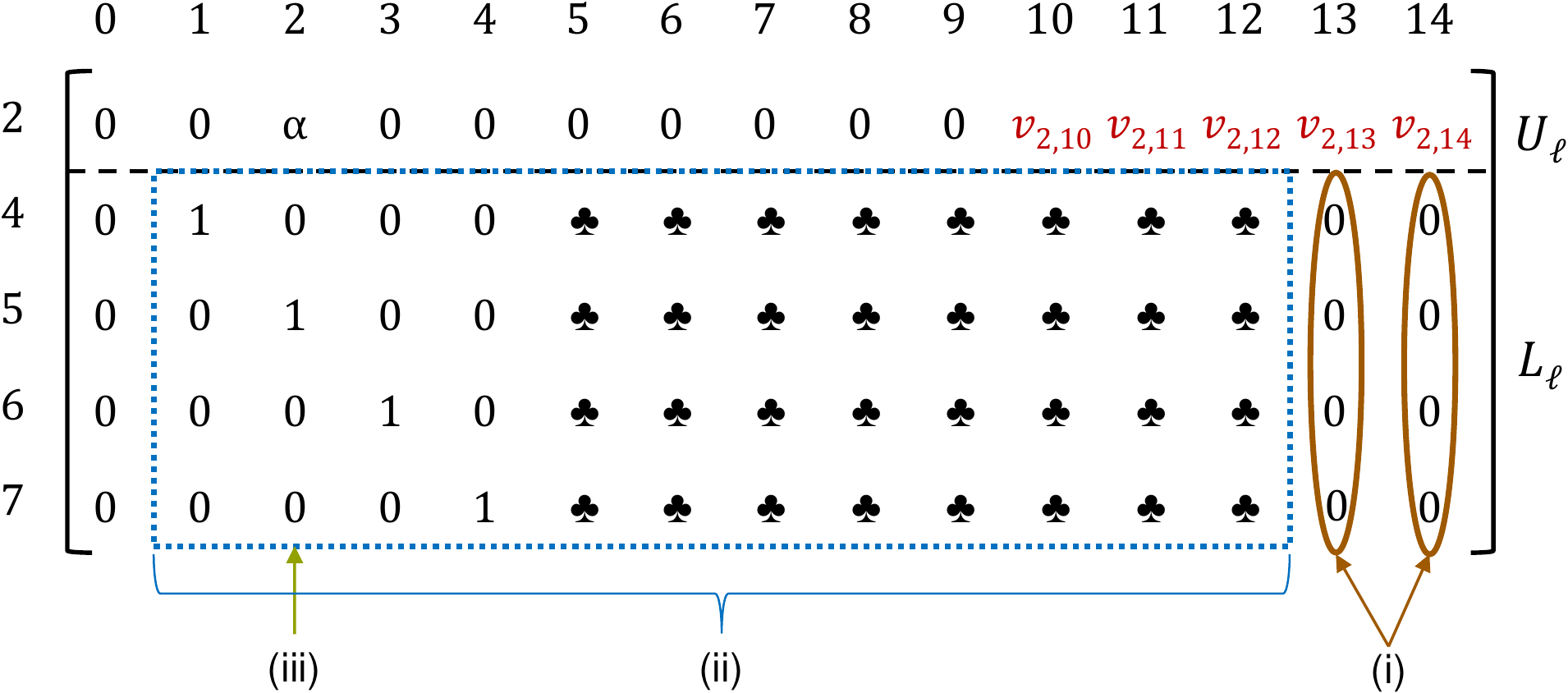}
			\caption{Let $\ell=2$, $S_\ell=\{2\}\cup[4:7]$ for the p-c matrix given in \ref{fig:construction_A_example}. The matrix $H(S_\ell,0:\ell+\tau)$ is partitioned into $U_\ell$ and $L_\ell$. Here the matrix, which is a part of $L_\ell$ and demarcated by dotted lines, is an MDS matrix.}
			\label{fig:construction_A_RE_3}
		\end{figure}
		
		An inspection of $L_\ell$ reveals the following facts: 
  	    \begin{enumerate}[label=(\roman*)]
		\item the columns of $L_\ell$ corresponding to the coordinates $[1+\tau:\ell+\tau]$ are all zero; in other words, the entries of  $H(\delta+1:b-1,1+\tau:\ell+\tau)$ are all zero; 
		
		\item $L_\ell(:,1:\tau)$ is an MDS matrix of size $(a-1)\times \tau$ over the subfield $\mathbb{F}_q$ of $\mathbb{F}_{q^2}$. 
		
		\item the $\ell$-th column of the matrix $L_{\ell}$  is within the MDS matrix $L_\ell(:,1:\tau)$ since $\ell\in[1:\delta-1]$ and $\tau\geq b>\delta$.
		\een
		In Fig. \ref{fig:construction_A_RE_3}, we illustrate these facts with respect to the example case of $\ell=2$.
		
		Now assume contrary to the condition R1, that there exists a set of $a$ coordinates, $A_\ell\subseteq [\ell:\ell+\tau]$, with $|A_\ell|=a$ and $\ell\in A_\ell$ such that $\underline{h}^{(\ell)}_{\ell}$ lies in the span of columns $\{\underline{h}^{(\ell)}_{i}\mid i\in A_\ell\setminus\{\ell\} \}$, i.e., 
		\beq\label{eq:constrn_a_re_lc}	
		\underline{h}^{(\ell)}_{\ell}=\sum_{i\in A_\ell\setminus\{\ell\}}a_i\underline{h}^{(\ell)}_{i}, 
		\eeq
		where $a_i\in\mathbb{F}_{q^2}$.  Equation \eqref{eq:constrn_a_re_lc} implies the following:
		
		\beq\label{eq:constrn_a_l_lc}	
		L_\ell(:,\ell)=\sum_{i\in A_\ell\setminus\{\ell\}}a_iL_\ell(:,i), 
		\eeq
		
		\beq  \label{eq:constrn_a_u_lc}
		U_{\ell,\ell}=\sum_{i \in A_\ell \setminus \{\ell \}} a_i U_{\ell,i}. 
		\eeq

		As $A_\ell\subseteq [\ell:\ell+\tau]$ by assumption, (i), (ii) and (iii) would then imply that in order for \eqref{eq:constrn_a_l_lc} to happen, $A_\ell\subseteq [\ell:\tau]$ and the coefficients $\{a_i\}$ are uniquely determined, must be all-non-zero and must all belong to the subfield $\mathbb{F}_q$. 
		
		Now consider the equation \eqref{eq:constrn_a_u_lc}. We have $U_{\ell,\ell}=\alpha\in\fqs\setminus\fq$ and $U_{\ell,i}\in\fq$ for $i\neq \ell$. Together with the constraint imposed by \eqref{eq:constrn_a_l_lc} that $\{a_i\}\subseteq \fq$, we have $LHS\in\fqs\setminus\fq$ and $RHS\in \fq$ for \eqref{eq:constrn_a_u_lc}. This clearly contradicts \eqref{eq:constrn_a_re_lc}. Thus our assumption as to the existence of $A_\ell$ is invalid, which proves that $H$ satisfies the condition R1. In Fig. \ref{fig:construction_A_RE_4}, we consider an example case where $\ell=2$ and  $A_\ell=\{2,4,7,10,12\}$.

		\begin{figure}[!htb]
			\centering
			\captionsetup{justification=centering}
			\includegraphics[scale=0.6]{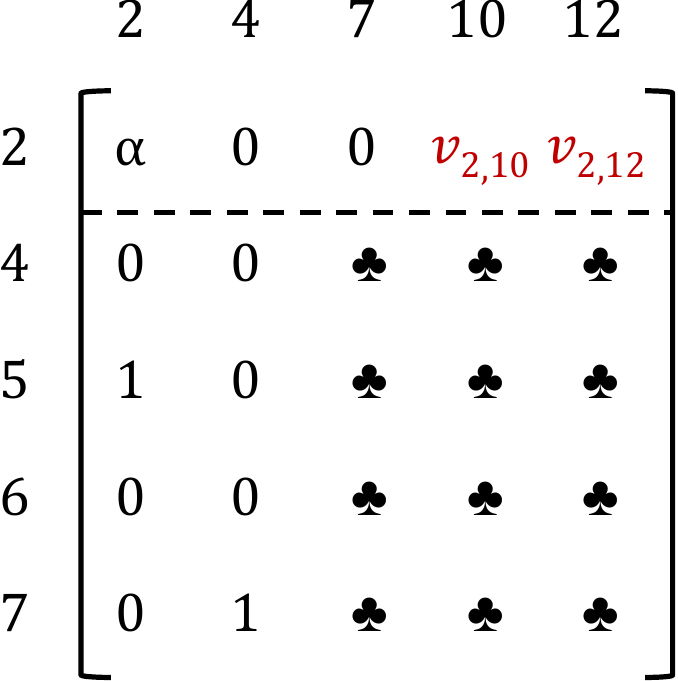}
			\caption{Let $\ell=2,A_\ell=\{2,4,7,10,12\}$. Here any $4$ columns of the $4\times 5$ submatrix indexed by rows $[4:7]$ form an independent set. The linear combination shown in \eqref{eq:constrn_a_re_lc} when restricted to this submatrix would then imply that the choice of coefficients $\{a_4,a_7,a_{10},a_{12}\}$ is unique, with all the coefficients drawn from \fq$\setminus\{0\}$. However $\alpha\in\mathbb{F}_{q^2}\setminus\mathbb{F}_q$ and $\{v_{2,10},v_{2,12}\}\subseteq\mathbb{F}_q$. Hence we arrive at a contradiction to \eqref{eq:constrn_a_re_lc}.}
			\label{fig:construction_A_RE_4}
		\end{figure}
		\eit
		\item Condition R2:  Let $S\triangleq [\delta:b-1]$. We make the following three observations:
		\ben
		\item  The columns $\left\{H(S,i)\mid i\in [\tau+1:\tau+\delta-1]\right\}$ are all-zero columns.
		
		\item The matrix formed using the remaining columns of $H(S,\delta:n-1)$, i.e., $H(S,[\delta:n-1]\setminus[\tau+1:\tau+\delta-1])$, forms an MDS matrix of size $a\times (a+(\tau-b)+2)$.
		
		\item  The collection of columns $\left\{ \underline{h}_{j} \mid j\in[\tau+1:\tau+\delta-1] \right\}$ forms a linearly independent set as required by condition B2, since it is a subset of the larger linearly independent set of $b$ columns, $\left\{ \underline{h}_{j} \mid j\in[\tau-a+1:\tau+\delta] \right\}$.
		\een 
		In Fig. \ref{fig:construction_A_RE_5}, we illustrate these observations with the help of an example. 
		
		In order to establish Condition R2, assume there exists a set $\al\subseteq [\delta:n-1]$ with $|\al|\leq a$ such that:
		\beq\label{eq:constrn_a_re2_lc}	
		\sum_{i\in A_\ell}a_i\underline{h}_{i}=0, 
		\eeq
		where the $\{a_i\}$ are all $\neq 0$. From the above observations 1) and 2) above, we have that \al\ cannot contain any of the columns in $[\delta:n-1]\setminus[\tau+1:\tau+\delta-1]$. However on the other hand, observation 3) states that the remaining columns form an independent set. Hence \eqref{eq:constrn_a_re2_lc} is not feasible and thus condition R2 is satisfied.
		\begin{figure}[!htb]
			\centering
			\captionsetup{justification=centering}
			\includegraphics[scale=0.6]{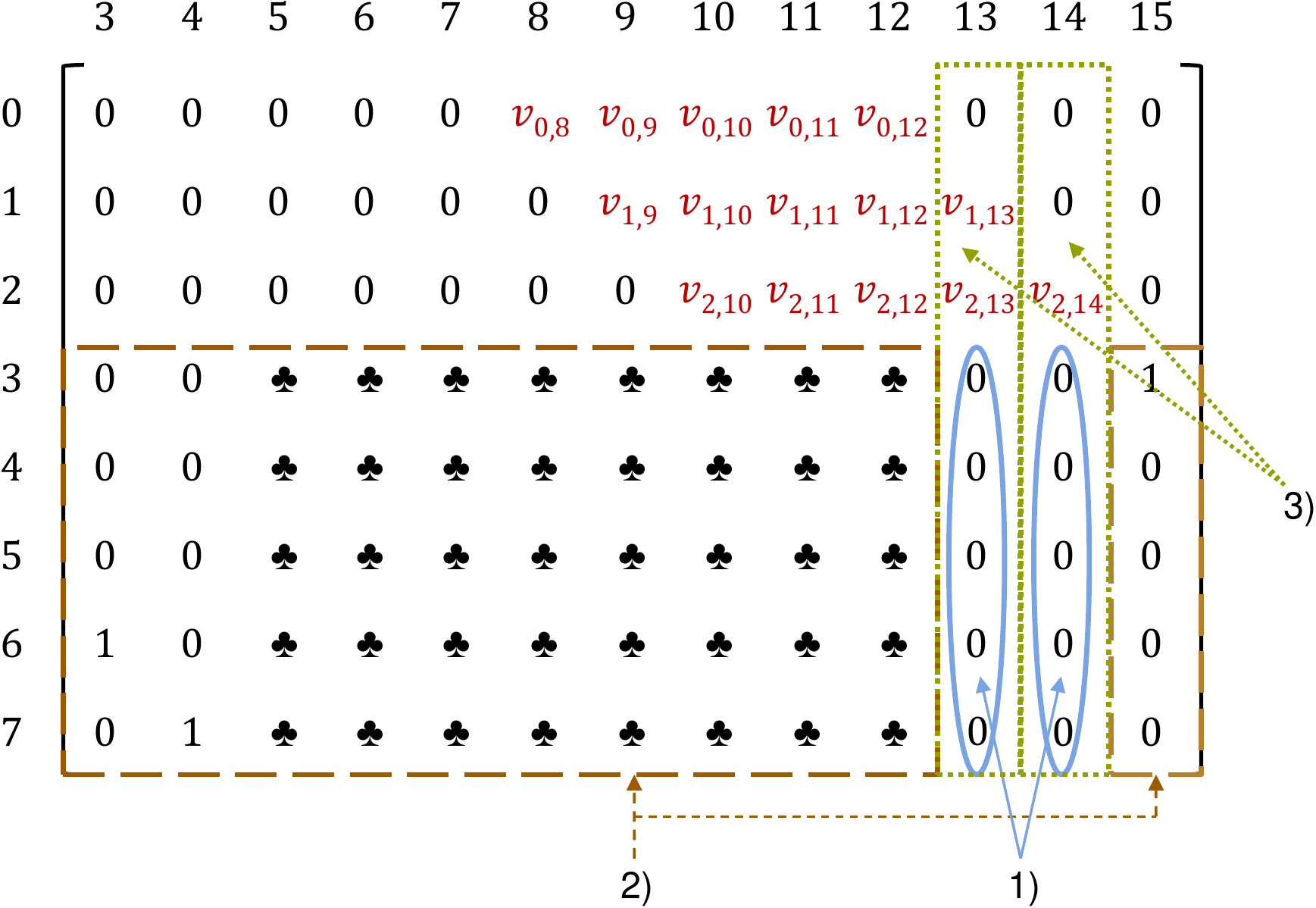}
			\caption{An illustration of $H(:,\delta:n-1)$ for the p-c matrix given in \ref{fig:construction_A_example}. In the figure, the three observations we make in the proof of condition R2 are identified.}
			\label{fig:construction_A_RE_5}
		\end{figure}
		\eit
		\eit
		\section{Proof of Theorem \ref{thm:constr_B} (Construction B)}\label{app:proof_constrn_b}
        	\bit
        \item[]
        \item {\it Recovery from burst erasure of length $\leq b$}: 
        \bit
        \item Condition B1: For $\ell\in[0:\delta-1]$, the $\ell$-th row \hrowl\ of the p-c matrix $H$ takes the form:
        
        \bean
        \hrowl\ & = & [\underbrace{\triangle \cdots \triangle}_{(\ell-1) \text{ symbols}} \ \ast \ \underbrace{0 \cdots 0}_{(b-1) \text{ symbols}} \ \triangle\cdots\triangle \ \underbrace{0 \cdots 0}_{\text{last } (\delta-\ell) \text{ symbols}} ],
        \eean
        where $\triangle$'s indicate elements over \fq\ and $\ast$ indicates a non-zero element over \fq. Similar to the corresponding case appearing in the proof of Theorem \ref{thm:constr_A}, using the structure of \hrowl, one can argue that $H$ satisfies condition B1.

        \item Condition B2: For $\delta \leq \ell \leq (\tau-a+1)$, let \pl\ denote the $(b\times b)$ square submatrix of $H$ corresponding to the column set $[\ell:\ell+b-1]$ as in Construction A, i.e., 
        \bean
        \pl\ & \triangleq & H(:, [\ell:\ell+b-1]). 
        \eean

        Consider the partition of $H(:,\delta:\delta+\tau)$ into three parts; $H(:,\delta:\tau)$, $H(:,1+\tau:a-1+\tau)$ and $H(:,a+\tau:\delta+\tau)$ (see Fig. \ref{fig:construction_B_BE_proof_1}). All possible burst erasure patterns involving coordinates $[\delta:n-1]$ can be classified into five cases. Let $\bl\triangleq [\ell:\ell+b-1]$.
        
        \begin{figure}[!htb]
        	\centering
        	\captionsetup{justification=centering}
        	\includegraphics[scale=0.6]{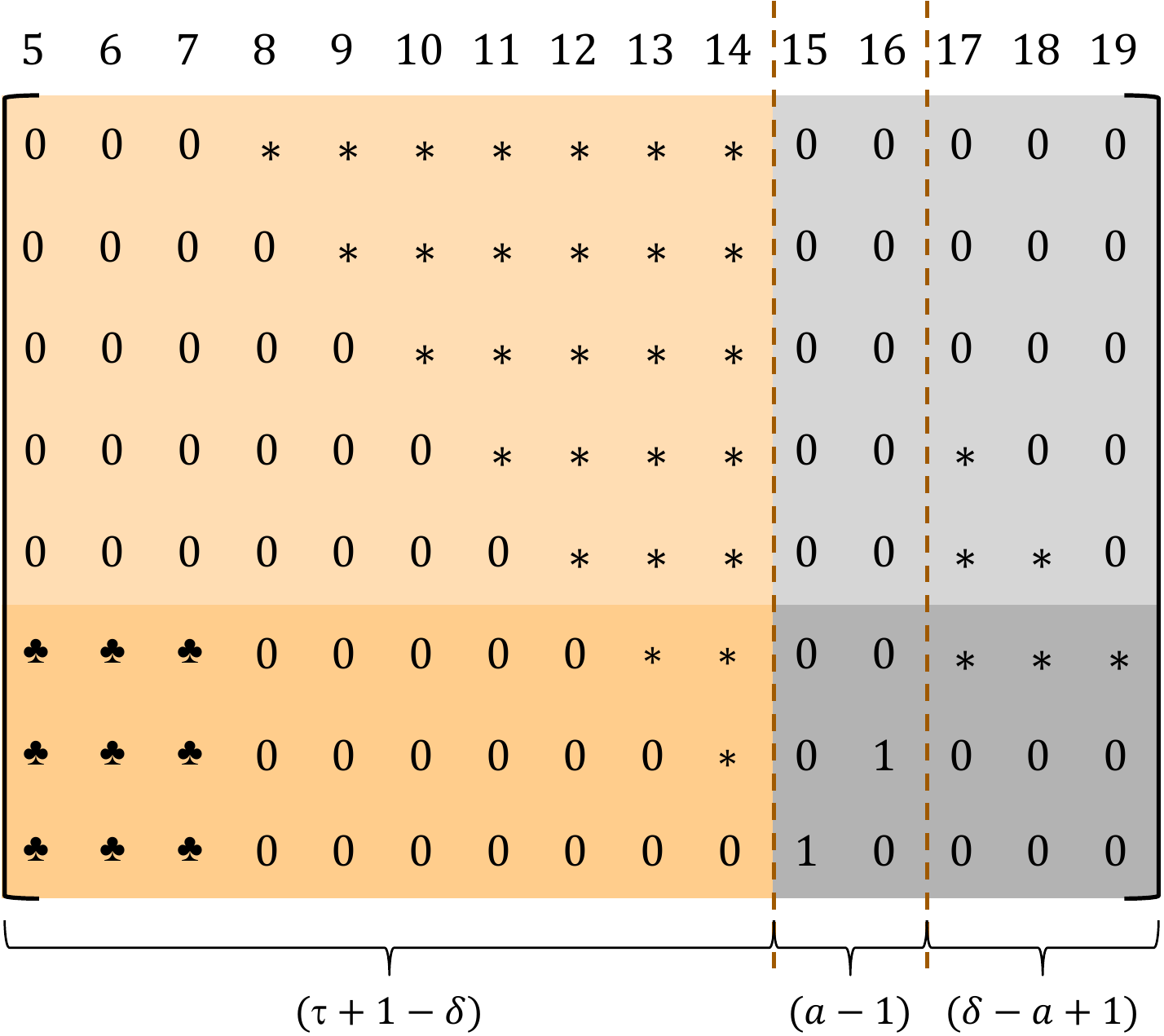}
        	\caption{Consider the p-c matrix given in Fig. \ref{fig:delta_geq_a_example_step4}. The $8 \times 15$ matrix $H(:,5:19)$ is partitioned into three regions; $H(:,5:14)$, $H(:,15:16)$ and $H(:,17:19)$. }
        	\label{fig:construction_B_BE_proof_1}
        \end{figure}
        \bit
        \item	 (\textit{Case I}: $\bl\subseteq [\delta:\tau]$ and $ \ell \leq b$):  In words, this is the scenario where  \pl\ is contained within $H(:,\delta:\tau)$ and \pl\ contains the column $\underline{h}_b$. Here, it can be observed that \pl\ takes the form:
        \bea \label{eq:constrn_B_p_l_form}
        P_{\ell } & = & \left[ \begin{array}{cc} P_{\ell,1} & P_{\ell,3} \\ P_{\ell,2} & [0] \end{array} \right], 	
        \eea
        where $P_{\ell,3}\triangleq H(0:\ell-1,b:\ell+b-1)$  is an invertible upper triangular matrix and $P_{\ell,2}\triangleq H(\ell:b-1,\ell:b-1)$ is an invertible matrix. Applying Lemma \ref{lem:mds_consec_rows}, the $\ell\times\ell$ matrix $P_{\ell,3}$ is invertible as it consists of $\ell$ non-zero columns chosen from $\gmds(0:\ell-1,:)$. Note that  \gmds\ is the ZB generator matrix  used in Step-a of the construction. $P_{\ell,2}$ is invertible as it is a square submatrix of the Cauchy-like matrix $H(\delta:b-1,\delta:b-1)$. Hence \pl\ is invertible. In Fig. \ref{fig:construction_B_BE_proof_2}, we illustrate case I when $\ell=5$ and $6$, for the p-c matrix shown in Fig. \ref{fig:delta_geq_a_example_step4}.
        
        \begin{figure}[!htb]
        	\centering
        	
        	\begin{subfigure}[b]{0.4\textwidth}
        		\centering
        		\includegraphics[scale=0.65]{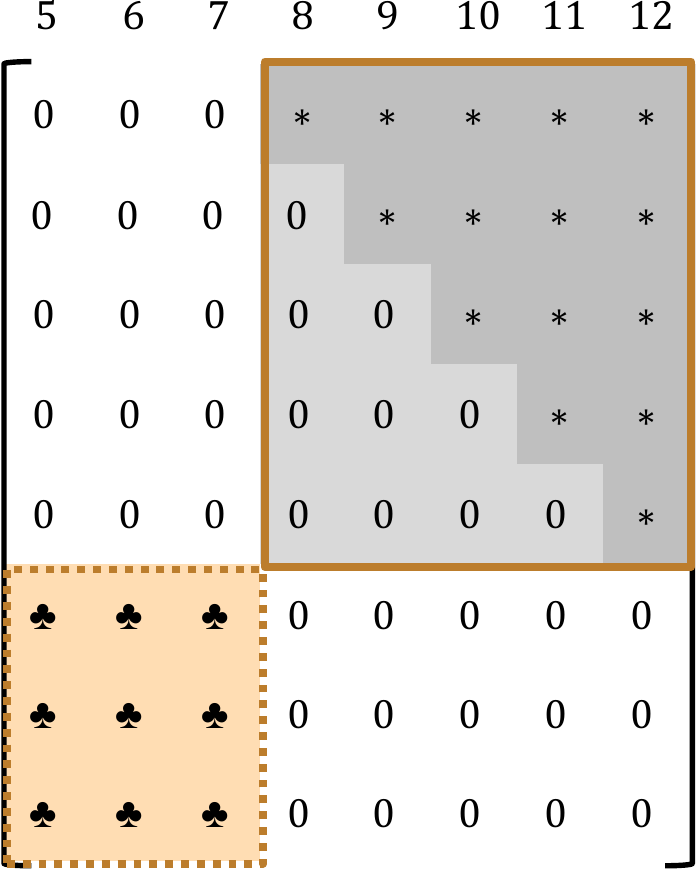}
        		\caption{}
        	\end{subfigure}
        	\begin{subfigure}[b]{0.4\textwidth}
        		\centering
        		\includegraphics[scale = 0.65]{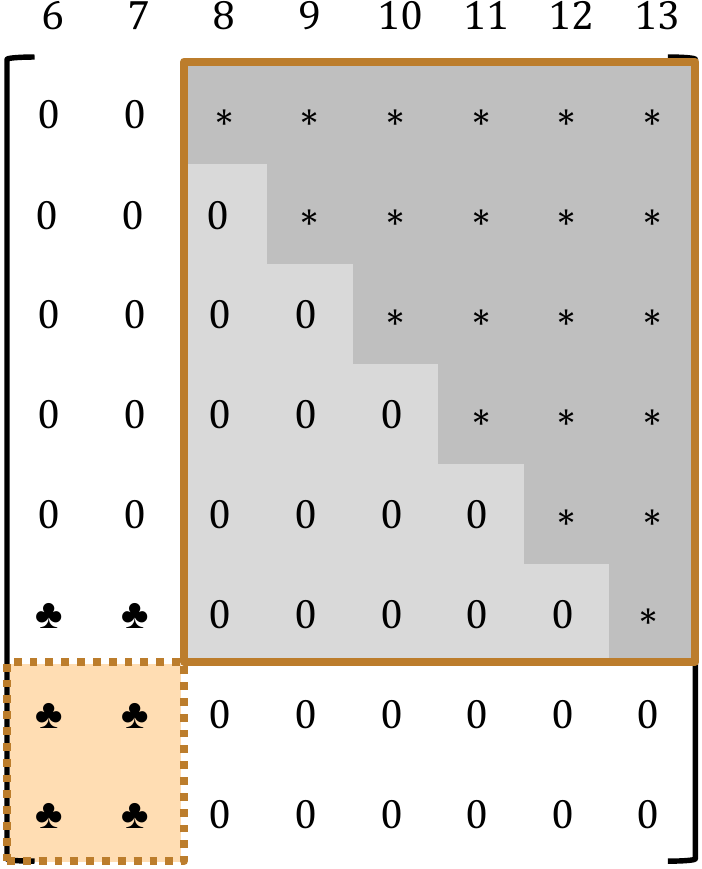}
        		\caption{}
        	\end{subfigure}
        	\caption{Two $(b \times b)$ square submatrices of the p-c matrix in Fig. \ref{fig:delta_geq_a_example_step4}. The submatrices demarcated by dotted lines in (a) and (b) are invertible, as they are square submatrices of the Cauchy-like matrix $H(5:7,5:7)$. The $5\times 5$ submatrix shown in (a) using solid lines is invertible as it consists of $5$ non-zero columns chosen from $5$ adjacent rows of the ZB generator matrix \gmds\ used in Step-a. Similarly, the $6\times 6$ matrix marked in (b) is invertible as it consists of $6$ non-zero columns chosen from $6$ adjacent rows of \gmds.}
        	\label{fig:construction_B_BE_proof_2}
        \end{figure}
        
        \item  (\textit{Case II}: $\bl\subseteq [\delta:\tau]$ and $ \ell \ge b$): Note that this scenario arises only if $b+(b-1)\leq \tau$. In this case, \pl\ contains $b$ columns from \gmds\ and hence is invertible. For the example p-c matrix in Fig. \ref{fig:delta_geq_a_example_step4} that we have been using, we have $b=8$ and $\tau=14$, and hence case II will not be encountered. In order to specifically illustrate this case, we consider an example p-c matrix for parameters $\{a=3,b=8,\tau=16\}$ in Fig. \ref{fig:construction_B_BE_proof_3}.

        \begin{figure}[!htb]
        	\centering
        	\captionsetup{justification=centering}
        	\includegraphics[scale=0.65]{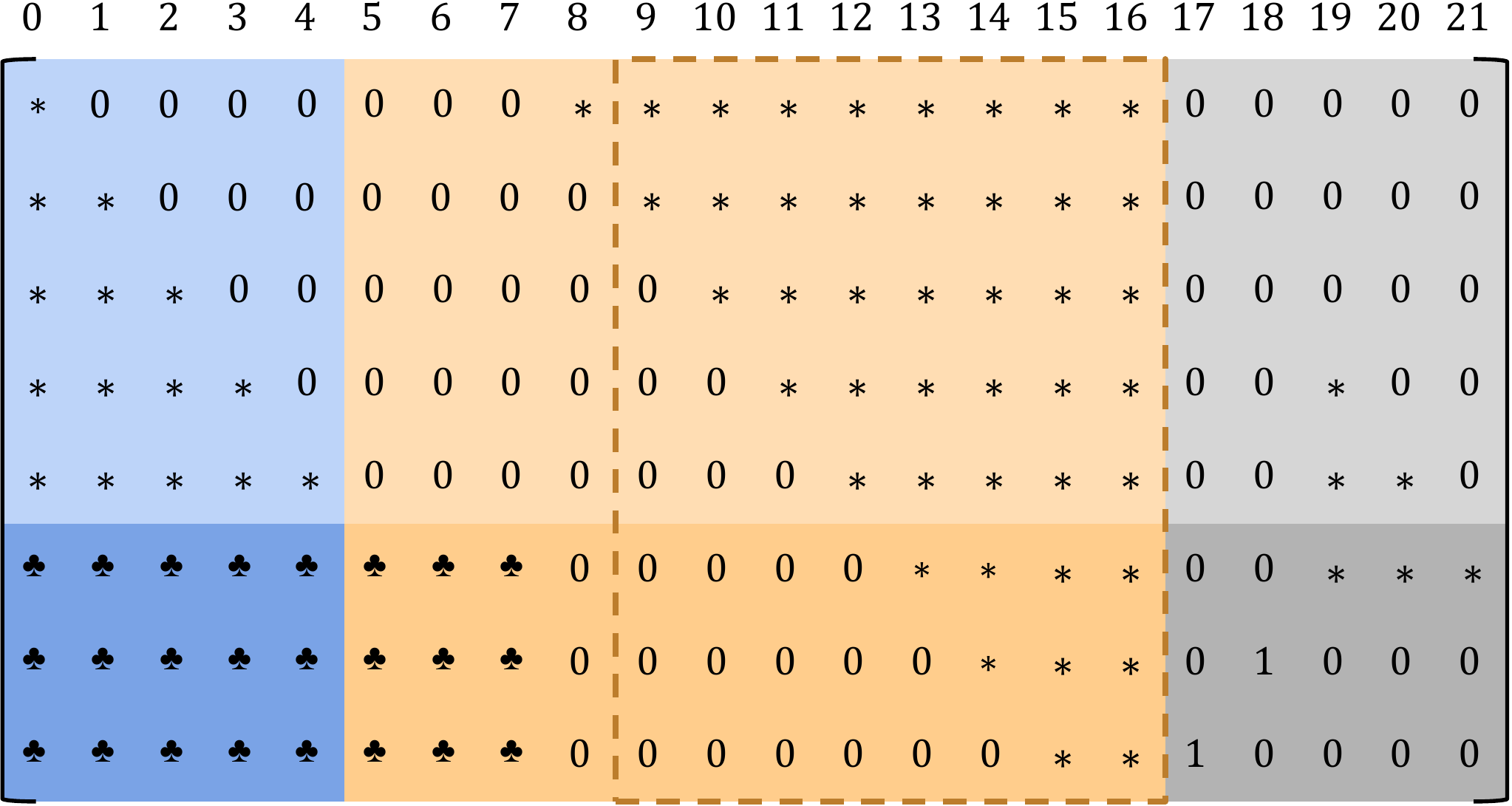}
        	\caption{An example p-c matrix $H$ for parameters $\params=\{3,8,16\}$. The submatrix $P_9$ demarcated by dashed lines consists of $8$ columns of the $8\times 17$ MDS matrix \gmds\ introduced in Step-a. Hence $P_9$ is invertible. }
        	\label{fig:construction_B_BE_proof_3}
        \end{figure}	
        
        \item (\textit{Case III}: $\bl\not\subseteq [\delta:\tau]$, $\bl\subseteq[\delta:a-1+\tau]$ and $\ell\leq b$): This corresponds to the scenario where  $P_\ell$ includes some columns from $H(:,1+\tau:a-1+\tau)$ but not any columns from $H(:,a+\tau:\delta+\tau)$ and also column $\underline{h}_b$ is present in the matrix \pl.	Here \pl\ takes the form shown in Figure \ref{fig:construction_B_BE_proof_4_1}. By changing the ordering of columns from  $\{\ell,(\ell+1),\ldots,b,\ldots,\tau,(1+\tau)\,\ldots,(\ell+b-1)\}$ to $\{(1+\tau),\ldots,(\ell+b-1),\ell,\ldots,b,\ldots,\tau\}$, as shown in Fig. \ref{fig:construction_B_BE_proof_4_2}, we obtain a column-permuted \pl\ of the form \eqref{eq:constrn_B_p_l_form}. Here $P_{\ell,2}\triangleq[H(\tau-b+1:b-1,1+\tau:\ell+b-1)\ H(\tau-b+1:b-1,\ell:b-1)]$, as can be seen from Fig. \ref{fig:construction_B_BE_proof_4_2}, is an invertible matrix. Similar to case I, $P_{\ell,3}$ is an upper triangular matrix and is invertible as it consists of $(\tau-b+1)$ non-zero columns chosen from $(\tau-b+1)$ adjacent rows of the ZB generator matrix \gmds. Hence \pl\ is invertible.

        \begin{figure}[!htb]
        	\centering
        	
        	\begin{subfigure}[b]{0.4\textwidth}
        		\centering
        		\includegraphics[scale=0.59]{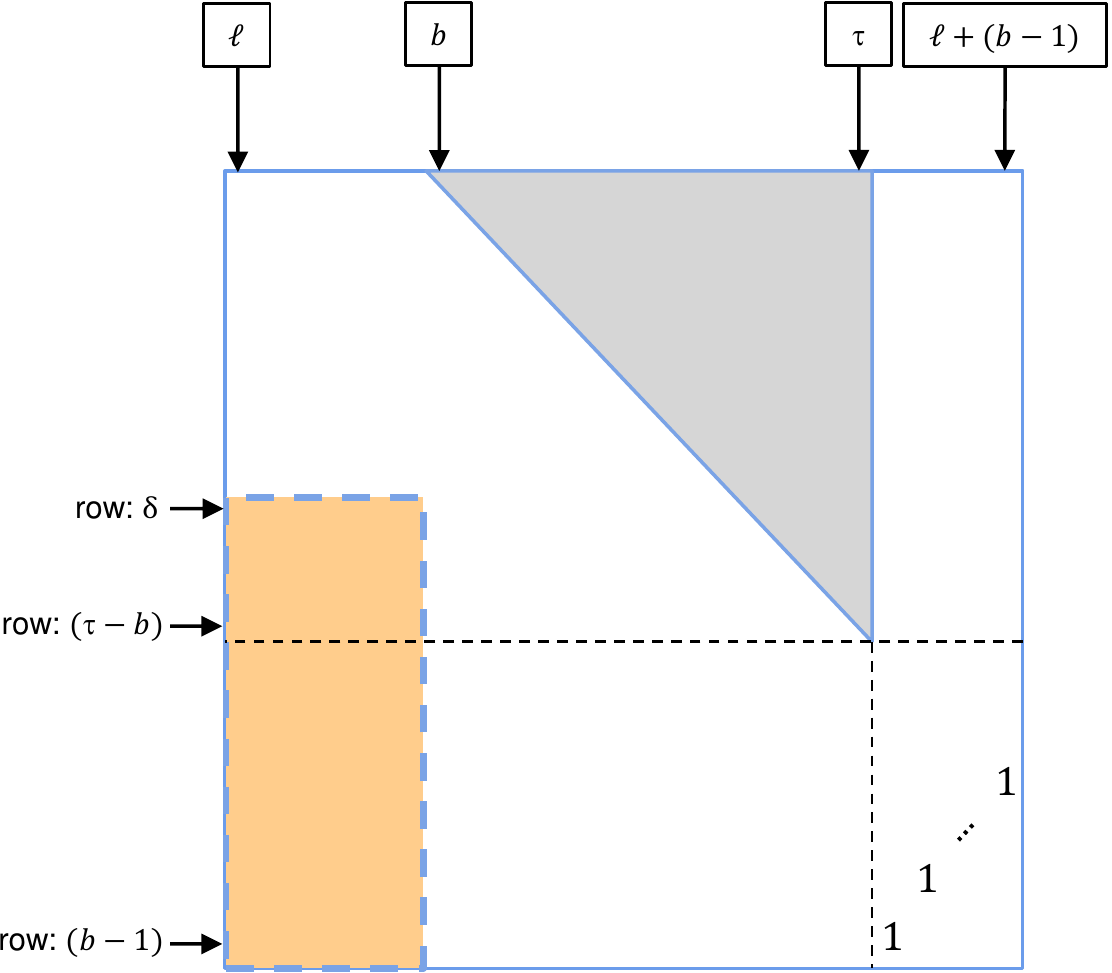}
        		\caption{As $(\tau+1)\geq (b+\delta)$, the matrix $H(\tau-b+1:b-1,\ell:b-1)$ is a submatrix of the Cauchy-like matrix $H(\delta:b-1,\delta:b-1)$.  }
        		\label{fig:construction_B_BE_proof_4_1}
        	\end{subfigure}
        	\begin{subfigure}[b]{0.4\textwidth}
        		\centering
        		\includegraphics[scale = 0.59]{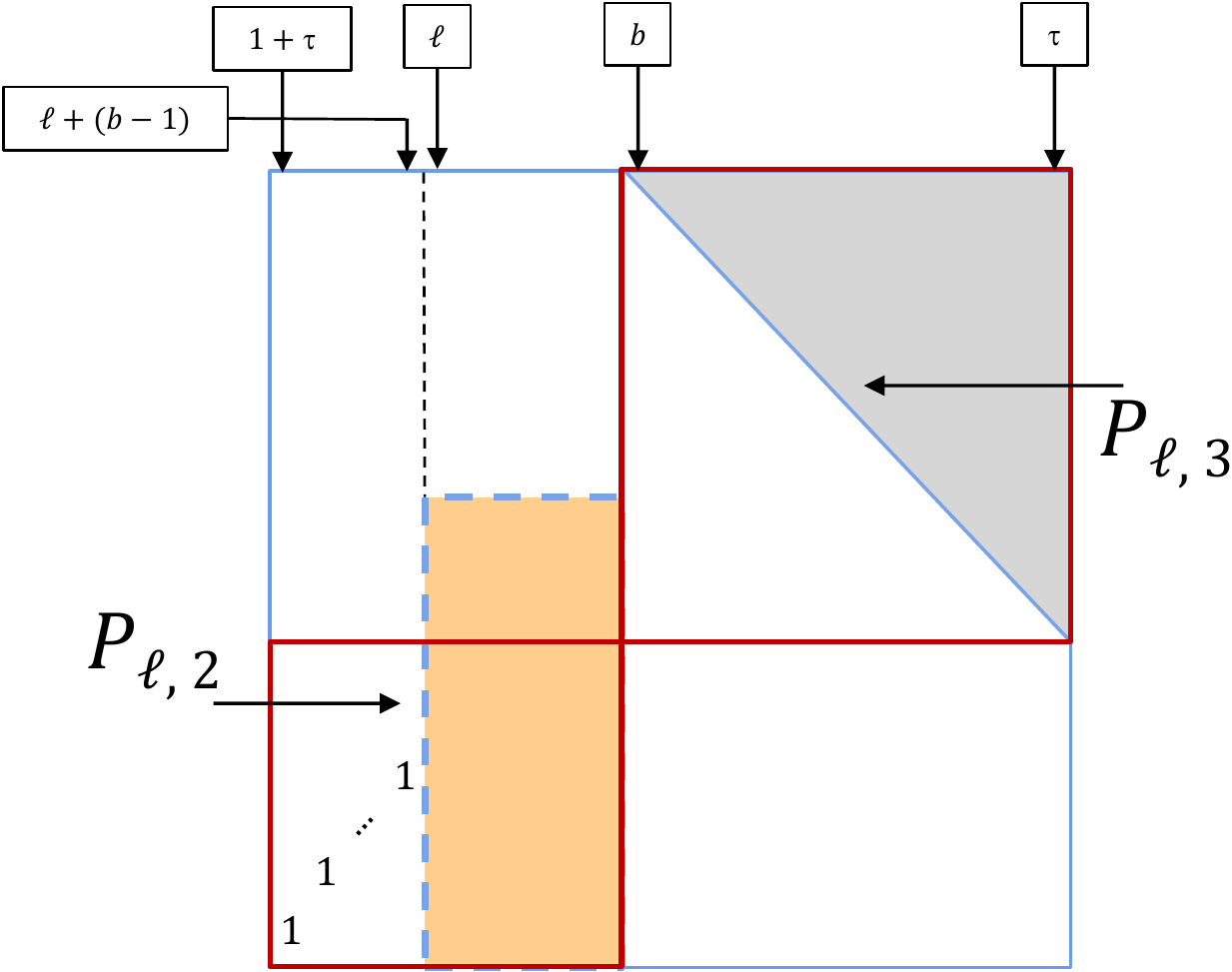}
        		\caption{An illustration of the column-permuted \pl. Here $P_{\ell,2}$ is invertible, as its columns consist of standard basis vectors and columns of the Cauchy-like matrix $H(\tau-b+1:b-1,\ell:b-1)$.} 
        		\label{fig:construction_B_BE_proof_4_2}
        	\end{subfigure}
        	\caption{An illustration of \pl\ for case III}
        \end{figure}
        
        \item (\textit{Case IV}:  $\bl\not\subseteq [\delta:\tau]$, $\bl\subseteq [\delta:a-1+\tau]$ and $\ell\geq b$): For this case, \pl\ takes the form:
        \bean \label{eq:constrn_B_p_l_form_2}
        P_{\ell } & = & \left[ \begin{array}{cc} P_{\ell,1} &[0] \\ P_{\ell,2} & P_{\ell,4} \end{array} \right], 	
        \eean
        where  $P_{\ell,1}$ and  $P_{\ell,4}$ are invertible matrices as can be seen from Fig. \ref{fig:construction_B_BE_proof_5} (see Fig. \ref{fig:construction_B_BE_proof_6} for an example). Hence \pl\ is invertible.
        \begin{figure}[!htb]
        	\centering
        	\captionsetup{justification=centering}
        	\includegraphics[scale=0.58]{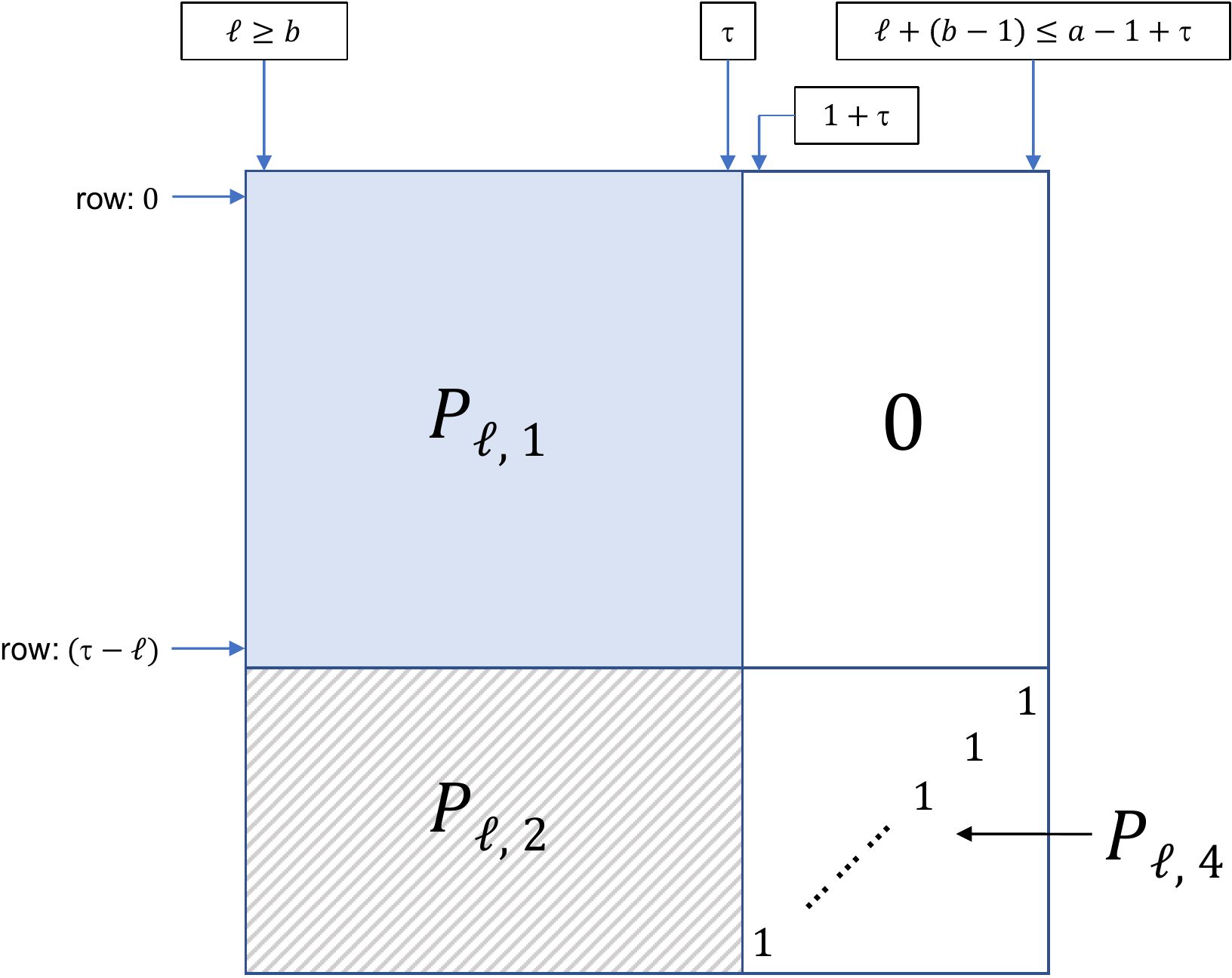}
        	\caption{In this figure, we illustrate the structure of \pl, for case IV. Here $P_{\ell,1}$ is a $(\tau-\ell+1)\times (\tau-\ell+1)$ submatrix consisting of $(\tau-\ell+1)$ non-zero columns from the first $(\tau-\ell+1)$ rows of the ZB generator matrix \gmds. Hence $P_{\ell,1}$ is invertible. The matrix $P_{\ell,4}$ is clearly invertible and hence invertibility of \pl\ follows. }
        	\label{fig:construction_B_BE_proof_5}
        \end{figure}
        
        \begin{figure}[!htb]
        	\centering
        	\captionsetup{justification=centering}
        	\includegraphics[scale=0.65]{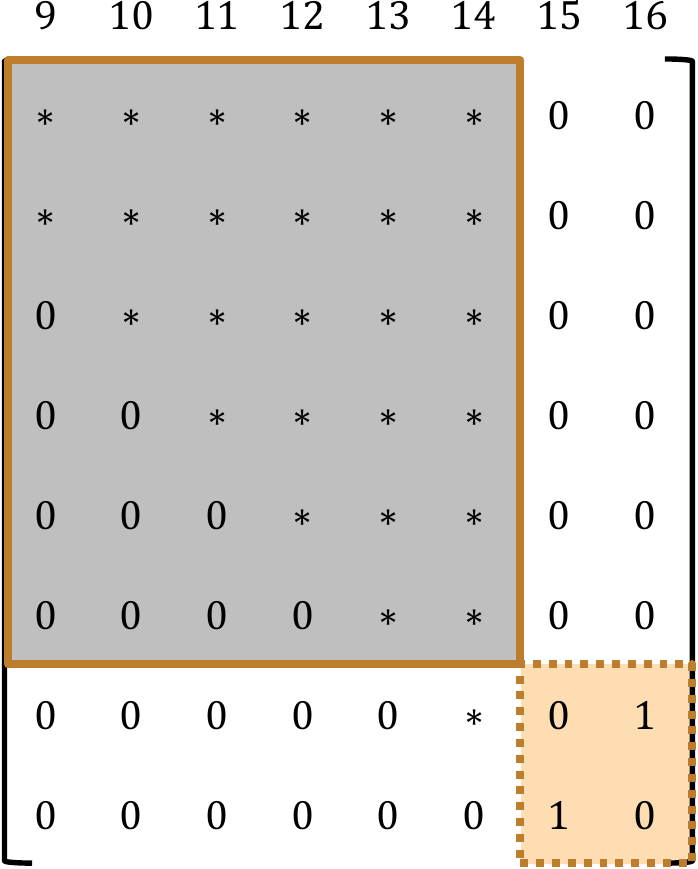}
        	\caption{Let $\ell=9$. In the figure, we illustrate the matrix $P_\ell$ corresponding to the p-c matrix given in Fig. \ref{fig:delta_geq_a_example_step4}. The two submatrices demarcated by solid and dotted lines, are  both invertible.}
        	\label{fig:construction_B_BE_proof_6}
        \end{figure}

        \item (\textit{Case V}:  $\bl\not\subseteq [\delta:a-1+\tau]$): Clearly this corresponds to the scenario where \pl\ includes some columns from $[a+\tau:\delta+\tau]$ and hence:
        \beq\label{eq:constrn_b_b2_case_v_1}
        (\ell+b-1)\geq (a+\tau).
        \eeq 
        
        Consider the ZB generator matrix \gmds\ of size $b\times (\tau+1)$ used in Step-a. In the following, with the help of three observations, we show that $b$ distinct columns of \gmds\ lie in the column space of \pl. As any $b$ columns of \gmds\ form an independent set, this would imply that \pl\ is invertible.

        As $\ell$ ranges from $\delta$ to $(\tau-a+1)$, using \eqref{eq:constrn_b_b2_case_v_1}, one can infer that the columns $\{\underline{h}_i\mid 1+\tau\leq i\leq a-1+\tau\}$ are part of \pl. As these $(a-1)$ columns are standard basis vectors covering the last $(a-1)$ rows, and the independent set of columns $\{\gmds(:,i)\mid \delta+1\leq i\leq b-1\}$ have non-zero entries only in the last $(a-1)$ rows, we make the following observation:
        
        (i) The space spanned by column vectors
        \beqn
        \{\underline{h}_i\mid 1+\tau\leq i\leq a-1+\tau\}
        \eeqn
        and
        \beqn
        \{\gmds(:,i)\mid \delta+1\leq i\leq b-1\}
        \eeqn
        is the same (see Fig. \ref{fig:construction_B_BE_proof_7}).

        Recall the constraint $(\tau+1)\geq(b+\delta)$ placed on the construction. Together with \eqref{eq:constrn_b_b2_case_v_1}, we have: $\ell\geq (a+\tau)-(b-1)=(\tau+1)-\delta\geq b$. As columns $[b:\tau]$ of $H$ are unchanged after Step-a, we also have:
        \beqn
        \underline{h}_j=\gmds(:,j),
        \eeqn
        where $b\leq j\leq \tau$. Hence we make a second observation:

        (ii)  As $\ell\geq b$, the  columns $\{\underline{h}_i\mid \ell\leq i\leq \tau\}$ of \pl\ are columns of the MDS matrix \gmds, where $\underline{h}_i=\gmds(:,i)$.

        Now consider the remaining columns $\{\underline{h}_i\mid a+\tau\leq i\leq \ell+b-1\}$ of \pl.  From Step-b, it follows that columns $\gmds(:,i-\tau)$ and $\underline{h}_i$ are identical except at the last $(a-1)$ rows, where $(a+\tau)\leq i\leq (\ell+b-1)$. As the standard basis corresponding to these last $(a-1)$ rows are part of the columns of \pl, we make a third observation as follows:

        (iii) The columns $\{\gmds(:,j)\mid j\in[a:\ell+b-1-\tau]\}$ lie in the span of 	$\{\underline{h}_i\mid 1+\tau\leq i\leq (\ell+b-1)\}$.

        Together observations (i), (ii) and (iii) imply that columns $[a:\ell+b-1-\tau]\cup[\delta+1:b-1]\cup[\ell:\tau]$ of \gmds\ lie in the column space of \pl. As $(\ell+b-1)-\tau\leq (n-1)-\tau=(\tau+\delta)-\tau=\delta$, clearly these are $b$ distinct columns of the $b\times(\tau+1)$ MDS matrix \gmds. This implies that \pl\ has rank $b$ and hence is invertible.
        
        \begin{figure}[!htb]
        	\centering
        	
        	\begin{subfigure}[b]{0.5\textwidth}
        		\centering
        		\includegraphics[scale=0.53]{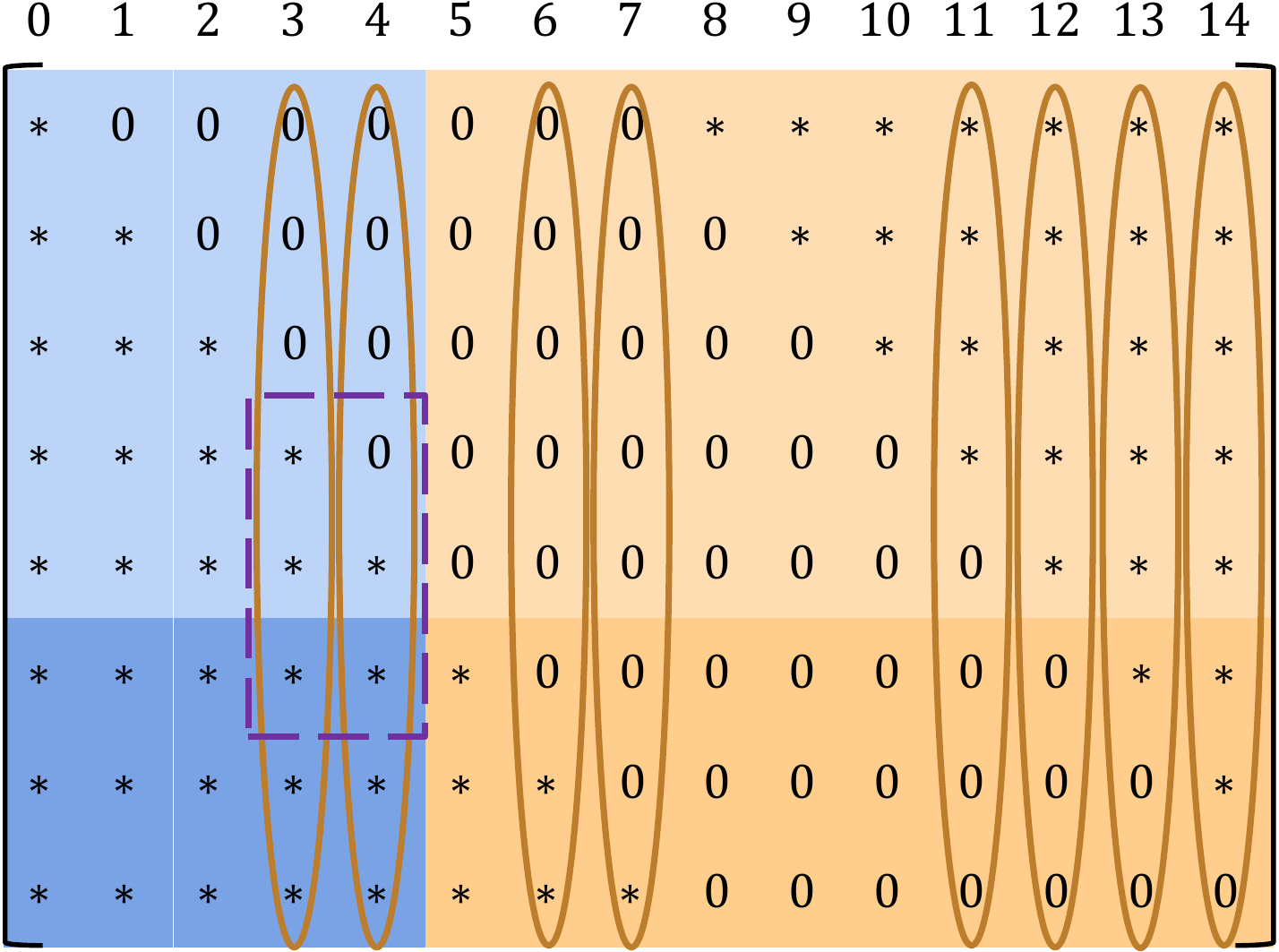}
        		\caption{}
        	\end{subfigure}
        	\begin{subfigure}[b]{0.3\textwidth}
        		\centering
        		\includegraphics[scale = 0.53]{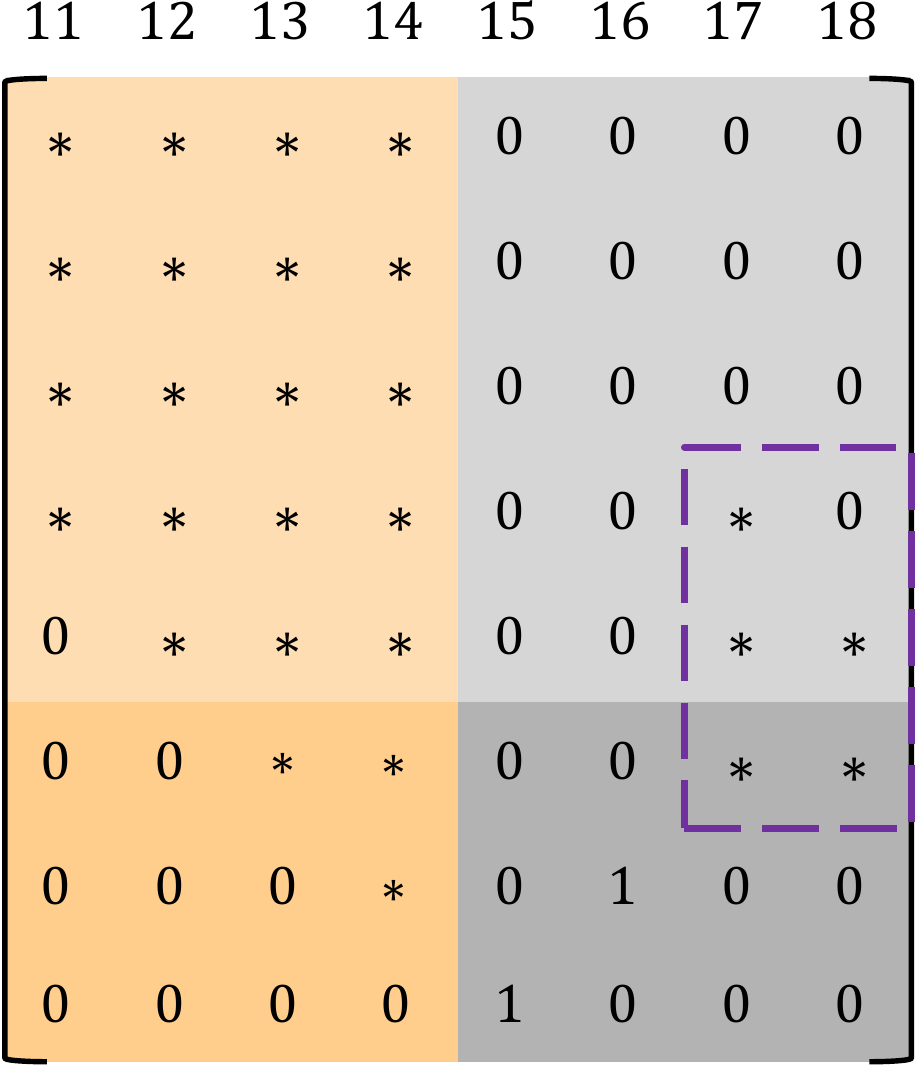}
        		\caption{} 
        	\end{subfigure}
        	\caption{An illustration of the three observations that we make in the proof of case V. (a) The ZB generator matrix \gmds\ corresponding to the p-c matrix shown in Fig. \ref{fig:delta_geq_a_example_step4}. (b) The submatrix $P_{11}\triangleq H(:,11:18)$. We make the following three observations; (i) The column space of $\gmds(:,[6:7])$ and $H(:,[15:16])$ is the same. (ii) $\gmds(:,i)=\underline{h}_i$ for $11\leq i\leq 14$. (iii) Columns $\gmds(:,3)$ and $\gmds(:,4)$ lie in the column space of $H(:,[15:18])$. Hence columns $[3:4]\cup[6:7]\cup[11:14]$ of \gmds\ lie in the column space of \pl. This implies that \pl\ is invertible.}
        	\label{fig:construction_B_BE_proof_7}
        \end{figure}

        \eit
        
        \eit
        
        \item {\it Recovery from $\leq a$ random erasures}:
        \bit
        \item Condition R1: Let $0\leq \ell\leq (\delta-1)$, $R\triangleq[0:a-1]$,
        \bean\
        R_\ell & \triangleq & \left\{ \begin{array}{rl} R\cup[(b-\ell):(b-1)], & \text{if } \ell\in[0:(a-1)] \\ 
        	R\cup\left[(\delta+1):(b-1)\right]\cup\left[a:\ell\right]& \text{otherwise}. \end{array} \right.
        \eean

        For every $i\in\rl$, we have the $i$-th row $H(i,0:\ell+\tau)$ lying in the row space of the shortened p-c matrix \hsupl. This is because $H(i,:)$
        has a run of zeros across the last $(\delta-\ell)$ coordinates given by $[\ell+\tau+1:n-1]$. Thus during the discussion on recoverability of $\ell$-th code symbol, we will restrict our attention to the rows indexed by $R_\ell$ (or its subsets) of $H$. In Fig. \ref{fig_construction_B_RE_proof_1}, we illustrate the case of $\ell=3$, with respect to the p-c matrix provided in Fig. \ref{fig:delta_geq_a_example_step4}.
        
        \begin{figure}[!htb]
        	\centering
        	\captionsetup{justification=centering}
        	\includegraphics[scale=0.6]{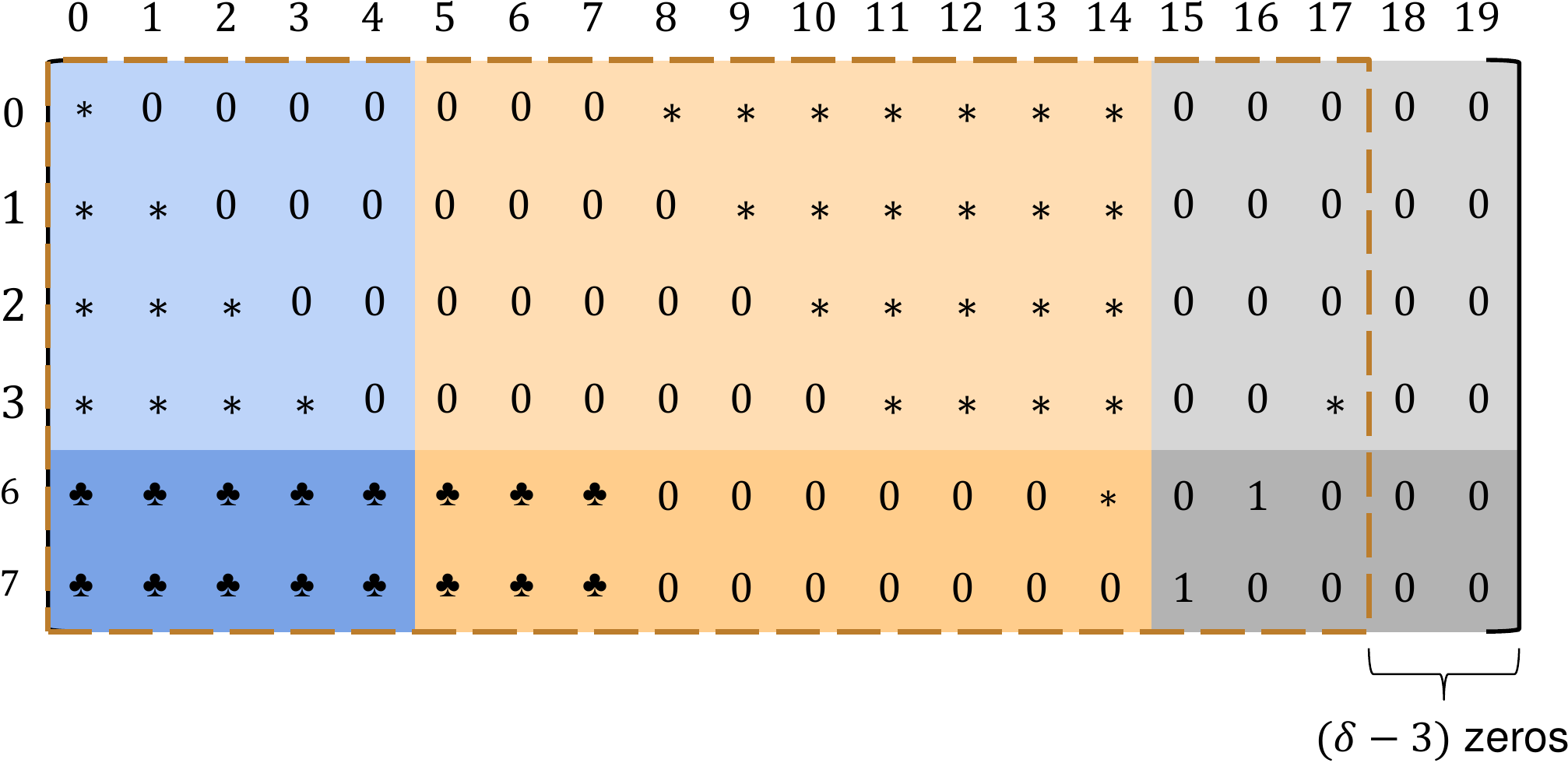}
        	\caption{Let $\ell=3$. The submatrix $H(R_\ell,:)$ for the $H$ given in Fig. \ref{fig:delta_geq_a_example_step4}. Here $R_\ell=[0:2]\cup[6:7]\cup\{3\}$. Each row of $H(\rl,0:3+\tau)$ (demarcated by dashed lines) is part of the row space of  $H^{(3)}$. }
        	\label{fig_construction_B_RE_proof_1}
        \end{figure}
        
        We divide the proof of condition R1 into two cases.
        \bit
        \item ({\it Case I}: $\ell\in[0:a-1])$: Consider the rows $R\triangleq [0:a-1]\subseteq R_\ell$ of $H$. Let $\ul\triangleq H(R,0:\ell+\tau)$. It can be verified that the columns $\mathcal{I}\triangleq[a:b-1]\cup[1+\tau:\ell+\tau]$ of \ul\ are zero columns. If we exclude the zero-columns $[1+\tau:\ell+\tau]$ of \ul, the submatrix $U_\ell(:,0:\tau)$ is composed of $a$ consecutive rows of a ZB generator matrix corresponding to a $[\tau+1,b]$ MDS code. Hence applying Lemma \ref{lem:mds_consec_rows}, we have that  any set of $\leq a$ non-zero columns of $U_\ell$ form an independent set. As $H(R,\ell)\triangleq \ul(:,\ell)$ is a non-zero column of \ul, it follows that condition R1 is satisfied for $\ell\in[0:a-1]$. In Fig. \ref{fig_construction_B_RE_proof_2}, we consider the example of $\ell=2$ with respect to the p-c matrix given in Fig. \ref{fig:delta_geq_a_example_step4}. Note that this proof idea does not extend to the case $\ell\in[a:\delta-1]$, as $\ul(:,\ell)$ is a zero-column for this range of $\ell$. 
        
        \begin{figure}[!htb]
        	\centering
        	\captionsetup{justification=centering}
        	\includegraphics[scale=0.6]{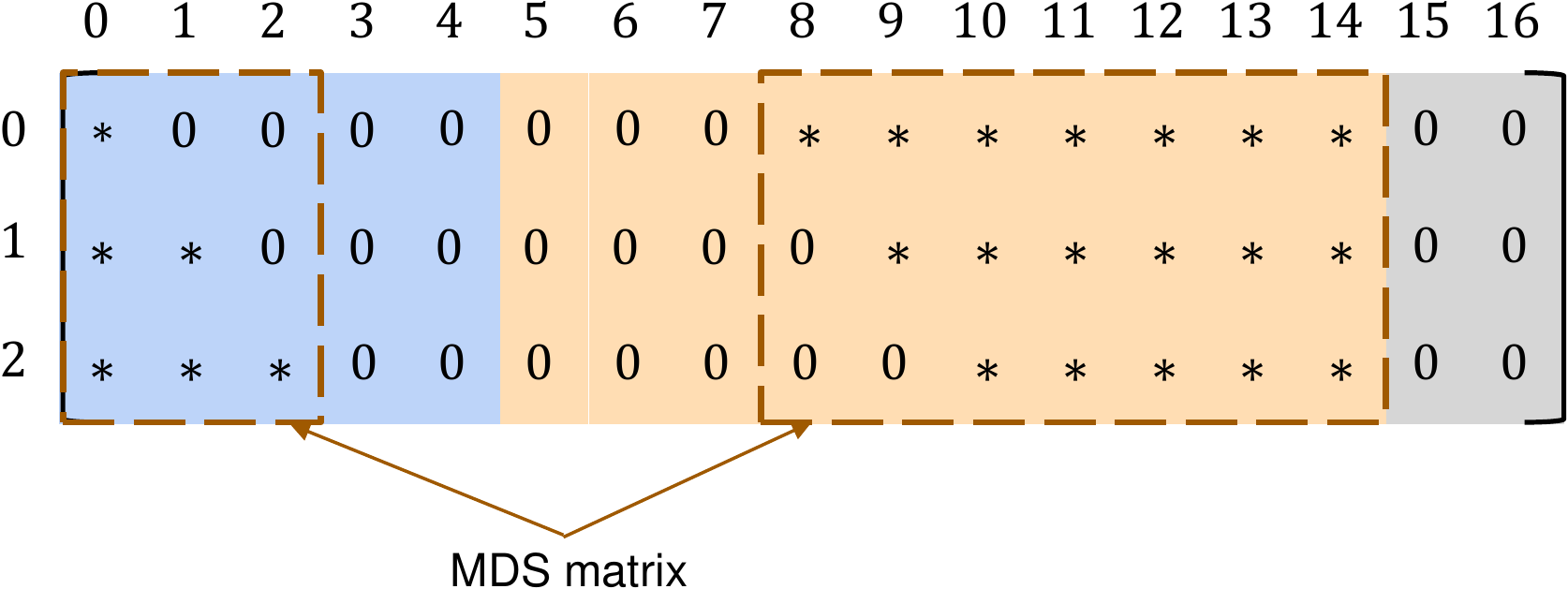}
        	\caption{Here $\ell=2,R=[0:2]$. In the figure, the submatrix $\ul\triangleq H(R,0:16)$ is shown,  which corresponds to the p-c matrix given in Fig. \ref{fig:delta_geq_a_example_step4}. The columns of \ul\ indexed by  $\mathcal{I}\triangleq[3:7]\cup[15:16]$ are zero columns. Any collection of $\leq 3$ non-zero columns form an independent set. In other words, $U_\ell(:,[0:2]\cup[8:14])$ is an MDS matrix. As a result, column $2$ cannot lie in the span of any set of $\leq 2$ other columns.}
        	\label{fig_construction_B_RE_proof_2}
        \end{figure}

        \item ({\it Case II}: $\ell\in[a:\delta-1])$: Consider the set of rows $S_\ell\triangleq R\cup \{\ell\}\cup[\delta+1:b-1]\subseteq \rl$. Partition the matrix $H(S_\ell,0:\ell+\tau)$ in the following form:
        \bean
        \left[ \begin{array}{c} U_{\ell}\\ \ml\\ L_{\ell} \end{array} \right], 	
        \eean
        where  $U_\ell$ is as defined in case I,  $\ml\triangleq H(\ell,0:\ell+\tau)$  and $L_\ell\triangleq H([\delta+1:b-1],0:\ell+\tau)$ (see Fig. \ref{fig_construction_B_RE_proof_3} for an example case of $\ell=4$). Note that $\ml=[M_{\ell,0}\ \ M_{\ell,1}\ \ \cdots\ \ M_{\ell,\ell+\tau}]$ is a row vector.
        
        \begin{figure}[!htb]
        	\centering
        	\captionsetup{justification=centering}
        	\includegraphics[scale=0.6]{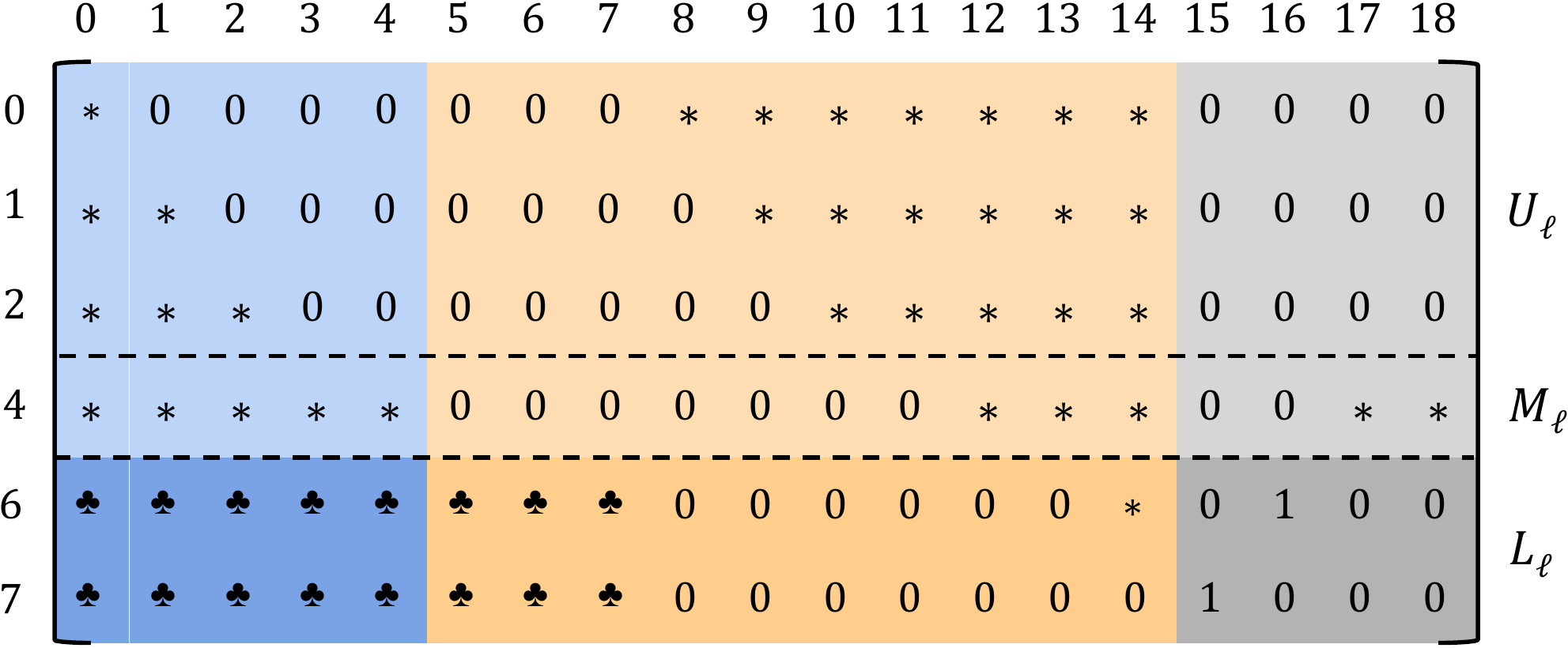}
        	\caption{Here $\ell=4$. In the figure, the submatrix $H(S_\ell,0:\ell+\tau)$ which corresponds to the p-c matrix shown in Fig. \ref{fig:delta_geq_a_example_step4}, is partitioned into \ul, \ml\ and $L_\ell$.}
        	\label{fig_construction_B_RE_proof_3}
        \end{figure}
        
        Now, contrary to the condition R1, assume  there exists a set $\al\subseteq [\ell:\ell+\tau]$ with $|\al|\leq a$ and $\ell\in \al$ such that:
        \beq\label{eq:constrn_b_lc}	
        \underline{h}^{(\ell)}_{\ell}=\sum_{i\in\al\setminus\{\ell\}}a_i\underline{h}^{(\ell)}_{i}, 
        \eeq 
        where $\{a_i\}$ are all non-zero. Equation \eqref{eq:constrn_b_lc} implies the following:
        
        \beq\label{eq:constrn_b_u_lc}	
        \ul(:,\ell)=\sum_{i\in A_\ell\setminus\{\ell\}}a_i\ul(:,i), 
        \eeq

        \beq\label{eq:constrn_b_m_lc}	
        M_{\ell,\ell}=\sum_{i\in A_\ell\setminus\{\ell\}}a_i M_{\ell,i}, 
        \eeq
        
        and 
        
        \beq\label{eq:constrn_b_l_lc}	
        L_\ell(:,\ell)=\sum_{i\in A_\ell\setminus\{\ell\}}a_iL_\ell(:,i). 
        \eeq

        As seen in case I, we have a set of zero columns for \ul\ at the columns indexed by $\mathcal{I}=[a:b-1]\cup[1+\tau:\ell+\tau]$ and any $\leq a$ non-zero columns of \ul\ form an independent set. Thus in order for \eqref{eq:constrn_b_u_lc} to hold, \al\ cannot include any of the non-zero coordinates of \ul. Thus we have:
        
        \beq \label{eq:constrn_b_re1_A_condition_1}
        A_\ell\subseteq [\ell:b-1]\cup[1+\tau:\ell+\tau].
        \eeq
        
        In Fig. \ref{fig_construction_B_RE_proof_4}, we illustrate this for $\ell=4$, with respect to the p-c matrix given in Fig. \ref{fig:delta_geq_a_example_step4}.
        \begin{figure}[!htb]
        	\centering
        	\captionsetup{justification=centering}
        	\includegraphics[scale=0.6]{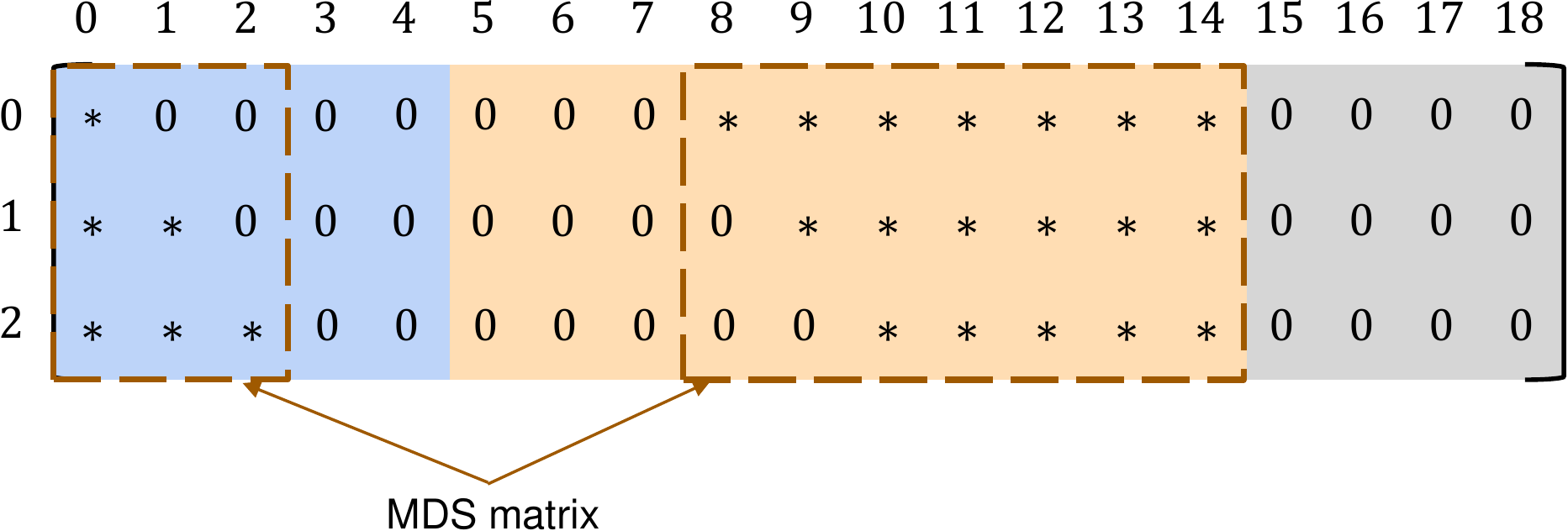}
        	\caption{Here $\ell=4$, $\mathcal{I}=[3:7]\cup [15:18]$. In the figure, we illustrate the submatrix $U_4$ which corresponds to Fig. \ref{fig_construction_B_RE_proof_3}. Any set of $\leq 3$ non-zero columns form an independent set. Thus, in order to satisfy \eqref{eq:constrn_b_u_lc}, $A_4\subseteq[4:18]$ cannot include any of the coordinates $[0:18]\setminus\mathcal{I}$. Hence $A_4\subseteq [4:7]\cup[15:18]$.}
        	\label{fig_construction_B_RE_proof_4}
        \end{figure}

        Now consider \eqref{eq:constrn_b_m_lc}. As $M_{\ell,\ell}\triangleq H(\ell,\ell)\neq 0$, for \eqref{eq:constrn_b_m_lc} to hold, we need at least a single coordinate $i\in \al\setminus\{\ell\}$ such that $M_{\ell,i}\triangleq H(\ell,i)\neq 0$. Because of the constraint \eqref{eq:constrn_b_re1_A_condition_1}, it can be inferred from the p-c matrix structure that, the only way this can be true is by including at least one coordinate from $[a+\tau:\ell+\tau]$ in \al\ (for instance, see Fig. \ref{fig_construction_B_RE_proof_5}).
        
        \begin{figure}[!htb]
        	\centering
        	\captionsetup{justification=centering}
        	\includegraphics[scale=0.6]{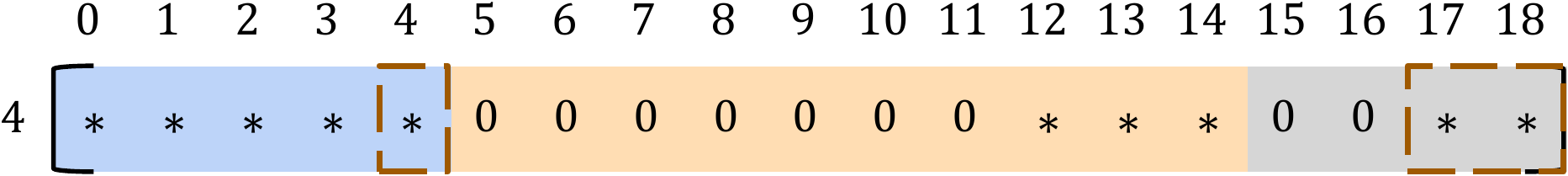}
        	\caption{Here $\ell=4$.  In the figure, we illustrate the submatrix $M_4$ which corresponds to Fig. \ref{fig_construction_B_RE_proof_3}. From \eqref{eq:constrn_b_re1_A_condition_1}, we have $A_4\subseteq [4:7]\cup[15:18]$. Here we have identified the non-zero entries among the coordinates $[4:7]\cup[15:18]$ to be $M_{4.4},M_{4,17}$ and $M_{4,18}$. Hence in order to satisfy \eqref{eq:constrn_b_m_lc}, we require at least one of the coordinates from $[17:18]$ to be in $A_4$.}
        	\label{fig_construction_B_RE_proof_5}
        \end{figure}

        We will turn our attention to \eqref{eq:constrn_b_l_lc} now. We make the following observations on $L_\ell$.
        
        (i) $L_\ell(:,[0:b-1]\cup[1+\tau:a-1+\tau])$ is of the form $[C'\ I_{a-1}]$, where $C'$ is a Cauchy-like submatrix of $C$ defined in Step-d of the construction. Hence $L_\ell(:,[0:b-1]\cup[\tau+1:\tau+a-1])$ is an $(a-1)\times (b+a-1)$ MDS matrix. Also, as $\ell \in[a:\delta-1]$ and $
        \delta<b$, $L_\ell(:,\ell)$ is a column of this MDS matrix.

        (ii) The columns of $L_\ell$ indexed by $[a+\tau:\ell+\tau]$ are all zero-columns.

        In Fig. \ref{fig_construction_B_RE_proof_6}, we identify these observations for the case $\ell=4$ (with respect to the p-c matrix  given in Fig. \ref{fig:delta_geq_a_example_step4}).
        
        \begin{figure}[!htb]
        	\centering
        	\captionsetup{justification=centering}
        	\includegraphics[scale=0.6]{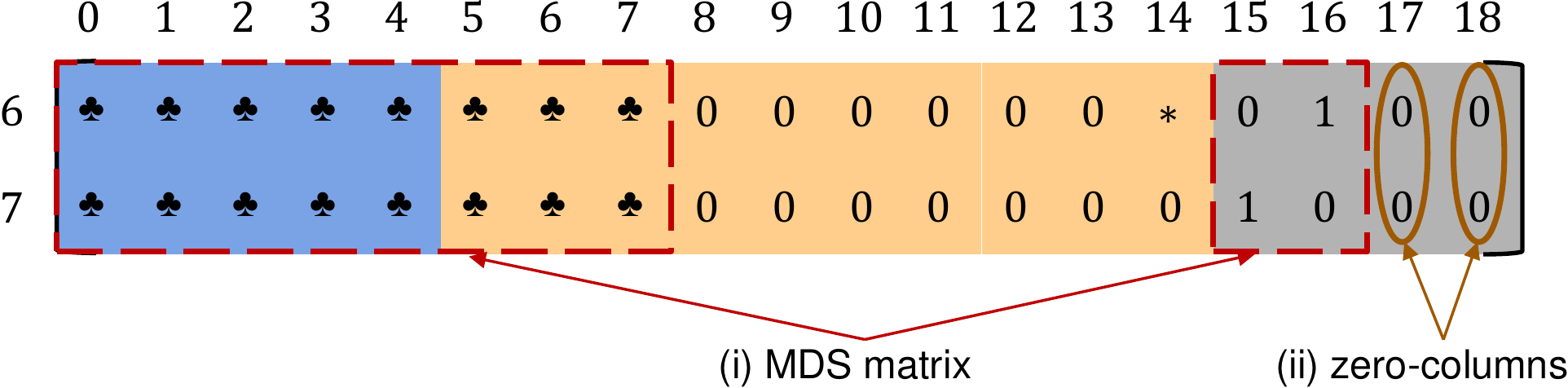}
        	\caption{Here $\ell=4$.  In the figure, we illustrate the submatrix $L_4$ which corresponds to Fig. \ref{fig_construction_B_RE_proof_3}. From \eqref{eq:constrn_b_re1_A_condition_1}, we have $A_4\subseteq [4:7]\cup[15:18]$ and from \eqref{eq:constrn_b_m_lc}, we require at least one of the coordinates from $[17:18]$ to be in $A_4$. However as $L_4(:,17)$ and $L_4(:,18)$ are zero-columns, and $L_4(:,[0:7]\cup[15:16])$ is an MDS matrix, \eqref{eq:constrn_b_l_lc} would result in a dependency across $\leq2$ columns of this MDS matrix, which is a contradiction.}
        	\label{fig_construction_B_RE_proof_6}
        \end{figure}
        
        Now consider the set $[\ell:b-1]\cup[1+\tau:\ell+\tau]$ appearing in \eqref{eq:constrn_b_re1_A_condition_1}. Consider the partition of this set into $A_{\ell,1}\triangleq [\ell:(b-1)]\cup[1+\tau:a-1+\tau]$ and $A_{\ell,2}\triangleq [a+\tau:\ell+\tau]$. By observation (i), all the columns of $L_\ell$ chosen from $A_{\ell,1}$, are columns of the MDS matrix described in observation (i). From observation (i), we also have $L_\ell(:,\ell)$ is a column of this MDS matrix. Similarly, using observation (ii), we have that all columns of $L_\ell$ chosen from $A_{\ell,2}$, are zero-columns. Hence if we include a coordinate (zero-column of $L_\ell$) from $A_{\ell,2}$ in \al, \eqref{eq:constrn_b_l_lc} will result in a dependency involving $\leq (a-1)$ columns of the MDS matrix $L_\ell(:,[0:b-1]\cup[\tau+1:\tau+a-1])$. Clearly, this is a contradiction (see \ref{fig_construction_B_RE_proof_6}, for an example). Thus our assumption on the existence of \al\ is not valid, and hence $H$ satisfies condition R1 for $\ell\in[a:\delta-1]$.

        \eit
        \item Condition R2: Assume there exists a set $\al\subseteq [\delta:n-1]$ with $|\al|\leq a$ such that:
        \beq\label{eq:constrn_b_re2_lc}	
        \sum_{i\in A_\ell}a_i\underline{h}_{i}=0, 
        \eeq
        where $\{a_i\}$ are all non-zero. Let $R\triangleq[0:a-1],S\triangleq[\delta:b-1],S'\triangleq[\delta+1:b-1]$. We make the following observations:
        
        (i) Consider the submatrix $H(R,\delta:\tau+\delta)$. All except the columns given by $\{H(R,j)\mid j\in[b:\tau] \}$ are zero columns. Also, $H(R,[b:\tau])$ is an $a\times (\tau-b+1)$ MDS matrix. 
        
        (ii) $H(S',[\delta:\tau+a-1]\setminus[b:\tau])$ is a $(a-1)\times(2a-1)$ MDS matrix and all the columns $\{H(S',i)\mid i\in [\tau+a:\tau+\delta]\}$ are zero-columns.
        
        (iii) $H(S,[\delta:\tau+a-1]\setminus[b:\tau])$ is a $a\times (2a-1)$ MDS matrix.
        
        (iv) The collection of columns $\left\{ \underline{h}_{j} \mid j\in[\tau+a:\tau+\delta] \right\}$ forms a linearly independent set as required by condition B2, since it is a subset of the larger linearly independent set of $b$ columns, $\left\{ \underline{h}_{j} \mid j\in[\tau-a+1:\tau+\delta] \right\}$.

        We illustrate these observations with respect to  an example in Fig. \ref{fig_construction_B_RE_proof_7}.
        
        \begin{figure}[!htb]
        	\centering
        	\captionsetup{justification=centering}
        	\includegraphics[scale=0.65]{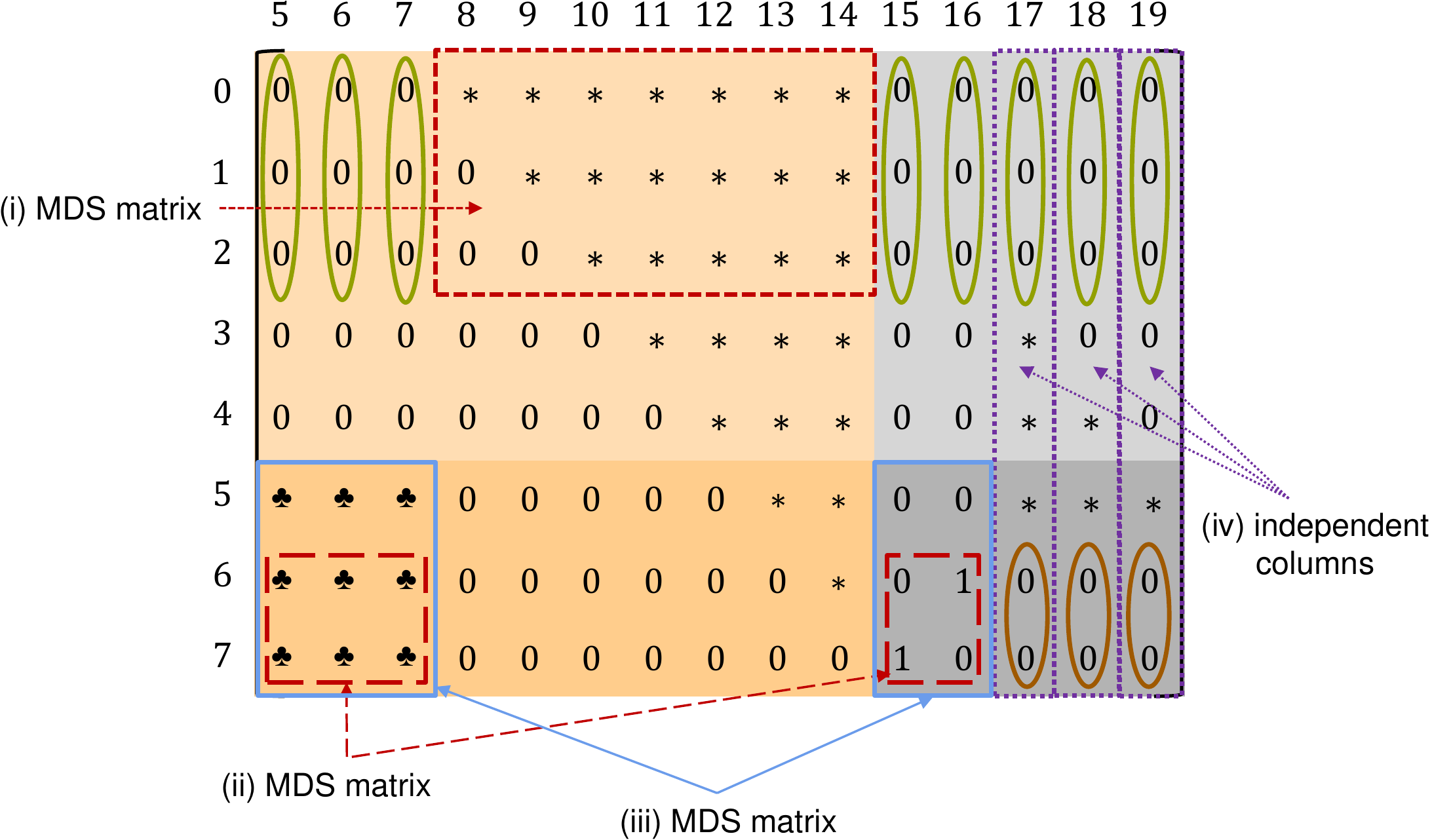}
        	\caption{In this figure,  we identify the observations (i)-(iv)  with respect to the submatrix $H(:,5:19)$, which corresponds to the p-c matrix in Fig. \ref{fig_construction_B_RE_proof_3}.}
        	\label{fig_construction_B_RE_proof_7}
        \end{figure}
        
        From (i), we have that none of the columns from $[b:\tau]$ can be a part of \al. Thus we have $\al\subseteq[\delta:b-1]\cup[\tau+1:\tau+\delta]$. Using observation (ii), we claim that \al\ cannot contain coordinates from both $[\delta:b-1]\cup[\tau+1:\tau+a-1]$ and $[\tau+a:\tau+\delta]$. This is so because if not, we will have a linear dependency amongst $\leq (a-1)$ columns of the MDS matrix $H(S',[\delta:\tau+a-1]\setminus[b:\tau])$, which clearly is not possible. Thus we have either:
        \beq\label{eq:construction_b_re2_al_condition_1}
        \al\subseteq [\delta:\tau+a-1]\setminus[b:\tau]
        \eeq
        
        or
        
        \beq\label{eq:construction_b_re2_al_condition_2}
        \al\subseteq [\tau+a:\tau+\delta].
        \eeq
        However observations (iii) and (iv), respectively, imply that \eqref{eq:construction_b_re2_al_condition_1} and \eqref{eq:construction_b_re2_al_condition_2} are not possible. This contradicts our assumption on the existence of \al\ and property R2 follows.
        \eit
        \eit

		\bibliographystyle{IEEEtran}
		\bibliography{streaming_codes,deep_refs}
	\end{document}